\algrenewcommand\textproc{}
\newtheorem{theorem}{Theorem}
\newtheorem{definition}[theorem]{Definition}
\newtheorem{lemma}[theorem]{Lemma}
\newtheorem{corollary}[theorem]{Corollary}
\newtheorem{remark}[theorem]{Remark}
\newcommand{\req}[1]{(\ref{#1})}
\newcommand{\StrongInput}{strong}
\newcommand{\WeakInput}{weak}
\newcommand{\StrongPart}{strong}
\newcommand{\WeakPart}{weak}
\newcommand{\GeneralProblem}{Hierarchical~Correlation~Clustering}
\newcommand{\NewProblem}{Hierarchical~Cluster~Agreement}
\newcommand{\HCC}{HierCorrClust}
\newcommand{\HCA}{HierClustAgree}
\newcommand{\CC}{CorrClust}
\newcommand{\dist}{\text{dist}}
\newcommand{\calD}{\mathcal{D}}
\newcommand{\calE}{\mathcal{E}}
\newcommand{\calF}{\mathcal{F}}
\newcommand{\calO}{O}
\newcommand{\FOPT}{x^{(*)}}
\def \closeToZero {0.1}
\def \radiusInCluster {\closeToZero}
\def \dLC {0.1}
\def \dPlus {0.3}
\def \dPlusOne {1.3}
\def \outsideL {\frac{1}{6}}
\def \diameterInCluster {\FPeval\calc{clip(trunc(2*\radiusInCluster:6))}\calc}
\def \twoDiameter {\FPeval\calc{clip(trunc(4*\radiusInCluster:6))}\calc}
\def \threeDiameter {\FPeval\calc{clip(trunc(6*\radiusInCluster:6))}\calc}
\def \badOutEdges {0.05}
\def \deltaInLp {\FPeval\calc{clip(trunc(1-\closeToZero:6))}\calc}
\def \TNI {top-non-intersecting descendants}
\newcommand{\lpd}[1]{LP-distance at level $#1$}
\newcommand{\B}[2]{B_{<#1}^{(#2)}}
\newcommand{\DelPlus}[1]{\Delta^+(#1)}
\title{Fitting Distances by Tree Metrics Minimizing the Total Error within a Constant Factor}
\author{
  Vincent Cohen-Addad\thanks{Google Research, Switzerland. Email: \texttt{\{cohenaddad,nikosp\}@google.com}}
  \and
  Debarati Das\thanks{Department of Computer Science, University of Copenhagen. Research of this author is supported by VILLUM Investigator Grant 16582, Basic Algorithms Research Copenhagen (BARC), Denmark. Email: \texttt{\{debaratix710,mikkel2thorup\}@gmail.com, kipouridis@di.ku.dk}\includegraphics[height=0.02\textwidth, width=0.035\textwidth]{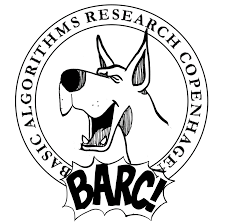}}
  \and
  Evangelos Kipouridis\footnotemark[2] \thanks{Evangelos Kipouridis has received funding from the European Union’s Horizon 2020 research and innovation programme under the Marie Skłodowska-Curie grant agreement No 801199.\includegraphics[height=0.02\textwidth, width=0.035\textwidth]{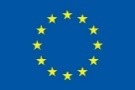}}
  \and
  Nikos Parotsidis\footnotemark[1]
  \and
  Mikkel Thorup\footnotemark[2]
}
\date{}
\begin{document}

\maketitle
\thispagestyle{empty}
\begin{abstract}
\begin{sloppypar}
We consider the numerical taxonomy problem of fitting a positive distance function ${\calD:{S\choose 2}\rightarrow \mathbb R_{>0}}$ by a tree metric. We want a tree $T$ with positive edge weights and including $S$ among the vertices so that their distances in $T$ match
those in $\calD$. A nice application is in evolutionary biology
where the tree $T$ aims to approximate the branching process leading
to the observed distances in $\calD$ [Cavalli-Sforza and Edwards 1967].
We consider the total error, that is the sum of distance errors over
all pairs of points.  We present a deterministic polynomial time algorithm
minimizing the total error within a constant factor. We can do this
both for general trees, and for the special case of ultrametrics with
a root having the same distance to all vertices in $S$.
\end{sloppypar}

The problems are APX-hard, so a constant factor is the best we can hope for in polynomial time.
The best previous approximation factor was
$\calO((\log n)(\log \log n))$ by Ailon and Charikar [2005] who wrote ``Determining whether an
$\calO(1)$ approximation can be obtained is a fascinating question".
\end{abstract}

\clearpage
\pagenumbering{arabic}
\section{Introduction}
\label{sec:intro}

Taxonomy or hierarchical classification of species goes back at
least to discussions between Aristotle and his teacher Plato
\footnote{https://iep.utm.edu/classifi/, Internet Encyclopedia of Philosophy}
($\sim$350BC)
while modern taxonomy is often attributed to
Linnaeus
\footnote{https://britannica.com/science/taxonomy/The-Linnaean-system} ($\sim$1750).
The discussions of evolution in the nineteenth century, clarified the
notion of evolutionary trees, or phylogenies, and the notion
that species were close due
to a common past ancestor. Such evolutionary trees
are seen in the works of Hitchcock
\footnote{https://en.wikipedia.org/wiki/Edward\_Hitchcock}
(1840)
and Darwin
\footnote{https://en.wikipedia.org/wiki/On\_the\_Origin\_of\_Species} (1859).
Viewing the descendants of each node as a class, the
evolutionary tree induces a hierarchical classification.

In the 1960s came the interest in computing evolutionary trees based
on present data, the so-called numerical taxonomy problem \cite{4-cavalli67, sneath1962numerical,7-Sneath-Numerical-1963}.  Our focus is on the following simple model by Cavalli-Sforza
and Edwards from 1967 \cite{4-cavalli67}. In an evolutionary tree,
let the edge between the child and
its parent be weighted by the evolutionary distance between them.
Then the evolutionary distance between any two species is the sum of
weights on the unique simple path between them.  We note that the selection of the root
plays no role for the distances. What we are saying is that
any tree with edge weights, induces distances between its nodes; a
so-called tree metric assuming that all weights are positive.

We now have the converse reconstruction problem of numerical taxonomy \cite{4-cavalli67, sneath1962numerical, 7-Sneath-Numerical-1963}:
given a set $S$ of species with measured distances between them, find a
tree metric matching those observed distances on $S$. Thus we are looking for an
edge-weighted tree $T$ which includes $S$ among its nodes with the right
distances between them. Importantly, $T$ may have nodes not in $S$
representing ancestors explaining proximity between different species.
The better the tree metric $T$ matches the measured distances on $S$, the better the tree $T$ explains these measured distances.

\paragraph{Other applications}
This very basic reconstruction problem also arises in various other contexts.
First, concerning the evolutionary model, it may be considered too simplistic to
just add up distances along the paths in the tree. Some changes from parent
to child could be reverted for a grandchild. Biologists \cite{CAVENDER1978271, Far72}
have suggested stochastic models for probabilistic changes that also have a chance
of being reverted further down. However, Farach and Kannan \cite{DBLP:journals/jacm/FarachK99} have shown that applying
logarithms appropriately, we can convert estimated distances into
some other distances for which we find a matching tree metric that we can then
convert back into a maximum likelihood stochastic tree.  
The basic point is that finding tree metrics can be used as powerful tool to invert evolution even in 
cases where tree metric model does not apply directly.

Obviously, the numerical taxonomy problem is equally relevant to other historical sciences
with an evolutionary branching process leading to evolutionary distances, e.g., historical
linguistics.

More generally, if we can
approximate distances with a tree metric, then the tree of this metric
provides a very compact and convenient representation that is much easier to
navigate than a general distance function. Picking
any root, the tree induces a distance based hierarchical classification, 
and referring to the discussions between Plato and Aristotle's, humans have been 
interested in such hierarchical classifications since ancient time. 

It is not just humans but also computers that understand trees and tree metrics much better than general metrics. Many questions that are NP-hard to answer in general
can be answered very efficiently based on trees (see, e.g., Chapter 10.2 ``Solving NP-Hard Problems on Trees" in the text book \cite{AlgorithmDesign}).

Computing ``good"
tree representations is nowadays also a major tool to learn from data.
In this context, we are sometimes interested in a special kind
of tree metrics, called \emph{ultrametrics}, defined by rooted trees whose sets of leaves is $S$ and where the
leaf-to-root distance is the same for all points in $S$.
Equivalently, an ultrametric is a metric so that for any three points $i,j,k$, the distance from $i$ to $j$ is no bigger than the maximum of the distance from $i$ to $j$ and the distance from $j$ to $k$\ 
\footnote{https://en.wikipedia.org/wiki/Ultrametric\_space.}.

An ultrametric can be seen as modeling
evolution that is linear over time. This may be not the case in biology where the speed of evolution
depends on the local evolutionary pressure for example. However, ultrametrics are key objects in machine
learning and data analysis, see e.g.: \cite{carlsson2010characterization}, and there are various algorithms for embedding arbitrary 
metrics into ultrametrics such as the popular ``linkage" algorithms (single, complete or average linkage), see also~\cite{Cohen-AddadKMM19, MoseleyW17}.

\subsection{Tree fitting (Numerical Taxonomy Problem)}
Typically our measured distances do not have an exact fit with any tree metric. We then have
the following generic optimization problem for any $L_p$-norm:\smallskip

\noindent{\bf Problem:} $L_p$-fitting tree (ultra) metrics\smallskip\\
\noindent{\bf Input:} A set $S$ with a distance
function $\calD:{S\choose 2}\rightarrow\mathbb R_{>0}$.  
 \footnote{${S\choose k}$ denotes all subsets of $S$ of size $k$.} \smallskip\\
\noindent{\bf Desired Output:} A tree metric (or ultrametric) $T$ that spans $S$ and fits $\calD$ in the sense of minimizing the $L_p$-norm
\begin{equation}\label{eq:objective}
\|T-\calD\|_p=\left(\sum_{\{i,j\}\in {S \choose 2}} |\dist_T(i,j)-\calD(i,j)|^p\right)^{1/p}.
\end{equation}
 Cavalli-Sforza and Edwards~\cite{4-cavalli67} introduced this numerical taxonomy problem for both tree and ultrametrics in the $L_2$-norm in 1967. Farris suggested using $L_1$-norm in 1972 \cite[p. 662]{Far72}.

\subsection{Our result}
In this paper we focus on the $L_1$-norm, that is, the total sum of errors.
The problem is APX-hard, both for tree metrics and ultrametrics (see Section~\ref{sec:apx} and also
\cite{DBLP:journals/siamcomp/AilonC11}), so a constant approximation factor is the best we can hope for in polynomial time.
The best previous approximation factor for both tree metrics and ultrametrics was
$\calO((\log n)(\log \log n))$ by Ailon and Charikar \cite{DBLP:journals/siamcomp/AilonC11}.

In this paper we present a deterministic polynomial time constant factor
approximation both for tree metrics and for ultrametrics, that is, in both cases, we
can find a tree $T$ minimizing the $L_1$-norm within a constant factor of the best possible.

Thus, we will prove the following theorem.

\begin{theorem} \label{thm:main-result}
The $L_1$-fitting tree metrics problem can be solved in deterministic polynomial time within a constant approximation factor. The same holds for the $L_1$-fitting ultrametrics problem.
\end{theorem}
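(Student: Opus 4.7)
My plan is to reduce the theorem to a single combinatorial clustering problem at the cost of constant factors, and then design a constant-factor approximation for that clustering problem. The first reduction is classical: by Agarwala, Bafna, Farach, Paterson, and Thorup, an $\alpha$-approximation for $L_1$-ultrametric fitting yields a $3\alpha$-approximation for $L_1$-tree-metric fitting. It therefore suffices to approximate $L_1$-ultrametric fitting within a constant factor, which simultaneously settles both halves of the theorem.

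Next, following Ailon--Charikar, I would discretize the distance scales geometrically into $\calO(\log n)$ levels. An ultrametric on $S$ corresponds to a laminar sequence of clusterings, one per level, that refine one another; each pair $\{i,j\}$ contributes, up to constant factors, a cost equal to the level at which the ultrametric first separates it. This converts $L_1$-ultrametric fitting, within a constant factor, into \GeneralProblem{} (\HCC{}): produce a laminar family of clusterings minimizing the total ``disagreement'' cost across all pairs and all levels.

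The core of the work is a constant-factor approximation for \HCC{}. The previous $\calO((\log n)(\log\log n))$ bound of Ailon--Charikar arises because they round a level-by-level LP with region growing on an instance that has $\Theta(\log n)$ levels and each rounding alone already loses a polylogarithmic factor. To avoid both losses, I would factor \HCC{} through the cleaner intermediate problem \NewProblem{} (\HCA{}), and then solve \HCA{} by a top-down combinatorial scheme: at the coarsest level run a constant-factor correlation-clustering algorithm (for example a pivot-style algorithm), then recurse inside each resulting cluster on the induced sub-instance. To bound the total error, I would charge any disagreement the algorithm produces at a given level against triangle-like certificates that OPT must pay at that same level; the key structural fact is that separating a pair at a coarse level forces OPT to pay for that pair at every finer level as well, which is what enables a global $\calO(1)\cdot \mathrm{OPT}$ bound rather than an $\calO(\log n)$ bound.

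The hardest part will be the cross-level charging in the final step. I need to argue that the local constant-factor error incurred at each of the $\calO(\log n)$ levels aggregates to a global constant factor, and that no pair is double-counted across levels. My plan is a potential-based analysis that, for each pair, tracks the first level at which OPT separates it and absorbs the pivot or dual-fitting charges from that level downward. Provided triangle-type inequalities alone suffice to certify an $\Omega(1)$ fraction of OPT at every level---as they do for ordinary correlation clustering---the nested recursion telescopes to a constant and yields Theorem~\ref{thm:main-result}.
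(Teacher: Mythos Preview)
Your reduction chain through ultrametrics and \HCC{} matches the paper, and you correctly identify \HCA{} as the pivotal intermediate problem. However, two steps diverge from the paper in ways that leave genuine gaps.

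First, the geometric discretization to $\calO(\log n)$ levels is not what the paper does, and it is not benign for $L_1$. If $\calD$ is already an ultrametric the optimum is zero, yet rounding $\calD$ to powers of two generically destroys the ultrametric property and makes the rounded optimum nonzero; no constant-factor guarantee on the rounded instance transfers back. The paper instead uses the lemma of Harb--Kannan--McGregor that some optimal ultrametric uses only the distinct values already present in $\calD$, so the \HCC{} instance has one level per distinct input distance (possibly $\Theta(n^2)$ many), with weights $\delta^{(t)}=D^{(t+1)}-D^{(t)}$, and the reduction is exact.

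Second, and more seriously, your plan for \HCA{} is a top-down correlation-clustering recursion with a hoped-for telescoping charge. This is essentially the pivot scheme Ailon--Charikar already analyzed, and it yields an $(M{+}2)$-factor where $M$ is the number of levels; the obstacle is precisely that per-level triangle charges do \emph{not} telescope across levels. Your stated structural fact (``separating a pair at a coarse level forces OPT to pay for it at every finer level'') is not correct as written: hierarchy forces \emph{your} separation down the levels, not OPT's, and OPT pays at a finer level only when the input edge set there demands the pair be joined. The paper's route is entirely different: after producing the per-level partitions $Q^{(t)}$ (your \HCC{}$\to$\HCA{} step), it solves the \HCA{} LP once, uses the fractional distances to \emph{clean} each $Q^{(t)}$-cluster independently (keeping a cluster only if almost all its points are mutually close and far from outsiders), and then builds the hierarchy \emph{bottom-up} by merging surviving clusters whose cores intersect. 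The LP-cleaning gives a ``hierarchy-friendly'' property that bounds $|\Delta^+(u)|$ and $|\Delta^-(u)|$ by constant fractions of $|C(u)|$ at every node, and these bounds are what let the per-level LP lower bounds match the algorithm's cost level by level, with no cross-level accumulation. Your proposal is missing this mechanism; without it the $\log n$ barrier is not broken.
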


\subsection{History of $L_p$ tree fitting}
Since Cavalli-Sforza and Edwards introduced the tree fitting problem, the problem has collected an extensive literature. In 1977 \cite{10-WATERMAN1977199}, it was
shown that if there is a tree metric coinciding exactly with $\calD$, it
is unique and it can be found in time linear in the input size, i.e., $\calO(|S|^2)$ time.
The same then also holds trivially for ultrametrics. Unfortunately
there is typically no tree metric coinciding
exactly with $\calD$, and in 1987 \cite{5-DAY1987461} it was shown that for $L_1$ and $L_2$ the
numerical taxonomy problem is NP-hard, both in the tree metric and 
the ultrametric cases. The problems are in fact APX-hard (see Section~\ref{sec:apx}), which rules
out the possibility of a polynomial-time approximation scheme. Thus, a constant
factor, like ours for $L_1$, is the best one can hope for from a complexity perspective for these
problems.

For the $L_\infty$ numerical taxonomy problem, there was much more progress. In 1993 \cite{6-DBLP:journals/algorithmica/FarachKW95} it was
shown that for the ultrametric case an optimal solution can be found in
time proportional to the number of input distance
pairs (i.e.: the number of entries in $S$). More recently, it was shown that when the points are 
embedded into $\mathbb{R}^d$ and the distances are given by the pairwise Euclidean distances, the problem can be approximated in subquadratic 
time~\cite{Cohen-AddadSL20,Cohen-AddadJL21}. For the general trees case (still in the $L_\infty$-norm objective), 
\cite{DBLP:journals/siamcomp/AgarwalaBFPT99} gave
an $\calO(|S|^2)$ algorithm that produces a constant factor approximation and
proved that the problem is APX-hard (unlike the ultrametric case).

The technical result from \cite{DBLP:journals/siamcomp/AgarwalaBFPT99} was a general reduction from
general tree metrics to ultrametrics. It modifies the input distance matrix
and asks for fitting this new input with an ultrametric that can later be converted to a tree metric
for the original distance matrix.  The result states that for any $p$, if we can  minimize the restricted ultrametric $L_p$ error within a factor $\alpha$ in polynomial-time,
then there is a polynomial-time algorithm that minimizes the tree metric $L_p$ error within a factor $3\alpha$. The reduction from
\cite{DBLP:journals/siamcomp/AgarwalaBFPT99} imposes a certain restriction on the ultrametric, but
the restriction is not problematic and in Section \ref{sec:tree-3ultra}, we will even argue that the restriction can be completely eliminated with a slightly modified reduction.
With $n$ species, the reduction from tree metrics to ultrametrics can be performed in time $\calO(n^2)$. Applying this to the optimal ultrametric algorithm from \cite{6-DBLP:journals/algorithmica/FarachKW95}
for the $L_\infty$-norm objective yielded a factor $3$ for general metrics for
the $L_\infty$-norm objective. The generic impact is that \emph{for any $L_p$, later
  algorithms only had to focus on the ultrametric case to immediately get
  results for the often more important tree metrics case, up to losing a factor
  3 in the approximation guarantee}. Indeed,
  the technical result of this paper is a constant factor approximation for ultrametric.
Thus, when
  it comes to approximation factors, we have
\begin{equation*}
\text{TreeMetric}\leq 3\cdot \text{UltraMetric}
\end{equation*}

For $L_p$ norms with constant $p$, 
the developments have been much slower. Ma et al. \cite{15ac-DBLP:journals/jco/MaWZ99}  considered the problem of finding the best
$L_p$ fit by an ultrametric where distances in the ultrametric
are no smaller than the input distances. For this problem, they
obtained an $\calO(n^{1/p})$ approximation.

Later, Dhamdhere \cite{6ac-DBLP:conf/approx/Dhamdhere04} considered the problem of finding a line metric to minimize additive distortion from the given data
(measured by the $L_1$ norm) and obtained an $\calO(\log n)$ approximation. In fact, his motivation for considering this
problem was to develop techniques that might be useful for finding the closest tree metric with distance measured by the
$L_1$ norm. Harp, Kannan and McGregor \cite{12ac-DBLP:conf/approx/HarbKM05} developed a factor
$\calO(\min\{n,k\log n\}^{1/p})$ approximation for the closest ultrametric under the $L_p$ norm where $k$ is the number
of distinct distances in the input\footnote{The authors erroneously claim that they get the same approximation for the closest tree metric problem. However, the known reduction may create $\omega(k)$ distinct distances.}.

The best bounds known for the ultrametric variant of the problem are due to Ailon and Charikar \cite{DBLP:journals/siamcomp/AilonC11}. 
They first focus on ultrametrics in $L_1$ and show
that if the distance matrix has only $k$ distinct distances, then it is possible to approximate the $L_1$ error within a factor $k+2$. Next they obtain an LP-based $\calO((\log n)(\log\log n))$ approximation for arbitrary distances matrices. Finally
they sketch how it can be generalized to an $\calO(((\log n)(\log\log n))^{1/p})$ approximation of the $L_p$ error for any
$p$. Using the reduction from \cite{DBLP:journals/siamcomp/AgarwalaBFPT99}, they also get an  $\calO(((\log n)(\log\log n))^{1/p})$ approximation for tree metrics under the $L_p$-norm objective. The 
$\calO(((\log n)(\log\log n))^{1/p})$ approximation 
comes from an $\calO((\log n)(\log\log n))$ approximation 
of the $p$'th moment $F_p$:
\begin{equation}\label{eq:moment}
\|T-\calD\|_p^p=\left(\sum_{\{i,j\}\in {S \choose 2}} |\dist_T(i,j)-\calD(i,j)|^p\right).
\end{equation}
Technically, Ailon and Charikar \cite{DBLP:journals/siamcomp/AilonC11} present a
simple LP relaxation for $L_1$ ultrametric fitting---an LP that will also be used in our paper. They get their
$\calO((\log n)(\log\log n))$ approximation using an LP rounding akin
to the classic $\calO(\log n)$ rounding of Leighton and
Rao for multicut 
\cite{DBLP:journals/jacm/LeightonR99}. The challenge
is to generalize the approach to deal with the hierarchical issues associated with ultrametric and show that this can be done paying only an extra factor $\calO(\log\log n)$ in the approximation factor.
Next they show that their LP formulation and rounding is general enough to handle different variants, including other $L_p$ norms as mentioned above, but
also they can handle the \emph{weighted case}, where for each pair of species $i,j$, 
the error contribution to the overall error is multiplied by a value $w_{ij}$. However, this weighted problem captures the multicut problem
(and the weighted minimization correlation clustering problem) \cite{DBLP:journals/siamcomp/AilonC11}. Since the multicut cannot be approximated within a 
constant factor assuming the unique games conjecture \cite{DBLP:journals/cc/ChawlaKKRS06} and the best known approximation bound remains $\calO(\log n)$, it is beyond reach of current
techniques to do much better in these more general settings. 

Ailon and Charikar \cite{DBLP:journals/siamcomp/AilonC11} conclude that
``Determining
whether an $\calO(1)$ approximation can be obtained is a fascinating question. The LP formulation used in our [their] work could eventually lead to such a result''.
For their main LP formulation for the (unweighted) $L_1$ ultrametric fitting,the integrality gap was only known to be somewhere between $2$ and $\calO((\log n) (\log \log n))$. 
To  break the $\log n$-barrier we must come up with a radically different way of rounding this LP and free ourselves from the multicut-inspired approach.

For $L_1$ ultrametric fitting, we 
give the first constant factor approximation, and we show this can be obtained by rounding the LP proposed
by Ailon and Charikar, thus demonstrating a constant integrality gap for their LP. Our solution breaks the $\log n$ barrier using the special combinatorial structure of the
$L_1$ problem.

Stepping a bit back, having different solutions for different norms should not come as a surprise. As an analogue, take the generic problem of placing $k$ facilities in such a way that each of $n$ cities is close to the nearest facility. Minimizing the vector of distances in the $L_1$ norm is called the $k$-median problem. In the $L_2$ norm it is called the $k$-means problem, and in the $L_\infty$-norm is called the $k$-center
    problems. Indeed, while the complexity of the $k$-center problem has been settled in the mid-1980s thanks to Gonzalez' algorithm~\cite{gonzalez1985clustering}, it has remained
    a major open problem for the next 15 years to obtain constant factor approximation for the $k$-median and the $k$-means problems. Similarly, our 
    understanding of the $k$-means  problems ($L_2$-objective) remains poorer than our understanding of the $k$-median problem, and the problem is in fact 
    provably harder (no better than $1+8/e$-approximation algorithm~\cite{guha1999greedy} while $k$-median can be approximated within a factor $2.675$ at the moment~\cite{byrka2014improved}).
    
    For our tree fitting problem, the $L_\infty$ norm has been understood since the 1990s, and our result shows that 
    the $L_1$ norm admits a constant factor approximation algorithm. The current
    status of affairs for tree and ultrametrics is summarized in  Table \ref{tab:Lp}. 
    The status for $L_p$ tree fitting is that we have good constant factor approximation if we want to minimize the total error $L_1$ or the maximal error $L_\infty$. For all other
    $L_p$ norms, we only have the much weaker but general $O(((\log n)(\log\log n))^{1/p})$ approximation from \cite{DBLP:journals/siamcomp/AilonC11}. 
    In particular, we do not know if anything better is possible with $L_2$. The difference is so big that even if we are in a situation where we would normally prefer an $L_2$ approximation, our much better approximation guarantee with $L_1$ might be preferable.

\begin{table}
\begin{center}
\begin{tabular}{|l|c|c|c|}\hline
Norm     & $L_1$ & $L_p$, $p<\infty$ & $L_\infty$\\\hline
Treemetric     & $\Theta(1)$& $\calO(((\log n)( \log \log n))^{1/p})$&$\Theta(1)$\\
Ultrametric     & $\Theta(1)$& $\calO(((\log n)(\log \log n))^{1/p})$ & 1\\\hline
\end{tabular}
\end{center}
\caption{Tree fitting approximation factors.}\label{tab:Lp}
\end{table}

\subsection{Other related work}
\paragraph{Computational Biology.}
Researchers have also studied reconstruction of phylogenies under stochastic models of evolution 
(see Farach and Kannan~\cite{DBLP:journals/jacm/FarachK99} 
or Mossel
et al. \cite{16ac-DBLP:conf/stoc/MosselR05} and the references therein, see also Henzinger et al.~\cite{DBLP:journals/algorithmica/HenzingerKW99}).

Finally, related to our hierarchical correlation clustering problem is the 
hierarchical clustering problem 
introduced by Dasgupta~\cite{dasgupta15} 
where the goal
is, given a similarity matrix, to build a hierarchical clustering tree where
the more similar two points are, the lower in the tree they are separated (formally, a pair $(u,v)$ induces a cost of similarity($u,v$) times the size of the minimal subtree containing both $u$ and $v$, the goal
is to minimize the sum of the costs of the pairs).
This has received a lot of attention in the last few years (\cite{AlonAV20, CC17, CCN19, CCN18, ChatziafratisYL20, Cohen-AddadKM17, Cohen-AddadKMM19, MW17, RP16}, see also~\cite{AbboudCH19, Balcan08, ChamiGCR20, cochez2015twister}), but differs from our settings since the resulting tree may not induce a metric space.

\paragraph{Metric Embeddings.}
There is a large body of work of metric embedding problems. 
For example, the metric violation distance problem asks to embed an arbitrary
distance matrix into a metric space while minimizing the $L_0$-objective (i.e.:
minimizing the number of distances that are not preserved in the metric space). 
The problem is considerably harder and is only known to admit an
$\calO(\text{OPT}^{1/3})$-approximation algorithm~\cite{FanGRSB20, FanRB18, GilbertJ17} while
no better than a 2 hardness of approximation is known.
More practical results on this line of work includes~\cite{DBLP:conf/nips/SonthaliaG20} and~\cite{DBLP:conf/allerton/GilbertS18}.
Sidiropoulos et al~\cite{sidiropoulos2017metric} also considered
the problem of embedding into metric, ultrametric, etc. while minimizing the 
total number of outlier points.

There is also a rich literature on metric embedding problems where the measure of interest is the multiplicative distortion,
and the goal of the problem is to approximate the absolute distortion of the metric space  (as opposed to approximating the optimal embedding of the metric space). Several
such problems have been studied in the context of approximating metric spaces via tree metrics (e.g. \cite{DBLP:conf/focs/Bartal96,8ac-DBLP:journals/jcss/FakcharoenpholRT04}).
The objective of these works is very different since they are focused on the absolute expected multiplicative distortion over all input metrics while we aim at approximating the \emph{optimal} expected additive distortion for each individual input metric. 

While these techniques have been very successful for designing approximation algorithms for various problems in a variety of contexts, they are not aimed at numerical
taxonomy. Their goal is to do something for general metrics. However, for our tree-fitting problem, the idea is that the ground-truth is a tree, e.g., a phylogeny, and that the distances measured, despite noise and imperfection of the model, are 
close to the metric of the true tree. To recover an approximation to the true tree, we
therefore seek a tree that compares well against the best possible fit of a tree metric.


\subsection{Techniques}

We will now discuss the main idea of
our algorithm. Our solution will move through
several combinatorial problems that code
different aspects of the $L_1$-fitting of ultrametrics, but which do not generalize nicely to other norms.

Our result follows from a sequence of constant-factor-approximation 
reductions between problems. To achieve our final goal, we introduce
several new problems that have a key role in the sequence of reductions.
Some of the reductions and approximation bounds have already been extensively studied (e.g.: Correlation Clustering). A roadmap of this sequence of results is given in Figure~\ref{fig:roadmap}. 

\begin{figure}
\begin{center}
  \begin{tabular}{|l|l|} \hline
    \multicolumn{1}{|c|}{\bf Problem}                             & \multicolumn{1}{|c|}{\bf Introduced in} \\ \hline
    TreeMetric                          & Beginning of Section \ref{sec:intro} \\ \hline
    UltraMetric                         & Beginning of Section \ref{sec:intro} \\ \hline
    Correlation Clustering (\CC)              & Section \ref{sec:CorrClust} \\ \hline
    Hierarchical Correlation Clustering (\HCC) & Section \ref{sec:HierCorrClust} \\ \hline
    Hierarchical Cluster Agreement (\HCA)      & Section \ref{sec:HierCorrAgree} \\ \hline
    
  \end{tabular}
  ~\\~\\~\\
 
  \begin{align}
  \text{TreeMetric} &\le (3+o(1)) \cdot \text{UltraMetric} \tag{A}\label{eq:tree-3ultra}\\
    \text{UltraMetric}  &\le \text{\HCC} \tag{B} \label{eq:UltraMetric-HierCorrClust}\\
    \text{\HCC} &< (\text{\CC}+1) (\text{\HCA}+1) \tag{C} \label{eq:reduction_HCC_to_HCA}\\
    \text{\CC} &= \calO(1) \tag{D} \label{eq:corrclust} \\
    \text{\HCA} &= \calO(1) \tag{E} \label{eq:HCA}
  \end{align}
  
{\bf Approximation factors.} Abbreviated problem names used as approximation factors.
\end{center}

\begin{algorithmic}[1]

\Procedure{TreeMetric}{$S,\calD$} \Comment{See Section~\ref{sec:tree-3ultra}}
    \State Reduction to UltraMetric based on \cite{DBLP:journals/siamcomp/AgarwalaBFPT99}
\EndProcedure
\\
\Procedure{UltraMetric}{$S,\calD$} \Comment{See Section~\ref{ssec:ultrametric_to_hcc}}
    \State Reduction to \HCC{} based on \cite{DBLP:journals/siamcomp/AilonC11, 12ac-DBLP:conf/approx/HarbKM05}
\EndProcedure
\\
\Procedure{\HCC}{$S,E^{(*)},\delta^{(*)}$} \Comment{NEW. See Section~\ref{sec:hcc_to_hca}}
    \State \textbf{for} $t\in [\ell]$ \textbf{do} $Q^{(t)} \gets$ \CC$(E^{(t)})$
    \State \Return \HCA($S, Q^{(*)}, \delta^{(*)})$
\EndProcedure
\\
\Procedure{\CC}{$S,E$} \Comment{See Section~\ref{sec:CorrClust}}
    \State Use Algorithm from \cite{DBLP:journals/ml/BansalBC04}
\EndProcedure
\\
\Procedure{\HCA}{$S,Q^{(*)},\delta^{(*)}$} \Comment{NEW. See Section~\ref{sec:Hierarchical_Cluster_Agreement}}
    \State $x^{(*)}\gets$ Solve(LP-relaxation($S,Q^{(*)},\delta^{(*)}$))
    \State $L^{(*)} \gets$ LP-Cleaning$(S, Q^{(*)}, x^{(*)})$
    \State \Return Derive-Hierarchy$(S, L^{(*)})$
\EndProcedure

\end{algorithmic}
\caption{Roadmap leading to our result for $L_1$-fitting tree metrics.}
\label{fig:roadmap}
\end{figure}

\subsubsection{Correlation Clustering}\label{sec:CorrClust}
Our algorithms will use a subroutine for what is known as the unweighted minimizing-disagreements correlation clustering problem on complete graphs \cite{DBLP:journals/ml/BansalBC04}. We simply refer to this problem as Correlation Clustering throughout the paper.

First, for any family $P$ of disjoint subsets of 
$S$, let
\[\calE(P)=\bigcup_{T\in P}{T\choose 2}\]
Thus $\calE(P)$ represents the edge sets over
an isolated clique over each set $T$ in $P$. Often $P$ will be a partition of $S$, that is, $\bigcup P=S$.

The  \emph{correlation clustering} takes as
input an edge set $E\subseteq {S\choose 2}$ and seeks a partition $P$ minimizing 
\[|E\Delta\calE(P)|\textnormal,\]
where $\Delta$ denotes symmetric difference. It is well-known that correlation clustering is equivalent to ultrametric fitting with two distances (see, e.g., \cite{12ac-DBLP:conf/approx/HarbKM05}).

A randomized polynomial time $2.06+\epsilon$ factor approximation from \cite{DBLP:conf/stoc/ChawlaMSY15} (see also~\cite{DBLP:journals/jacm/AilonCN08}) and a 2.5 deterministic approximation
algorithm~\cite{DBLP:journals/mor/ZuylenW09} 
are known. We shall use
this as a subroutine with approximation factor
\begin{equation*}
    \text{\CC}=\calO(1) \tag*{(\ref{eq:corrclust}) from Figure~\ref{fig:roadmap}}
\end{equation*}
We note that Ailon and Charikar, who presented
the previous best $\calO((\log n)(\log\log n))$ approximation for tree metrics and ultrametrics at
FOCS'05 \cite{DBLP:journals/siamcomp/AilonC11} had presented a 2.5 approximation for
correlation clustering at the preceding STOC'05 with
Newman \cite{DBLP:journals/jacm/AilonCN08}. In fact, inspired by this connection they proposed in \cite{DBLP:journals/siamcomp/AilonC11} a pivot-based $(M+2)$-approximation
algorithm for the $L_1$ ultrametric problem where $M$ is the number of distinct input distances.

\subsubsection{Hierarchical correlation clustering}\label{sec:HierCorrClust}
We are going to work with a generalization of
the problem of $L_1$-fitting ultrametric which is 
implicit in previous work \cite{DBLP:journals/siamcomp/AilonC11, 12ac-DBLP:conf/approx/HarbKM05}, but where we will exploit
the generality in new interesting ways. \smallskip\\
\noindent{\bf Problem} Hierarchical Correlation Clustering. \\
\noindent{\bf Input} The input is $\ell$ weights $\delta^{(1)},\ldots,\delta^{(\ell)}\in \mathbb R_{>0}$ and $\ell$
edge sets $E^{(1)},\ldots,E^{(\ell)}\subseteq{S\choose 2}$.\\
\noindent{\bf Desired output} $\ell$ partitions $P^{(1)},\ldots,P^{(\ell)}$ of $S$ that are 
hierarchical in the sense that $P^{(t)}$ subdivides $P^{(t+1)}$ and such that we minimize
\begin{equation}\label{eq:hier-corr}
\sum_{t=1}^\ell \delta^{(t)} |E^{(t)}\, \Delta\, \calE(P^{(t)})|
\end{equation}
Thus we are having a combination of $\ell$ correlation
clustering problems where we want the output partitions to form a hierarchy, and where the objective is
to minimize a weighted sum of the costs for
each level problem.

We shall review the reduction from $L_1$-fitting of
ultrametrics to hierarchical correlation clustering in Section \ref{sec:UltraMetric-HierCorrClust}. The instances we get from ultrametrics will always
satisfy $E^{(1)}\subseteq\cdots\subseteq E^{(\ell)}$, but as we shall see shortly, our new algorithms will reduce to instances where this is not the case, even if the original input is from
an ultrametric.

\subsubsection{Hierarchical cluster agreement} \label{sec:HierCorrAgree}
We will be particularly interested in the following
special case of \GeneralProblem.

\noindent{\bf Problem} Hierarchical Cluster Agreement. \\
\noindent{\bf Input} The input is $\ell$ weights $\delta^{(1)},\ldots,\delta^{(\ell)}\in \mathbb R_{>0}$ and $\ell$
partitions $Q^{(1)},\ldots Q^{(\ell)}$ of $S$.\\
\noindent{\bf Desired output} $\ell$ partitions $P^{(1)},\ldots,P^{(\ell)}$ of $S$ that are 
hierarchical in the sense that $P^{(t)}$ subdivides $P^{(t+1)}$ and such that we minimize
\begin{equation}\label{eq:hier-clust}
\sum_{t=1}^\ell \delta^{(t)} |\calE(Q^{(t)})\, \Delta\, \calE(P^{(t)})|
\end{equation}

This is the special case of hierarchical correlation clustering, where the input edge set $E^{(t)}$
are the disjoint clique edges from $\calE(Q^{(t)})$. The challenge is that the input partitions may disagree in the sense that $Q^{(t)}$ does not subdivide
$Q^{(t+1)}$, or equivalently, $\calE(Q^{(t)})\not\subseteq\calE(Q^{(t+1)})$, so now
we have to find the best hierarchical agreement.

We are not aware of any previous work on hierarchical cluster agreement, but it plays a central role in our hierarchical correlation clustering algorithm, outlined below. 

\subsection{High-level algorithm for hierarchical correlation clustering}
Our main technical contribution in this paper is solving \GeneralProblem.
Reductions from Ultrametric to \GeneralProblem{}, and from general Tree Metric to Ultrametric are already known from \cite{DBLP:journals/siamcomp/AilonC11, 12ac-DBLP:conf/approx/HarbKM05} and \cite{DBLP:journals/siamcomp/AgarwalaBFPT99} respectively. We discuss both reductions in Sections~\ref{sec:UltraMetric-HierCorrClust} and \ref{sec:tree-3ultra}. This
includes removing some restrictions in the reduction
from Tree Metrics to Ultrametrics.

Focusing on \GeneralProblem, our input is the $\ell$ weights $\delta^{(1)},\ldots,\delta^{(\ell)}\in \mathbb R_{>0}$ and $\ell$
edge sets $E^{(1)},\ldots,E^{(\ell)}\subseteq{S\choose 2}$.

\paragraph{Step 1: Solve correlation clustering independently for each level}
The first step in our solution is to solve the correlation clustering problem defined by $E^{(t)}$
for each level $t=1,\ldots,\ell$ independently, 
thus obtaining an intermediate partitioning
$Q_t$. As we mentioned in Section \ref{sec:CorrClust},
this can be done so that $Q_t$ minimizes $|E^{(t)}\Delta E(Q_t)|$ within a constant
factor. 

\paragraph{Step 2: Solve hierarchical cluster agreement}

We now use the $\ell$ weights $\delta^{(1)},\ldots,\delta^{(\ell)}\in \mathbb R_{>0}$ and $\ell$
partitions $Q^{(1)},\ldots,Q^{(\ell)}$ of $S$ as input to the hierarchical cluster agreement problem, which we 
solve using an LP very similar to
the one Ailon and Charikar \cite{DBLP:journals/siamcomp/AilonC11} used to
solve general hierarchical correlation clustering.
However, when applied to the special case of hierarchical cluster agreement, it allows a special simple LP rounding where the LP decides which sets from the input partitions are important to the hierarchy, and which sets can be ignored. Having decided the important sets, it turns out that a very simple combinatorial algorithm can generate 
the hierarchical output partitions $P^{(1)},\ldots, P^{(\ell)}$ bottom-up. The
result is a poly-time constant factor approximation
for hierarchical cluster agreement, that is
\begin{equation*}
    \text{\HCA}=\calO(1)
\end{equation*}
The output partitions $P^{(1)},\ldots, P^{(\ell)}$ are
also returned as output to the original hierarchical
correlation clustering problem.

We now provide a high level overview and the intuition behind the hierarchical cluster agreement algorithm. The algorithm can be broadly divided into two parts.

\textbf{LP cleaning.}
We start by optimally solving the LP based on the weights $\delta^{(1)},\dots,\delta^{(\ell)}$ and partitions $Q^{(1)},\dots,Q^{(\ell)}$.
For each level $t\in \{1,\ldots, \ell\}$, we naturally think of the relevant LP variables as distances, and call them LP distances.
That is because a small value means that the LP wants the corresponding species to be in the same part of the output partition at level $t$, and vice versa, while the LP constraints also enforce the triangle inequality.
The weights $\delta^{(1)},\dots,\delta^{(\ell)}$
impact the optimal LP variables, but will otherwise not be used in the rest of the algorithm.

Using the LP distances, we clean each set in every partition $Q^{(t)}$ independently.
The objective of this step (LP-Cleaning - Algorithm~\ref{algo:lp-cleaning}) is to keep only the sets whose species are very close to each other and far away from the species not in the set. All other sets are disregarded.
Even though this is not a binary decision, it can be thought of as one, since the algorithm may only slightly modify each surviving set.
The property that we can clean each set independently to decide whether it is important or not, without looking at any other sets makes this part of our algorithm quite simple. 

Omitting exact thresholds for simplicity, the algorithm works as follows. 
We process each set $C_I\in Q^{(t)}$ by keeping only those species that are at very small LP distance from at least half of the other species in $C_I$ and at large LP distance to almost all the species outside $C_I$. Let us note that by triangle inequality and the pigeonhole principle, all species left in a set are at relatively small distance from each other.
After this cleaning process, we only keep a set if at least $90\%$ of its species are still intact, and we completely disregard it otherwise. 
The LP cleaning algorithm outputs the sequence $L^{(*)}=(L^{(1)},\dots,L^{(\ell)})$ where $L^{(t)}$ is the family
of surviving cleaned sets from $Q^{(t)}$.

\textbf{Derive hierarchy.} Taking $L^{(*)}$ as input, in the next step the algorithm Derive-Hierarchy (Algorithm~\ref{algo:Derive-hierarchy}) computes a hierarchical partition $P^{(*)}=(P^{(1)},\dots,P^{(\ell)})$ of $S$. 
This algorithm works bottom-up while initializing an auxiliary bottom most level of the hierarchy with $|S|$ sets where each set is a singleton and corresponds to a species of $S$.
Then the algorithm performs $\ell$ iterations where at the $t$-th iteration it processes all the disjoint sets in $L^{(t)}$ and computes  partition $P^{(t)}$ while ensuring that at the end of the iteration $P^{(1)},\dots,P^{(t)}$ are hierarchical.
An interesting feature of our algorithm is that, once created, no further computation processing the upper levels can modify the already created partitions.
Next, we discuss how to compute $P^{(t)}$ given $L^{(t)}$ and all the lower level sets in partitions $P^{(1)},\dots,P^{(t-1)}$.

Consider a set $C_{LP}\in L^{(t)}$.
Now if for each lower level set $C'$, either $C'\cap C_{LP}=\emptyset$ or $C'\subseteq C_{LP}$, then introducing $C_{LP}$ at level $t$ does not violate the hierarchy property.
Otherwise let $C'$ be a lower level set such that $C'\cap C_{LP}\neq \emptyset$ and $C'\not \subseteq C_{LP}$.
Note that we already mentioned, once created, $C'$ is never modified while processing upper level sets.
Thus, to ensure the hierarchy condition, the algorithm can either extend $C_{LP}$ so that it completely covers $C'$ or can discard the common part from $C_{LP}$.

In the process of modifying $C_{LP}$ (where we can add or discard some species from it), at any point we define the core of $C_{LP}$ to be the part that comes from the initial set.
Now to resolve the conflict between $C_{LP}$ and $C'$ we work as follows.
If the core of $C_{LP}$ intersects the core of $C'$ then we extend $C_{LP}$ so that $C'$ becomes a subset of it.
Omitting technical details, there are two main ideas here: first, we ensure that the number of species in $C'$ (resp. $C_{LP})$ that are not part of its core is negligible with respect to the size of $C'$ (resp. $C_{LP})$. Furthermore, since the cores of $C_{LP}, C'$ have at least one common species, using triangular inequality we can claim that any pair of species from the cores of $C', C_{LP}$ also have small LP distance; therefore, nearly all pairs of species in $C_{LP}, C'$ have small LP distance, meaning that the extension of $C_{LP}$ is desirable (i.e. its cost is within a constant factor from the LP cost while it ensures the hierarchy).

Here we want to emphasize the point that because of the LP-cleaning, we can ensure that for any lower level set $C'$ at level $t$ there exists at most one set whose core has an intersection with the core of $C'$. We call this the \emph{hierarchy-friendly} property of the LP cleaned sets.
This property is crucial for consistency, as it ensures that at level $t$ there cannot exist more than one sets that are allowed to contain $C'$ as a subset.

In the other case, where the cores of $C_{LP}$ and $C'$ do not intersect, the algorithm removes $C_{LP}\cap C'$ from $C_{LP}$. The analysis of this part is more technical but follows the same arguments, namely using the aforementioned properties of LP-cleaning along with triangle inequality.

After processing all the sets in $L^{(t)}$, the algorithm naturally combines these processed sets with $P^{(t-1)}$ to generate $P^{(t)}$, thus ensuring that $P^{(1)}, \ldots, P^{(t)}$ are hierarchical.


\paragraph{High-level analysis}
We will prove that the partitions $P^{(1)},\ldots, P^{(\ell)}$ solves the original hierarchical clustering problem within a constant factor.

Using triangle inequality, we are going to show that the switch in Step 1,
from the input edge sets $E^{(1)},\ldots,E^{(\ell)}$ to the
partitions $Q^{(1)},\ldots,Q^{(\ell)}$ costs us
no more than the approximation factor
of correlation clustering used to generate each partition. This then becomes a multiplicative factor on our approximation factor for hierarchical cluster agreement, more specifically,
\begin{align*}
        \text{\HCC}     &< (\text{\CC}+1) (\text{\HCA}+1)
\end{align*}
We will even show that we can work with the LP from
\cite{DBLP:journals/siamcomp/AilonC11} for the original hierarchical correlation clustering problem, and get
a solution demonstrating a constant factor integrality
gap.


\subsection{Organization of the paper}
In Section~\ref{sec:lp-defs} we present the LP formulation and related definitions for \GeneralProblem{}. In Section~\ref{sec:hcc_to_hca} we show how to reduce \GeneralProblem{} to \NewProblem{}. In Section~\ref{sec:Hierarchical_Cluster_Agreement} we present the algorithm for \NewProblem{}, and in Section~\ref{sec:analysis_HCA} we analyze it. In Section~\ref{sec:integralityGap} we show that the LP formulation for \GeneralProblem{} has constant integrality gap. In Section~\ref{sec:UltraMetric-HierCorrClust} we show how $L_p$-fitting ultrametrics reduces to \GeneralProblem{}. In Section~\ref{sec:tree-3ultra} we discuss the reduction from $L_p$-fitting tree metrics to $L_p$-fitting ultrametrics. In Section~\ref{sec:apx} we prove APX-Hardness of $L_1$-fitting ultrametrics and $L_1$-fitting tree metrics. We conclude in Section~\ref{sec:conclusions}.

\section{LP definitions for \GeneralProblem}\label{sec:lp-defs}
In this section we present the IP/LP formulation of \GeneralProblem, implicit in \cite{DBLP:journals/siamcomp/AilonC11, 12ac-DBLP:conf/approx/HarbKM05}. In what follows we use $[n]$ to denote the set $\{1, \ldots, n\}$.

\begin{definition}[IP/LP formulation of \GeneralProblem]
Given is a set $S$, $\ell$ positive numbers $\delta^{(1)}, \ldots, \delta^{(\ell)}$ and edge-sets $E^{(1)}, \ldots, E^{(\ell)} \subseteq {S \choose 2}$. The objective is:

$$min \sum_{t=1}^{\ell}\delta^{(t)} \left( \sum_{\{i,j\} \in E^{(t)}}x_{i,j}^{(t)} + \sum_{\{i,j\} \not \in E^{(t)}}(1-x_{i,j}^{(t)})\right)$$
subject to the constraints
\begin{align}
	x_{i,j}^{(t)} &\le x_{i,k}^{(t)} + x_{j,k}^{(t)} &\forall \{i,j,k\} \in {S \choose 3}, t\in [\ell] \label{ineq:triangle}\\
	x_{i,j}^{(t)} &\ge x_{i,j}^{(t+1)} &\forall \{i,j\} \in {S \choose 2}, t\in [\ell-1] \label{ineq:hier}\\
	x_{i,j}^{(t)} &\in \left\{
\begin{array}{ll}
      ~\{0,1\} &\text{if~IP}\\
      ~[0,1]   &\text{if~LP}\\
\end{array} \right. &\forall \{i,j\} \in {S \choose 2}, t\in [\ell]
\end{align}
\end{definition} 

Concerning the IP, the values $x_{i,j}^{(t)}$ encode the hierarchical partitions, with $x_{i,j}^{(t)}=0$ meaning that $i,j$ are in the same part of the partition at level $t$, and $x_{i,j}^{(t)}=1$ meaning that they are not. 
Inequality~(\ref{ineq:triangle}) ensures that the property of being in the same part of a partition is transitive.
Inequality~(\ref{ineq:hier}) ensures that the partitions are hierarchical.

In the LP, where fractional values are allowed, $x_{i,j}^{(t)}$ is said to be the {\em LP-distance} between $i,j$ at level $t$. If their LP-distance is small, one should think of it as the LP suggesting that $i,j$ should be in the same part of the output partition, while a large LP-distance suggests that they should not. Notice that for any given level $t$, the LP-distances satisfy the triangle inequality, by (\ref{ineq:triangle}).

We also note that the Correlation Clustering problem directly corresponds to the case where $\ell=\delta_1=1$. 

\section{From \GeneralProblem~to \NewProblem~Problem}
\label{sec:hcc_to_hca}
Our main technical contribution is proving the following theorem.

\begin{theorem}\label{thm:hcc}
The \GeneralProblem{} problem can be solved in deterministic polynomial time within a constant approximation factor.
\end{theorem}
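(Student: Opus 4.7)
The plan is to realize the chain of reductions sketched in Figure~\ref{fig:roadmap}: solve a Correlation Clustering instance at each level independently, feed the resulting partitions into a Hierarchical Cluster Agreement subroutine, and analyse the combined cost via a triangle inequality on symmetric differences.

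Given the input $\delta^{(1)},\ldots,\delta^{(\ell)}$ and $E^{(1)},\ldots,E^{(\ell)}$, I would first run the known constant-factor algorithm for Correlation Clustering (bound~(\ref{eq:corrclust})) on each $E^{(t)}$ separately, obtaining a partition $Q^{(t)}$ with $|E^{(t)}\,\Delta\,\calE(Q^{(t)})|\le \text{\CC}\cdot\min_P |E^{(t)}\,\Delta\,\calE(P)|$. These $Q^{(t)}$ need not form a hierarchy, so I would pass $(\delta^{(*)},Q^{(*)})$ to the Hierarchical Cluster Agreement subroutine developed in Section~\ref{sec:Hierarchical_Cluster_Agreement} (bound~(\ref{eq:HCA})), obtaining hierarchical partitions $P^{(1)},\ldots,P^{(\ell)}$ satisfying $\sum_t\delta^{(t)}|\calE(Q^{(t)})\,\Delta\,\calE(P^{(t)})|\le \text{\HCA}\cdot\text{OPT}_{\text{HCA}}$. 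These $P^{(1)},\ldots,P^{(\ell)}$ are returned as the output to the HCC instance; feasibility holds because the HCA subroutine already produces a hierarchy.

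The cost analysis rests on two applications of the triangle inequality $|X\,\Delta\,Z|\le |X\,\Delta\,Y|+|Y\,\Delta\,Z|$ for symmetric differences. Let $P_*^{(1)},\ldots,P_*^{(\ell)}$ denote an optimal HCC solution of value $\text{OPT}_{\text{HCC}}$. Bounding the cost of the produced output,
\begin{align*}
\sum_t \delta^{(t)}|E^{(t)}\,\Delta\,\calE(P^{(t)})|
&\le \sum_t \delta^{(t)}|E^{(t)}\,\Delta\,\calE(Q^{(t)})| \\
&\quad + \sum_t \delta^{(t)}|\calE(Q^{(t)})\,\Delta\,\calE(P^{(t)})|.
\end{align*}
The first sum is at most $\text{\CC}\cdot\text{OPT}_{\text{HCC}}$, because each $P_*^{(t)}$ is itself a feasible (non-hierarchical) partition for correlation clustering at level $t$. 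For the second, $P_*^{(*)}$ is a feasible HCA solution, so $\text{OPT}_{\text{HCA}} \le \sum_t\delta^{(t)}|\calE(Q^{(t)})\,\Delta\,\calE(P_*^{(t)})| \le \sum_t\delta^{(t)}\bigl(|\calE(Q^{(t)})\,\Delta\,E^{(t)}|+|E^{(t)}\,\Delta\,\calE(P_*^{(t)})|\bigr) \le (\text{\CC}+1)\cdot\text{OPT}_{\text{HCC}}$, where the last step again uses that $Q^{(t)}$ is a \CC-approximate correlation clustering of $E^{(t)}$. Combining the two yields the bound~(\ref{eq:reduction_HCC_to_HCA}), which is $\calO(1)$ since both \CC{} and \HCA{} are constants.

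The main obstacle is not the reduction above, which is essentially symmetric-difference bookkeeping, but the design and analysis of the HCA subroutine, i.e.\ establishing~(\ref{eq:HCA}). This is the heart of the paper: one must show that the natural LP of Section~\ref{sec:lp-defs}, when specialised to the case where each $E^{(t)}$ is a disjoint union of cliques, admits a rounding into a hierarchy losing only a constant factor, despite the fact that the input partitions $Q^{(1)},\ldots,Q^{(\ell)}$ may be mutually incompatible. The intended route---LP-Cleaning followed by a bottom-up Derive-Hierarchy procedure, with the \emph{hierarchy-friendly} property guaranteeing that at each level at most one cleaned set competes to contain any given lower-level core---is sketched in the introduction and executed in Sections~\ref{sec:Hierarchical_Cluster_Agreement}--\ref{sec:analysis_HCA}, and is the one place where more than routine manipulation is required.
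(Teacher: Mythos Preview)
Your proposal is correct and follows essentially the same approach as the paper: the same two-step algorithm (per-level \CC{} then \HCA), the same triangle-inequality bookkeeping on symmetric differences, and the same final bound $(\text{\CC}+1)(\text{\HCA}+1)-1$ on the approximation factor, with the substantive work deferred to the \HCA{} analysis of Sections~\ref{sec:Hierarchical_Cluster_Agreement}--\ref{sec:analysis_HCA}.
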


In this section, we present a deterministic reduction from \GeneralProblem{} to \NewProblem{} that guarantees:

\begin{align}
        \text{\HCC}     &< (\text{\CC}+1) (\text{\HCA}+1)
    \tag*{\req{eq:reduction_HCC_to_HCA} from Figure~\ref{fig:roadmap}}
\end{align}

In Sections~\ref{sec:Hierarchical_Cluster_Agreement} and \ref{sec:analysis_HCA} we present a deterministic polynomial time constant factor approximation algorithm for \NewProblem; combined with a known deterministic polynomial time constant factor approximation algorithm for Correlation Clustering \cite{DBLP:journals/mor/ZuylenW09}, it completes the proof of Theorem~\ref{thm:hcc}.

Assume that Correlation Clustering can be approximated within a factor $\alpha$ and that Hierarchical Cluster Agreement can be approximated within a factor $\beta$ (Section~\ref{sec:Hierarchical_Cluster_Agreement}). We prove Inequality~(\ref{eq:reduction_HCC_to_HCA}), by providing an algorithm to approximate Hierarchical Correlation Clustering within a factor $(\alpha+1)(\beta+1)-1$. 

Suppose we have a Hierarchical Correlation Clustering instance
$S, \delta^{(1)}, \ldots, \delta^{(\ell)}, E^{(1)}, \ldots, E^{(\ell)}$.
Our algorithm first solves the Correlation Clustering instance $S,E^{(t)}$ to acquire partition $Q^{(t)}$, for every level $t$.
Then, we solve the \NewProblem~instance
$S, \delta^{(1)}, \ldots, \delta^{(\ell)}, \calE(Q^{(1)}), \ldots,$ $\calE(Q^{(\ell)})$
and obtain hierarchical partitions
$P^{(1)}, \ldots, P^{(\ell)}$.

The proof that the hierarchical partitions $P^{(1)}, \ldots, P^{(\ell)}$ are a good approximation to the Hierarchical Correlation Clustering instance follows from two observations. First, by definition, the cost of Hierarchical Correlation Clustering is related to certain symmetric differences. Since the cardinality of symmetric differences satisfy the triangle inequality, we can switch between the cost of Hierarchical Correlation Clustering and Hierarchical Cluster Agreement under the same output, with only an additive term related to $|E^{(t)} \triangle \calE(Q^{(t)})|$ and not related to the output. Second, by definition of $Q^{(t)}$, the cardinality of the symmetric difference $|E^{(t)} \triangle \calE(Q^{(t)})|$ is minimized within a factor $\alpha$.

More formally, for this proof we need to define the following three concepts:

\begin{itemize}
    \item For any $t\in [\ell]$, $OPT_{\CC}^{(t)}$ is an optimal solution to the Correlation Clustering instance at level $t$, that is a partition minimizing 
    $$|E^{(t)} \triangle \calE(OPT_{\CC}^{(t)})|$$
    \item $OPT_{\HCC}=(OPT_{\HCC}^{(1)},\ldots,OPT_{\HCC}^{(\ell)})$ is an optimal solution to the \GeneralProblem~instance, that is a sequence of hierarchical partitions minimizing
    $$\sum_{t=1}^{\ell} \delta^{(t)} |E^{(t)} \triangle \calE(OPT_{\HCC}^{(t)})|$$
    \item $OPT_{\HCA}=(OPT_{\HCA}^{(1)},\ldots,OPT_{\HCA}^{(\ell)})$ is an optimal solution to the \NewProblem~instance, that is a sequence of hierarchical partitions minimizing
    $$\sum_{t=1}^{\ell} \delta^{(t)} |\calE(Q^{(t)}) \triangle \calE(OPT_{\HCA}^{(t)})|$$
\end{itemize}
Notice, for any $t$, the difference between $OPT_{\CC}^{(t)}$ and $OPT_{\HCC}^{(t)}$. The first one optimizes locally (per level), meaning that $|E^{(t)} \triangle \calE(OPT_{\CC}^{(t)})| \le |E^{(t)} \triangle \calE(OPT_{\HCC}^{(t)})|$, and therefore $\sum_{t=1}^{\ell} \delta^{(t)} |E^{(t)} \triangle \calE(OPT_{\CC}^{(t)})| \le \sum_{t=1}^{\ell} \delta^{(t)} |E^{(t)} \triangle \calE(OPT_{\HCC}^{(t)})|$. This does not contradict the definition of $OPT_{\HCC}$, as the sequence $OPT_{\CC}^{(1)}, \ldots, OPT_{\CC}^{(\ell)}$ is not a sequence of hierarchical partitions.

The cost of our solution is 
\begin{align}
  \sum_{t=1}^{\ell} \delta^{(t)} |E^{(t)} \triangle \calE(P^{(t)})|
  \leq
  \sum_{t=1}^{\ell} \delta^{(t)} |E^{(t)} \triangle \calE(Q^{(t)})|
+
  \sum_{t=1}^{\ell} \delta^{(t)} |\calE(Q^{(t)}) \triangle \calE(P^{(t)})|
  \label{ineq:reduction_HCC_to_HCA}
\end{align}

By definition of $P^{(1)}, \ldots, P^{(\ell)}$, they minimize the second term of $(\ref{ineq:reduction_HCC_to_HCA})$ within a factor $\beta$. Therefore, the second term is upper bounded by $\beta\sum_{t=1}^{\ell} \delta^{(t)} |\calE(Q^{(t)}) \triangle \calE(OPT_{\HCA}^{(t)})|$, which, by optimality of $OPT_{\HCA}$ is upper bounded by $\beta\sum_{t=1}^{\ell} \delta^{(t)} |\calE(Q^{(t)}) \triangle \calE(OPT_{\HCC}^{(t)})|$.

Using the triangle inequality again, we further upper bound the second term by:

$$\beta \sum_{t=1}^{\ell} \delta^{(t)} |\calE(Q^{(t)}) \triangle E^{(t)}| + \beta\sum_{t=1}^{\ell} \delta^{(t)} |E^{(t)} \triangle \calE(OPT_{\HCC}^{(t)})|$$

Therefore, we can rewrite $(\ref{ineq:reduction_HCC_to_HCA})$ as:

$$ \sum_{t=1}^{\ell} \delta^{(t)} |E^{(t)} \triangle \calE(P^{(t)})|
  \leq
  (\beta+1) \sum_{t=1}^{\ell} \delta^{(t)} |E^{(t)} \triangle \calE(Q^{(t)})| + \beta\sum_{t=1}^{\ell} \delta^{(t)} |E^{(t)} \triangle \calE(OPT_{\HCC}^{(t)})|$$

Since $Q^{(t)}$ is obtained by solving Correlation Clustering at level $t$ within a factor $\alpha$, we get
$$\sum_{t=1}^{\ell} \delta^{(t)}|E^{(t)} \triangle \calE(Q^{(t)})| \le \alpha\sum_{t=1}^{\ell} \delta^{(t)} |E^{(t)} \triangle \calE(OPT_{\CC}^{(t)})|$$

By optimality of $OPT_{\CC}^{(t)}$, for each $t\in [\ell]$, we have

$$\sum_{t=1}^{\ell} \delta^{(t)} |E^{(t)} \triangle \calE(OPT_{\CC}^{(t)})| \le \sum_{t=1}^{\ell} \delta^{(t)} |E^{(t)} \triangle \calE(OPT_{\HCC}^{(t)})|$$

which proves that

\begin{align*}
\sum_{t=1}^{\ell} \delta^{(t)} |E^{(t)} \triangle \calE(P^{(t)})|
  &\leq \left((\beta+1)\alpha + \beta \right) \sum_{t=1}^{\ell} \delta^{(t)} |E^{(t)} \triangle \calE(OPT_{\HCC}^{(t)})|\\
  &= \left((\alpha+1)(\beta+1)-1\right) \sum_{t=1}^{\ell} \delta^{(t)} |E^{(t)} \triangle \calE(OPT_{\HCC}^{(t)})|
\end{align*}

\section{Constant approximation Algorithm for Hierarchical Cluster \\Agreement}
\label{sec:Hierarchical_Cluster_Agreement}
In this section we introduce our main algorithm, which consists of three parts: Solving the LP formulation of the problem, the LP-Cleaning subroutine and the Derive-Hierarchy subroutine.

Informally, the LP-Cleaning subroutine uses the fractional solution of the LP relaxation of \NewProblem~to decide which of our input-sets are important and which are not.
The decision is not a binary one, because important sets are also cleaned, in the sense that bad parts of them may be removed.
However, at least a $\deltaInLp$ fraction of them is left intact, while unimportant sets are completely discarded.

The Derive-Hierarchy part then receives the cleaned input-sets by LP-Cleaning, and applies a very simple combinatorial algorithm on them to compute the output.

We notice that the weights $\delta^{(*)}$ are only used for solving the LP. Moreover, the fractional LP-solution is only used by LP-Cleaning to guide this ``nearly-binary'' decision for each input-set. The rest of the algorithm is combinatorial and does not take the LP-solution into account.

\subsection{LP Definitions for \NewProblem} \label{ssec:lp-definitions-hca}
The IP-formulation of \NewProblem~is akin to the IP-formulation of \GeneralProblem. Namely, the constraints are exactly the same for both problems. The only difference is in the objective function where we replace the general edge-sets $E^{(1)}, \ldots, E^{(\ell)}$ with the disjoint clique edges from $\calE(Q^{(1)}), \ldots, \calE(Q^{(\ell)})$. Similarly for the LP-relaxation of \NewProblem. Here each component in $Q^{(t)}$ is called a \emph{level}-$t$ \emph{input cluster}.

To simplify our discussion, we use $\FOPT$ to denote a fractional solution to the LP-relaxation of \NewProblem, that is a vector containing all $x_{i,j}^{(t)}, \{i,j\}\in {S\choose 2}, t\in [\ell]$. One can think of $\FOPT$ as the optimal fractional solution, but in principle it can be any solution.

We use $x^{(t)}$, for some particular $t\in [\ell]$, to denote the vector containing all $x_{i,j}^{(t)}, \{i,j\}\in {S\choose 2}$.

As previously, we use the term LP distances to refer to the entries of $x^{(*)}$, and notice that for any particular $t\in [l]$ the LP distances even satisfy the triangle inequality, by the LP constraints.

Given $\FOPT$ we define $\B{r}{t}(i)$ to be the ball of species with LP-distance less than $r$ from $i$ at level $t$. 
More formally, $\B{r}{t}(i) = \{j\in S \mid x_{i,j}^{(t)}<r\}$. Similarly, for a subset $S'$ of $S$ we define the ball $\B{r}{t}(S') = \{j\in S \mid \exists i\in S' \text{ s.t. }x_{i,j}^{(t)}<r\}$.

We also define the LP cost of species $i,j$ at level $t$ as 
$$cost_{i,j}^{(t)} = \left\{
\begin{array}{ll}
      ~\delta^{(t)}x_{i,j}^{(t)}    ~~~~~~~~~~\text{if } \{i,j\}\in \calE(Q^{(t)})\\
      ~\delta^{(t)}(1-x_{i,j}^{(t)})       ~~~\text{otherwise}\\
\end{array} \right.$$

as well as the LP cost of species in a set $S'\subseteq S$ at level $t$ as
$$cost_{S'}^{(t)} = \sum_{\substack{\{i,j\} \in {S\choose 2}\\i\in S' \text{ or } j\in S'}} cost_{i,j}^{(t)}$$

and in case $S'$ only contains a single species $i$, we write $cost_i^{(t)}$ instead of $cost_{\{i\}}^{(t)}$.

Then the LP cost at level $t$ is denoted as $cost^{(t)} = cost_{S}^{(t)}$.

Finally, the LP cost is simply $cost^{(*)} = \sum_{t=1}^{\ell} cost^{(t)}$.

\subsection{Main Algorithm}
The pseudocode for our main algorithm for \NewProblem~is given in Algorithm~\ref{algo:main}.

\begin{algorithm}[H]
\caption{\NewProblem Algorithm}
\label{algo:main}
\hspace*{\algorithmicindent} \textbf{Input} \hspace*{0.65 \dimexpr \algorithmicindent} A set $S$, a sequence $Q^{(*)}=(Q^{(1)}, \cdots, Q^{(\ell)})$ of partitions of $S$, and weights\\
\hspace*{4 \dimexpr \algorithmicindent} $\delta^{(*)}=(\delta^{(1)}, \cdots, \delta^{(\ell)})$\\
\hspace*{\algorithmicindent} \textbf{Returns} A sequence $P^{(*)}=(P^{(1)}, \cdots, P^{(\ell)})$ of hierarchical partitions of $S$
\label{algo:main-algorithm}
\begin{algorithmic}[1]
\State $x^{(*)}\gets$ Solve(LP-relaxation($S,Q^{(*)},\delta^{(*)}$))
\State $L^{(*)} \gets \text{LP-Cleaning}(S, Q^{(*)}, x^{(*)})$
\State \Return $\text{Derive-Hierarchy}(S, L^{(*)})$
\end{algorithmic}
\end{algorithm}

Our LP relaxation has size polynomial in $S, \ell$, and the two subroutines also run in polynomial time, as we show later. Therefore the whole algorithm runs in polynomial time.

\subsection{LP cleaning Algorithm}
In Algorithm~\ref{algo:lp-cleaning} we provide the pseudocode of the LP Cleaning step of our algorithm.

Intuitively, the aim of this algorithm is to clean the input sets so that (ideally) all species remaining in a set have small LP distances to each other, and large LP distances to species not in the set.

\begin{algorithm}[H]
\caption{LP-Cleaning}
\hspace*{\algorithmicindent} \textbf{Input} \hspace*{0.65 \dimexpr \algorithmicindent} A set $S$, a sequence $Q^{(*)}=(Q^{(1)}, \cdots, Q^{(\ell)})$ of partitions of $S$,\\
\hspace*{4 \dimexpr \algorithmicindent}
and a fractional solution $x^{(*)}$\\
\hspace*{\algorithmicindent} \textbf{Returns} A sequence $L^{(*)}=(L^{(1)}, \cdots, L^{(\ell)})$ of families of disjoint subsets of $S$
\label{algo:lp-cleaning}
\begin{algorithmic}[1]
\For {$t\gets 1, \ldots, \ell$} 
    \State $L^{(t)} \gets \emptyset$
	\For {$C_I \in Q^{(t)}$} \label{line:inputCluster}
		\State $C_{LP} \gets \left\{i\in C_I \left| \begin{array}{l}
|\B{\radiusInCluster}{t}(i) \cap C_I| > \frac12|C_I|, \\[.5ex] |\B{\threeDiameter}{t}(i) \setminus C_I| \le \badOutEdges|C_I| \end{array} \right. \right\}$ \label{line:outsideA}
        \If{$|C_{LP}| \ge \deltaInLp |C_I|$} \label{line:deltaIL}
		    \State $L^{(t)} \gets L^{(t)} \cup \{C_{LP}\}$
		\EndIf
	\EndFor
\EndFor
\State \Return $L^{(*)}=(L^{(1)}, \cdots, L^{(\ell)})$
\end{algorithmic}
\end{algorithm}




Formally Algorithm~\ref{algo:lp-cleaning} takes a sequence $Q^{(*)}=(Q^{(1)},\dots,Q^{(\ell)})$ of partitions of $S$ and a fractional solution $x^{(*)}$ containing LP distances. It outputs a sequence of families of disjoint subsets of $S$, $L^{(*)}=(L^{(1)},\dots,L^{(\ell)})$. Here each component of $L^{(t)}$ is called a \emph{level}-$t$ \emph{LP-cluster}. 

In the algorithm, for each input partition $Q^{(t)}$ we process every level-$t$ input-cluster $C_I\in Q^{(t)}$ separately. For this we remove all the species in $C_I$ that do not have very small LP distance to at least half the species in $C_I$ or that have small LP distance to many species not in $C_I$. More formally, we remove all the species in $C_I$ with LP distance less than $0.1$ to at most half the species in $C_I$ or with LP distance less than $0.6$ to more than $0.05|C_I|$ species not in $C_I$.

After the cleaning step we discard $C_I$ if less than $9/10$ fraction of the species survive. Otherwise we create an LP-cluster $C_{LP}$ containing the species in $C_I$ that survive. Next we add the level-$t$ LP-cluster $C_{LP}$ to $L^{(t)}$.   

Out of several properties that we prove concerning the output of the LP-Cleaning, we briefly mention the following one: The output sequence $L^{(*)}$ is {\em hierarchy-friendly} in the sense that no two LP-clusters at the same level $t$ can be intersected by the same LP-cluster at level $t'<t$. We formally prove this in Lemma~\ref{lem:oneIntersection}.

The LP-Cleaning subroutine trivially runs in time polynomial in $S,\ell$.

\subsection{Derive-hierarchy Algorithm}
In this section, we introduce Derive-Hierarchy (Algorithm~\ref{algo:Derive-hierarchy}). It takes as input a hierarchy-friendly sequence $L^{(*)}=(L^{(1)}, \cdots, L^{(\ell)})$ of families of disjoint subsets of $S$  and outputs a sequence ${P^{(*)}=(P^{(1)}, \cdots, P^{(\ell)})}$ of hierarchical partitions of $S$. The execution of the algorithm can be seen, via a graphical example, in Figure~\ref{fig:DeriveHierarchy}.

\begin{algorithm}
\caption{Derive-Hierarchy}
\hspace*{\algorithmicindent} \textbf{Input} \hspace*{0.65 \dimexpr \algorithmicindent} A set $S$, and a hierarchy-friendly sequence $L^{(*)}=(L^{(1)}, \cdots, L^{(\ell)})$\\
\hspace*{4 \dimexpr \algorithmicindent} of families of disjoint subsets of $S$\\
\hspace*{\algorithmicindent} \textbf{Returns} A sequence $P^{(*)}=(P^{(1)}, \cdots, P^{(\ell)})$ of hierarchical partitions of $S$
\label{algo:Derive-hierarchy}
\begin{algorithmic}[1]
\State Construct an empty forest $\calF$

\For {$i\in S$}
\State Create a singleton tree $T$ with a node $u_i$ and add it to $\calF$
\State Set $C(u_i) \gets C^+(u_i) \gets \{i\}$
\EndFor 
\For {$t\gets 1, \ldots, l$} 
	\For {$C_{LP} \in L^{(t)}$\label{line:main-iter}}
	    \State Create a node $u$ and set $C(u) \gets C_{LP}$ \label{line:creationNode}
		\For {all roots $v\in \calF$ s.t. $C(v) \cap C(u) = \emptyset$} \label{line:conditionRemoval}
		        \State $C(u) \gets C(u) \setminus C^+(v)$
		\EndFor
		\State $C^+(u) \gets C(u)$
		\For {all roots $v\in \calF$ s.t. $C(v) \cap C(u) \not= \emptyset$} \label{line:conditionExtension}
		        \State $C^+(u) \gets C^+(u) \cup C^+(v)$
		        \State Make $v$ a child of $u$ in $\calF$
		\EndFor
	\EndFor
	\State Set $P^{(t)}$ to contain the extended-clusters $C^+(v)$ of all roots $v    \in \calF$
\EndFor
\State \Return $P^{(*)}=(P^{(1)}, \cdots, P^{(\ell)})$
\end{algorithmic}
\end{algorithm}

\begin{figure}
    \centering
    \includegraphics[width=190pt, height=125pt]{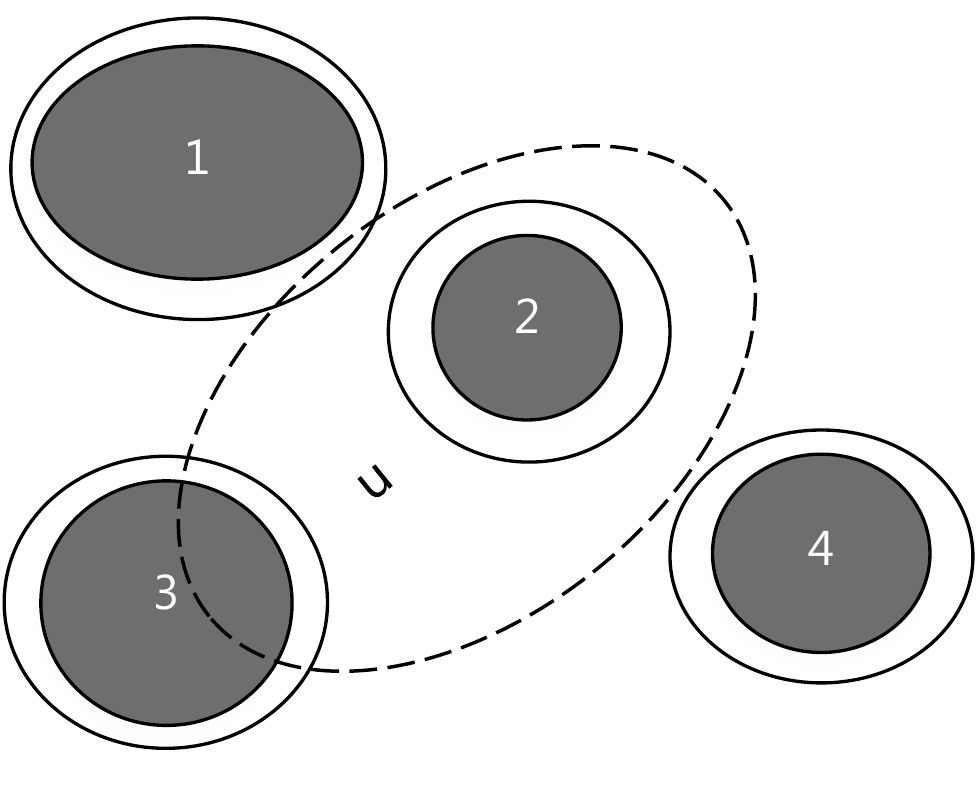}
    \hspace{40pt}
    \includegraphics[width=190pt, height=125pt]{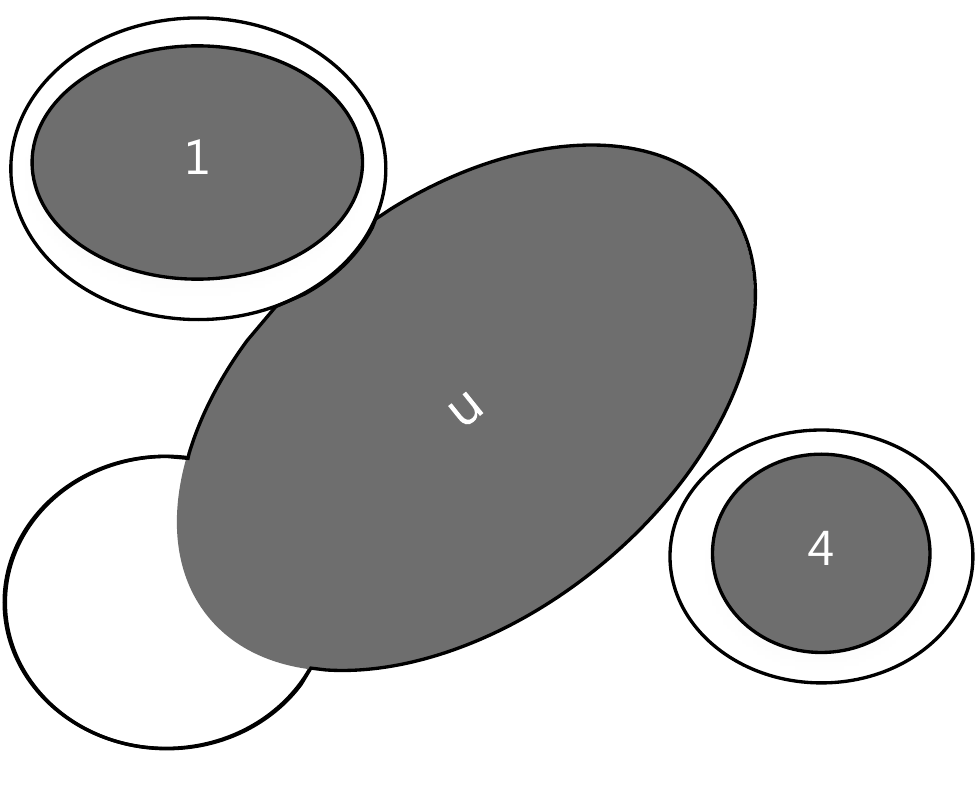}
    \caption{Example of Derive-Hierarchy (Algorithm~\ref{algo:Derive-hierarchy}). Nodes $1,2,3,4$ (left) are the roots of the forest before inserting the new LP-cluster $L(u)$ (dashed line). Each node is described by its extended-cluster, with the shaded part being the core-cluster. Core-cluster of nodes $2$ and $3$ intersect $L(u)$; thus they become children of $u$ and the extended-cluster of $u$ covers the extended-clusters of $2,3$ (right). Notice that the core-cluster of $u$ is reduced due to node $1$.}
    \label{fig:DeriveHierarchy}
\end{figure}


The algorithm works bottom-up while performing $\ell$ iterations for $t=1,\dots,\ell$. In the process it incrementally builds a forest $\mathcal{F}$. 
Throughout the algorithm each non leaf node $u$ in $\mathcal{F}$ can be identified by an LP-cluster in $L^{(*)}$. Moreover for each node $u$ the algorithm maintains two sets $C(u)$ and $C^{+}(u) \subseteq S$. 

The algorithm starts by initializing $\mathcal{F}$ with $|S|$ trees where each tree contains a single node $u_i$ identified by a species $i\in S$. 
Also it initializes both sets $C(u_i),C^{+}(u_i)$ with $\{i\}$. 
Next in iteration $t$ the algorithm processes the LP-clusters in $L^{(t)}$ and at the end of the iteration, the $C^{+}()$ sets associated with the root nodes in $\mathcal{F}$ define the partition $P^{(t)}$. Precisely here, the $C^{+}()$ set of a root node contains all the species descending from the respective root.

In the $t$-th iteration, for each cluster $C_{LP}\in L^{(t)}$ the algorithm adds a root node $u$ to forest $\mathcal{F}$ while initializing the set $C(u)$ with $C_{LP}$. 
Next for the root node $u$ the algorithm decides on its children by processing the pre-existing roots in the following way.
For consistency first it detects all the pre-existing root nodes $v$ such that $C(u)$ does not intersect $C(v)$.
Then it removes from $C(u)$ all the species that are descending from $v$; i.e. sets $C(u)\gets C(u)\setminus C^{+}(v)$. Lastly it sets $u$ as a parent of all other pre-existing root nodes $v$ such that $C(u)$ intersects $C(v)$. Also accordingly it modifies the set of leaf nodes of the subtree rooted at $u$ by setting $C^+(u)\gets C^
+(u)\cup C^+(v)$. Notice here that some of the root-nodes may correspond to sets from levels lower than $t$, in case no parent was assigned to them.

At the end of iteration $t$ the algorithm completes processing all the LP-clusters in $L^{(1)},\dots,L^{(t)}$ and constructs partitions $P^{(1)},\dots,P^{(t)}$. At the end of the $\ell$ iterations it outputs the $\ell$ partitions $P^{(*)}=(P^{(1)},\dots,P^{(\ell)})$. 

The Derive-Hierarchy subroutine trivially runs in time polynomial in $S,\ell$.



\section{Analysis of \NewProblem~Algorithm} \label{sec:analysis_HCA}
In this section, we proceed with our analysis. We first lay out some terminology, then provide some results related to the LP Cleaning, then some structural results, and finally prove that our algorithm is a constant factor approximation for \NewProblem.

\subsection{Terminology}
Notice that throughout the execution of the algorithm, $\calF$ is an incrementally updated graph (that is, no deletions occur). In fact it is always a forest, as we start with $|S|$ isolated nodes and only introduce new nodes as parents of roots of some of the existing trees. Moreover, this process implies that the subtree rooted at any specific node is never modified. 

From now on we use $\calF$ to refer to the final instance of the incrementally updated forest. We use $\calF(u)$ to refer to the state of this incrementally updated forest after introducing $u$; therefore $\calF(u)\setminus\{u\}$ denotes the state of the forest exactly before introducing node $u$. We naturally identify the leaves of $\calF$ with the species of $S$.

For any node $u$ in the forest $\calF$, the Derive-Hierarchy algorithm defines $C(u)$, which we call the core-cluster of $u$, and $C^+(u)$, which we call the extended-cluster of $u$. Furthermore, notice that each core-cluster $C(u)$ is a subset of some LP-cluster $C_{LP}$ (Line~\ref{line:creationNode} of Algorithm~\ref{algo:Derive-hierarchy}); we call this the LP-cluster of $u$ and denote it by $L(u)$. Moreover, each LP-cluster $L(u)$ is a subset of an input-cluster $C_I$ (Line~\ref{line:inputCluster} of Algorithm~\ref{algo:lp-cleaning}); we call this the input-cluster of $u$ and denote it by $I(u)$. These concepts are well defined for any new node $u$ and never change throughout the algorithm. We remind the reader that LP-Cleaning discards some of the input clusters, in the sense that they have no corresponding LP-cluster, and therefore they do not match $I(u)$, for any node $u$.

Directly from the algorithm we get that
\begin{alignat*}{2}
    C(u) &\subseteq L(u) \subseteq I(u) \\
    C(u) &\subseteq C^+(u)
\end{alignat*}

To help with our discussion, we also define the following variables related to the Derive-Hierarchy algorithm (Algorithm~\ref{algo:Derive-hierarchy}):
\begin{alignat*}{4}
    &\Delta^-(u) &&= L(u) &&\setminus C(u) \\
    & \Delta^{+}(u)  &&= C^+(u) &&\setminus C(u)
\end{alignat*}

For a node $u\in \calF$, we define its level $t(u)$ to be the value of iteration $t$ in Algorithm~\ref{algo:Derive-hierarchy} when internal node $u$ was introduced, and $0$ when $u$ is a leaf node.

\subsection{LP-Cleaning Results (Algorithm~\ref{algo:lp-cleaning})}
We start with some observations that are heavily used in proving structural results regarding the core and the extended-clusters. These are in turn used for lower-bounding the LP cost.

The most important reason we are using the LP-Cleaning subroutine is so that any two species belonging in the same LP-Cluster at level $t$ have small LP-distance.

\begin{lemma}
\label{lem:diameterCluster}
Given a node $u\in \calF$ and a species $i$ in $u$'s LP-cluster $L(u)$, it holds that the LP-distance from $i$ to any other species in $L(u)$ is less than $\diameterInCluster$ for all levels $t\ge t(u)$, that is $\B{\diameterInCluster}{t}(i) \supseteq L(u)$.
\end{lemma}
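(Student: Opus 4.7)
The proof will be a short application of the pigeonhole principle together with the two basic LP properties: the triangle inequality for LP-distances at a fixed level, and the monotonicity of LP-distances across levels.

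First I would fix any second species $j \in L(u)$ and focus on the level $t(u)$, since this is the level at which $u$ (and hence $L(u)$) was created. By the definition of the LP-Cleaning subroutine, the fact that $i, j \in L(u) = C_{LP}$ means that both $i$ and $j$ satisfied the predicate on Line~\ref{line:outsideA} of Algorithm~\ref{algo:lp-cleaning} when applied to the input-cluster $I(u)$ at level $t(u)$. In particular,
\[
\bigl|\B{\radiusInCluster}{t(u)}(i) \cap I(u)\bigr| > \tfrac12 |I(u)|
\quad\text{and}\quad
\bigl|\B{\radiusInCluster}{t(u)}(j) \cap I(u)\bigr| > \tfrac12 |I(u)|.
\]

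Second, since both these sets are subsets of $I(u)$ and each has strictly more than half its cardinality, by the pigeonhole principle they must share at least one common element $k \in I(u)$. For that $k$ we have $x_{i,k}^{(t(u))} < \radiusInCluster$ and $x_{j,k}^{(t(u))} < \radiusInCluster$, and the LP's triangle inequality (\ref{ineq:triangle}) then yields
\[
x_{i,j}^{(t(u))} \le x_{i,k}^{(t(u))} + x_{j,k}^{(t(u))} < 2\radiusInCluster = \diameterInCluster.
\]

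Third, I would lift this bound to every level $t \ge t(u)$ using the hierarchy inequality (\ref{ineq:hier}) inductively: $x_{i,j}^{(t)} \le x_{i,j}^{(t(u))} < \diameterInCluster$. Hence $j \in \B{\diameterInCluster}{t}(i)$ for every $t \ge t(u)$, and since $j$ was an arbitrary element of $L(u)$, we conclude $L(u) \subseteq \B{\diameterInCluster}{t}(i)$.

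There is essentially no real obstacle; the only thing to be a bit careful about is the direction of the hierarchy inequality (LP-distances are non-increasing as the level $t$ grows, because higher levels correspond to coarser partitions), and that the pigeonhole argument is applied inside $I(u)$, not inside $L(u)$, so that the threshold $\tfrac12|I(u)|$ from the cleaning rule can be used directly.
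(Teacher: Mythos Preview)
Your proof is correct and follows essentially the same approach as the paper: reduce to level $t(u)$ via the hierarchy constraint (\ref{ineq:hier}), use pigeonhole on the two balls $\B{\radiusInCluster}{t(u)}(i)\cap I(u)$ and $\B{\radiusInCluster}{t(u)}(j)\cap I(u)$ inside $I(u)$ to find a common $k$, and finish with the triangle inequality (\ref{ineq:triangle}). The only cosmetic difference is that the paper states the reduction to level $t(u)$ first and then does the pigeonhole argument, whereas you do them in the opposite order.
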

\begin{proof}
It suffices to prove that $x_{i,j}^{(t)} < \diameterInCluster$ for all $j\in L(u)$ only for level $t=t(u)$, as the LP constraints enforce $x_{i,j}^{(t+1)} \leq x_{i,j}^{(t)}$.

By pigeonhole principle, since both $\B{\radiusInCluster}{t}(i)\cap I(u)$ and $\B{\radiusInCluster}{t}(j) \cap I(u)$ have size more than $|I(u)|/2$ (Line~\ref{line:outsideA} of Algorithm~\ref{algo:lp-cleaning}), there exists a node $k \in I(u)$ for which both $x_{i,k}^{(t)}$ and $x_{j,k}^{(t)}$ are less than $\radiusInCluster$. 
Since the LP-distances in $x^{(t)}$ satisfy the triangle inequality, it follows that $x_{i,j}^{(t)} < \diameterInCluster$ (enforced by the LP constraints). 
\end{proof}

For the analysis, it is convenient that our relations involve the LP-clusters instead of the input-clusters. Therefore, we rephrase Line~\ref{line:outsideA} of Algorithm~\ref{algo:lp-cleaning} in terms of LP-clusters, effectively proving that few species outside of an LP-cluster $L(u)$ have small LP-distances to $L(u)$.

\begin{lemma}
\label{lem:outsideL}
For any node $u\in \calF$ it holds that $|\B{\twoDiameter}{t(u)}(L(u))| \le (1+\outsideL) |L(u)|$. In particular, $|\B{\twoDiameter}{t(u)}(L(u)) \setminus L(u)| \le \outsideL |L(u)|$.
\end{lemma}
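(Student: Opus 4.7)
The plan is to bound $|\B{\twoDiameter}{t(u)}(L(u)) \setminus L(u)|$ directly by partitioning the outside species according to whether they lie inside or outside the input-cluster $I(u)$ that $L(u)$ was cleaned out of. Write $I = I(u)$ and $L = L(u)$, so that $L \subseteq I$ and, by the threshold check on Line~\ref{line:deltaIL} of Algorithm~\ref{algo:lp-cleaning}, $|L| \ge \deltaInLp \, |I|$. Let $X = \B{\twoDiameter}{t(u)}(L) \setminus L$ and split $X = X_{\mathrm{in}} \cup X_{\mathrm{out}}$, where $X_{\mathrm{in}} = X \cap I$ and $X_{\mathrm{out}} = X \setminus I$.

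For $X_{\mathrm{in}}$, the bound is immediate: $|X_{\mathrm{in}}| \le |I \setminus L| \le (1 - \deltaInLp)|I| \le \frac{1 - \deltaInLp}{\deltaInLp}\, |L|$, which with the chosen constants becomes $|L|/9$. For $X_{\mathrm{out}}$, the key step is to upgrade the hypothesis that each $j \in X_{\mathrm{out}}$ is close to \emph{some} species in $L$ to the statement that it is close to \emph{every} species in $L$. Concretely, pick any $j \in X_{\mathrm{out}}$ and any $i \in L$ witnessing $x_{i,j}^{(t(u))} < \twoDiameter$; for an arbitrary $i' \in L$, Lemma~\ref{lem:diameterCluster} gives $x_{i,i'}^{(t(u))} < \diameterInCluster$, so by the triangle inequality (which holds at each fixed level by~(\ref{ineq:triangle})), $x_{i',j}^{(t(u))} < \diameterInCluster + \twoDiameter = \threeDiameter$. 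Thus $j \in \B{\threeDiameter}{t(u)}(i') \setminus I$ for every $i' \in L$.

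Fixing any single $i' \in L$ and applying the second condition from Line~\ref{line:outsideA} of Algorithm~\ref{algo:lp-cleaning} then yields $|X_{\mathrm{out}}| \le |\B{\threeDiameter}{t(u)}(i') \setminus I| \le \badOutEdges \, |I| \le \frac{\badOutEdges}{\deltaInLp}\, |L|$, which with the chosen constants is $|L|/18$. Adding the two contributions gives $|X| \le (1/9 + 1/18)|L| = |L|/6 = \outsideL\, |L|$, and hence $|\B{\twoDiameter}{t(u)}(L)| = |L| + |X| \le (1 + \outsideL)|L|$, as required.

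The only real step that needs care is the triangle-inequality upgrade for $X_{\mathrm{out}}$: one must verify that Lemma~\ref{lem:diameterCluster} is applicable precisely at level $t(u)$ (which it is, since $t(u)$ is the level at which $L(u)$ was defined) and that the chosen radii satisfy $\diameterInCluster + \twoDiameter \le \threeDiameter$ so that the outside-ball condition in the LP-cleaning predicate can be invoked. Everything else is arithmetic with the fixed constants $\radiusInCluster = 0.1$, $\deltaInLp = 0.9$, and $\badOutEdges = 0.05$.
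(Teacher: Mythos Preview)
Your proof is correct and uses essentially the same ingredients as the paper: the triangle-inequality upgrade via Lemma~\ref{lem:diameterCluster} to show every $j$ in the ball lies in $\B{\threeDiameter}{t(u)}(i')$ for any fixed $i'\in L(u)$, the second predicate on Line~\ref{line:outsideA} to bound the species outside $I(u)$, and the threshold $|L(u)|\ge \deltaInLp|I(u)|$ from Line~\ref{line:deltaIL}. The only difference is organizational: the paper bounds the entire ball at once by $|\B{\threeDiameter}{t(u)}(i')| \le (1+\badOutEdges)|I(u)| \le \frac{1+\badOutEdges}{\deltaInLp}|L(u)| = (1+\outsideL)|L(u)|$, whereas you split $X$ into the part inside $I(u)$ (bounded by $|I\setminus L|$) and the part outside (bounded via the same predicate); the arithmetic $\tfrac{1}{9}+\tfrac{1}{18}=\tfrac{1}{6}$ versus $\tfrac{1.05}{0.9}=\tfrac{7}{6}$ is the same computation repackaged.
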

\begin{proof}
Let $t=t(u)$.
We claim that species close to some species in $L(u)$ are close to all species in $L(u)$. Formally, we claim that for any $i\in L(u)$
$$\B{\twoDiameter}{t}(L(u)) \subseteq \B{\threeDiameter}{t}(i)$$

Let $j\in \B{\twoDiameter}{t}(L(u))$. We bound the LP-distance between $i,j$ by finding an intermediate $i'$ that is close to both and applying the triangle inequality forced by the LP constraints. By definition of $j$, there exists a species $i'\in L(u)$ with LP-distance less than $\twoDiameter$ from $j$. By Lemma~\ref{lem:diameterCluster}, the LP-distance between $i$ and $i'$ is less than $\diameterInCluster$, and thus by triangle inequality $x_{i,j}^{(t)}<\threeDiameter$.

Line~\ref{line:outsideA} of Algorithm~\ref{algo:lp-cleaning} gives that 
$$|\B{\threeDiameter}{t}(i) \setminus I(u)| \le \badOutEdges |I(u)| \implies |\B{\threeDiameter}{t}(i)| \le \badOutEdges |I(u)| + |I(u)|$$ 

Combining these two relations, and by $|L(u)|\ge \deltaInLp |I(u)|$ (Line~\ref{line:deltaIL} of Algorithm~\ref{algo:lp-cleaning}):
$$|\B{\twoDiameter}{t}(L(u))| \le |\B{\threeDiameter}{t}(i)| \le \frac{(1+\badOutEdges)}{\deltaInLp} |L(u)| = (1+\outsideL) |L(u)|$$
\end{proof}

The following lemma is just a convenient application of the triangle inequality of our LP, that is heavily used in subsequent proofs. Informally, it states that, under certain mild conditions, the LP-distance is small not only if $i,j$ belong in the same LP-cluster (or core-cluster), but even if they happen to be in different clusters that are both intersected by the same third cluster.

\begin{lemma}
\label{lem:threeDiameter}
Let $u,v,w\in \calF$ be three arbitrary nodes. Assume that the LP-cluster of $v$ intersects the LP-clusters of $u$ and $w$ and $t_{max}= \max\{t(u),t(v),t(w)\}$. Then for any $i,j\in \{L(u)\cup L(v)\cup L(w)\}$ their \lpd{t_{max}} is less than $\threeDiameter$, and $\B{\twoDiameter}{t_{max}}(L(u)) \supseteq L(u)\cup L(v)\cup L(w)$.
\end{lemma}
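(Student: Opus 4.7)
The plan is to apply Lemma~\ref{lem:diameterCluster} together with the triangle inequality (constraint~(\ref{ineq:triangle})) to bridge across the three LP-clusters using species that lie in their pairwise intersections. Throughout I would work at the single level $t=t_{max}$; because $t \geq t(u), t(v), t(w)$ and LP-distances at fixed $\{i,j\}$ are monotone non-increasing in the level index by constraint~(\ref{ineq:hier}), Lemma~\ref{lem:diameterCluster} implies that every pair of species inside any one of $L(u)$, $L(v)$, $L(w)$ has level-$t_{max}$ LP-distance strictly less than $\diameterInCluster$.

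By hypothesis I can pick witnesses $a \in L(u)\cap L(v)$ and $b \in L(v)\cap L(w)$. For the pairwise bound, take any $i,j \in L(u)\cup L(v)\cup L(w)$. The worst case is $i\in L(u)$ and $j\in L(w)$: two applications of the triangle inequality through $a$ and $b$ yield
\[
x_{i,j}^{(t_{max})} \le x_{i,a}^{(t_{max})} + x_{a,b}^{(t_{max})} + x_{b,j}^{(t_{max})} < 3\cdot \diameterInCluster = \threeDiameter,
\]
where each of the three summands is controlled by Lemma~\ref{lem:diameterCluster} applied to $L(u)$, $L(v)$, and $L(w)$ respectively. All remaining cases (both endpoints in the same cluster, or endpoints in two clusters that already share a witness) require at most one hop and give strictly better bounds, hence strictly less than $\threeDiameter$ as well.

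For the ball containment $\B{\twoDiameter}{t_{max}}(L(u)) \supseteq L(u)\cup L(v)\cup L(w)$, I would show that every $j$ in the union lies within level-$t_{max}$ LP-distance $\twoDiameter$ of some species in $L(u)$. If $j\in L(u)$ this is immediate. If $j \in L(v)$, then $a\in L(u)$ serves as the witness since $x_{a,j}^{(t_{max})} < \diameterInCluster < \twoDiameter$ by Lemma~\ref{lem:diameterCluster} applied to $L(v)$. If $j \in L(w)$, chaining through $b$ and $a$ gives $x_{a,j}^{(t_{max})} \le x_{a,b}^{(t_{max})} + x_{b,j}^{(t_{max})} < 2\cdot \diameterInCluster = \twoDiameter$, so again $a\in L(u)$ is a suitable witness.

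I do not anticipate a genuine obstacle here: the argument is purely a two-hop triangle inequality computation, and the only subtle point is justifying that Lemma~\ref{lem:diameterCluster} can be invoked at the common level $t_{max}$ for all three LP-clusters simultaneously, which is precisely what constraint~(\ref{ineq:hier}) and the definition of $t_{max}$ give us for free.
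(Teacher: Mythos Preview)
Your proof is correct and follows essentially the same approach as the paper: pick witnesses in $L(u)\cap L(v)$ and $L(v)\cap L(w)$, apply Lemma~\ref{lem:diameterCluster} to each cluster at level $t_{max}$, and chain via the triangle inequality. The paper's argument is identical up to naming the witnesses $k,k'$ instead of $a,b$; your remark about constraint~(\ref{ineq:hier}) is already absorbed into the statement of Lemma~\ref{lem:diameterCluster}, so it is harmless but not strictly needed.
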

\begin{proof}
If both $i,j$ are either in $L(u)$ or $L(v)$ or $L(w)$ then the claim follows trivially from Lemma~\ref{lem:diameterCluster}. Otherwise we use triangle inequality twice, with species in the intersections of the clusters as intermediates.
More formally, let $k \in L(u) \cap L(v), k' \in L(v) \cap L(w)$.
Lemma~\ref{lem:diameterCluster} implies three things:
\begin{itemize}
    \item [(1)] $x_{i,k}^{(t_{max})} < \diameterInCluster$, for any $i \in L(u) \cup L(v)$
    \item [(2)] $x_{k,k'}^{(t_{max})} < \diameterInCluster$, as both $k,k' \in L(v)$
    \item [(2)] $x_{k',j}^{(t_{max})} < \diameterInCluster$, for any node $j \in L(v)\cup L(w)$
\end{itemize}

Since the LP-distances $x^{(t_{max})}$ respect the triangle inequality it holds that $x_{i,j}^{(t_{max})} < \threeDiameter$. The claim about the ball of $L(u)$ follows by taking the distance from $k$ to $j$.
\end{proof}

We are now ready to prove the hierarchy-friendly property of the output of LP-Cleaning, as we informally claimed when introducing the algorithm. We claim that two LP-clusters of the same level cannot be intersected by the same lower level LP-cluster.

\begin{lemma}
\label{lem:oneIntersection}
Given two nodes $v,w\in \calF$ on the same level, there is no lower level node $u$ such that $L(u)$ intersects both $L(v)$ and $L(w)$.

In particular, there is also no $C(u)$ intersecting both $L(v)$ and $L(w)$.
\end{lemma}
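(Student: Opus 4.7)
The plan is to argue by contradiction: assume such a lower-level $u$ exists with $L(u)$ meeting both $L(v)$ and $L(w)$, and derive a size contradiction on $L(v)$ via a symmetric double application of Lemmas~\ref{lem:outsideL} and~\ref{lem:threeDiameter}.

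First I would set up the situation carefully. Because $v, w$ share the level $t := t(v) = t(w)$, their LP-clusters arise from two distinct input-clusters in the partition $Q^{(t)}$, so $L(v) \cap L(w) = \emptyset$, and both are non-empty by the threshold test in Line~\ref{line:deltaIL} of Algorithm~\ref{algo:lp-cleaning}. Since by hypothesis $t(u) < t$, we have $t_{\max} := \max\{t(u), t(v), t(w)\} = t$.

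Next I would apply Lemma~\ref{lem:threeDiameter} with $u$ playing the role of the pivot node (its LP-cluster intersects both $L(v)$ and $L(w)$ by assumption). This gives $\B{\twoDiameter}{t}(L(v)) \supseteq L(v) \cup L(u) \cup L(w)$, so in particular $L(w) \subseteq \B{\twoDiameter}{t}(L(v))$. Lemma~\ref{lem:outsideL} bounds $|\B{\twoDiameter}{t}(L(v)) \setminus L(v)| \le \outsideL \cdot |L(v)|$, and disjointness of $L(v), L(w)$ then yields $|L(w)| \le \outsideL \cdot |L(v)|$. Running the same argument with the roles of $v$ and $w$ swapped produces $|L(v)| \le \outsideL \cdot |L(w)|$. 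Composing the two inequalities forces $|L(v)| \le \outsideL^{2} \cdot |L(v)| = \frac{1}{36} |L(v)|$, which contradicts $|L(v)| > 0$.

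For the ``in particular'' statement I would simply observe that $C(u) \subseteq L(u)$ by construction, so any $C(u)$ meeting both $L(v)$ and $L(w)$ would make $L(u)$ meet both as well, reducing to the case already excluded. I do not foresee a genuine obstacle: the lemma boils down to a symmetric counting argument once the right triangle-inequality statement (Lemma~\ref{lem:threeDiameter}) and the ball-size bound (Lemma~\ref{lem:outsideL}) are in hand. The only detail worth double-checking is that two LP-clusters at the same level are indeed disjoint, but this is immediate from $Q^{(t)}$ being a partition and each LP-cluster being a subset of a distinct input-cluster of $Q^{(t)}$.
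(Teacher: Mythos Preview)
Your proof is correct and follows essentially the same approach as the paper: both arguments combine Lemma~\ref{lem:threeDiameter} (to place one LP-cluster inside the $\twoDiameter$-ball of the other at level $t$) with Lemma~\ref{lem:outsideL} (to bound that ball's size) after observing that $L(v)$ and $L(w)$ are disjoint. The paper reaches the contradiction with a single application after assuming WLOG $|L(v)| \ge |L(w)|$, while you run the argument symmetrically and compose the two inequalities; this is a purely cosmetic difference.
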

\begin{proof}
The intuition is that $L(v),L(w)$ are close and thus Algorithm~\ref{algo:lp-cleaning} would discard at least one of them.

Without loss of generality, let $|L(v)|\ge |L(w)|$.
$L(v), L(w)$ are disjoint as they are subsets of different parts of the partition $Q^{(t(v))}$, by Algorithm~\ref{algo:lp-cleaning}.

By Lemma~\ref{lem:threeDiameter} $|\B{\twoDiameter}{t(w)}(L(w))| \ge |L(v)|+|L(w)| \ge 2|L(w)|$, which contradicts Lemma~\ref{lem:outsideL}.
\end{proof}

We finally present a simple lower bound on the LP cost. 


\begin{lemma}
\label{lem:costLPcleaning}
Let $C_I\in Q^{(t)}$ be an input-cluster at level $t$, and $C_{LP}$ be the respective LP-cluster from Algorithm~\ref{algo:lp-cleaning}. Fix a species $i \in C_I\setminus C_{LP}$. Then the fractional LP cost $cost_i^{(t)} = \Omega(\delta^{(t)}|C_I|)$.
\end{lemma}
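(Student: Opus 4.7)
The plan is a direct case analysis on which of the two conditions defining $C_{LP}$ fails for the fixed species $i$. Recall from Line~\ref{line:outsideA} of Algorithm~\ref{algo:lp-cleaning} that $i \in C_{LP}$ requires both (a) $|\B{\radiusInCluster}{t}(i) \cap C_I| > \tfrac{1}{2}|C_I|$ and (b) $|\B{\threeDiameter}{t}(i) \setminus C_I| \le \badOutEdges |C_I|$. Since $i \in C_I \setminus C_{LP}$, at least one of these is violated, and I will show that either violation forces $\Omega(\delta^{(t)}|C_I|)$ LP cost to be paid on pairs incident to $i$.

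In the first case, condition (a) fails, so $|\B{\radiusInCluster}{t}(i) \cap C_I| \le \tfrac{1}{2}|C_I|$, which means at least $\tfrac{1}{2}|C_I|$ species $j \in C_I$ satisfy $x_{i,j}^{(t)} \ge \radiusInCluster$. For each such $j$, the pair $\{i,j\}$ lies in $\calE(Q^{(t)})$ (both belong to the same part $C_I$), so its contribution is $cost_{i,j}^{(t)} = \delta^{(t)} x_{i,j}^{(t)} \ge \radiusInCluster \cdot \delta^{(t)}$. Summing over these species gives $cost_i^{(t)} \ge \tfrac{1}{2}|C_I| \cdot \radiusInCluster \cdot \delta^{(t)} = \Omega(\delta^{(t)}|C_I|)$.

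In the second case, condition (b) fails, so more than $\badOutEdges |C_I|$ species $j \notin C_I$ satisfy $x_{i,j}^{(t)} < \threeDiameter$. For each such $j$, the pair $\{i,j\}$ is not in $\calE(Q^{(t)})$ (since $i$ and $j$ lie in different parts of $Q^{(t)}$), so $cost_{i,j}^{(t)} = \delta^{(t)}(1 - x_{i,j}^{(t)}) > \delta^{(t)}(1 - \threeDiameter)$. Summing yields $cost_i^{(t)} > \badOutEdges |C_I| \cdot (1-\threeDiameter) \delta^{(t)} = \Omega(\delta^{(t)}|C_I|)$.

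Since the two cases are exhaustive and each yields the desired asymptotic lower bound, the lemma follows. There is no real obstacle here: the constants in the definition of $C_{LP}$ (the radii $0.1$ and $0.6$, and the fractions $\tfrac{1}{2}$ and $0.05$) were precisely chosen so that a species excluded from $C_{LP}$ must individually be paying a significant share of the LP cost at level $t$, which is exactly what the argument unpacks.
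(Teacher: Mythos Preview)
Your proof is correct and follows essentially the same approach as the paper: a direct case split on which of the two membership conditions in Line~\ref{line:outsideA} fails, yielding the bounds $cost_i^{(t)} \ge \radiusInCluster \cdot \tfrac12|C_I|\,\delta^{(t)}$ and $cost_i^{(t)} > (1-\threeDiameter)\cdot \badOutEdges|C_I|\,\delta^{(t)}$ respectively.
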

\begin{proof}
There are two reasons for $i$ to be in $C_I\setminus C_{LP}$, by Line~\ref{line:outsideA} of Algorithm~\ref{algo:lp-cleaning}. Either half the species in $C_I$ are at distance at least $\radiusInCluster$ from $i$, or more than $\badOutEdges |C_I|$ species not in $C_I$ are at distance at most $\threeDiameter$ from $i$.

In the first case $cost_i^{(t)}\ge \radiusInCluster \cdot (\frac12|C_I|) \delta^{(t)}$, and in the second case $cost_i^{(t)} \ge (1-\threeDiameter)\cdot \badOutEdges|C_I| \delta^{(t)}$.
\end{proof}

\subsection{Derive-Hierarchy results (Algorithm~\ref{algo:Derive-hierarchy})}
In this section we present several structural results related to our algorithm. 

We start with pointing out that our algorithm ends up with the same output, no matter the order in which we process LP-clusters of the same level. This is due to the input sequence $L^{(*)}$ being hierarchy-friendly.

\begin{remark}
The output of Algorithm~\ref{algo:Derive-hierarchy} is the same, irrespective of the order in which LP-clusters of the same level are processed in Line~\ref{line:main-iter}.
\end{remark}
\begin{proof}
For each level, fix any ordering in which LP-clusters of the same level are processed, and run the algorithm. For any $t\in [\ell]$, let $\calF_{t-1}$ be the state of the forest just before processing the first node of level $t$. We show that for any level-$t$ LP-cluster $C_{LP}$ with corresponding node $u$ (that is $t(u)=t$ and $L(u)=C_{LP}$), no matter when it was actually processed due to the ordering we fixed, the effect is the same as if it was the first level-$t$ LP-cluster processed. More formally, let $N(u)$ be the set of children of $u$, and $C_{t-1}(u), C^+_{t-1}(u), N_{t-1}(u)$ be the core-cluster, the extended-cluster and the set of children of $u$ in the case where $C_{LP}$ was the first LP-cluster of level-$t$ to be processed. Then $C(u)=C_{t-1}(u), C^+(u)=C^+_{t-1}(u), N(u)=N_{t-1}(u)$.

The main idea is that if a root $v\in \calF_{t-1}$ has a core-cluster intersecting $L(u)$, then it is still a root just before inserting $u$; else $v$ would have another parent $w$ of level $t$, meaning $C(v)\subseteq L(v)$ would also intersect $C(w)\subseteq L(w)$ (Line~\ref{line:conditionExtension}), which contradicts that $L^{(*)}$ is hierarchy-friendly.

For $u$'s children, we first show that $N(u) \subseteq N_{t-1}(u)$. Suppose this was not true, then there would exist a level-$t$ node $v\in N(u)\setminus N_{t-1}(u)$. That would imply that $u$'s and $v$'s core clusters intersect (Line~\ref{line:conditionExtension}). But, core-clusters are always subsets of their corresponding LP-clusters, and LP-clusters of the same level are disjoint.

Before proving $N_{t-1}(u) \subseteq N(u)$, we need to show that $C(u)=C_{t-1}(u)$. We show it by proving that $L(u)\setminus C(u) = L(u) \setminus C_{t-1}(u)$. If a species $i$ is in $L(u)\setminus C(u)$, then it is in the extended-cluster of some node $v$ processed before $u$ such that their core-clusters do not intersect (Line~\ref{line:conditionRemoval}). If $t(v)=t$, then $i$ is either in $C(v)$ (contradiction as it would then not be in $L(u)$), or in the extended-cluster of one of its children $w$, which we proved are of lower-level. Thus $w$ was a root in $\calF_{t-1}$. Again by $L^{(*)}$ being hierarchy-friendly, $C(w)\subseteq L(w)$ does not intersect $L(u)$, meaning it does not intersect $C_{t-1}(u)\subseteq L(u)$ and so $i$ would also be in $L(u)\setminus C_{t-1}(u)$ (Line~\ref{line:conditionRemoval}). If $t(v)<t$, then $v$ itself was a root in $\calF_{t-1}$. The same argument in reverse order is used to prove that if $i$ is in $L(u)\setminus C_{t-1}(u)$ then it is in $L(u)\setminus C(u)$.

We now see that $N_{t-1}(u) \subseteq N(u)$; that is because if $v\in N_{t-1}(u)$, then $v$ is a root in $\calF_{t-1}$ with a core-cluster intersecting $L(u)$, and by $L^{(*)}$ being hierarchy-friendly it is also a root in $\calF(u)\setminus\{u\}$. As $C(v)$ intersects $C_{t-1}(u)=C(u)$, we get $v\in N(u)$ (Line~\ref{line:conditionExtension}).

Finally, a species $i$ in $C^+_{t-1}(u) \setminus C_{t-1}(u)$ is part of the extended cluster of a node in $N_{t-1}(u)$; this child is still a root in $\calF(u)\setminus\{u\}$ by $L^{(*)}$ being hierarchy-friendly, therefore $i \in C^+(u) \setminus C(u)$. The other way around, a species $i$ in $C^+(u) \setminus C(u)$ is part of the extended cluster of a node in $N(u)$; this child is a root in $\calF_{t-1}$ as $u$ has no children of level $t$, therefore $i \in C^+(u) \setminus C_{t-1}(u)$.
\end{proof}

Next, we prove two claims that we have already mentioned informally while introducing Algorithm~\ref{algo:Derive-hierarchy} (Derive-Hierarchy). First we claim that the incrementally built graph is always a forest and also for any node $u$, its extended-cluster contains exactly the species descending from $u$. We notice that the previously stated results do not require these properties.

\begin{lemma}
\label{lem:extended_cluster_descending_species}
For any $u\in \calF$, $\calF(u)$ is a forest of rooted trees with $|S|$ leaves identified with the species of $S$, and for each $u\in \calF$, $C^+(u)$ is the set of $u$'s descending species.
\end{lemma}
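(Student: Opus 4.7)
The plan is to proceed by induction on the order in which Algorithm~\ref{algo:Derive-hierarchy} introduces nodes into $\calF$, strengthened by an auxiliary invariant: at every point in time the extended-clusters $\{C^+(r) : r\text{ a current root of }\calF\}$ form a partition of $S$. The base case is immediate because $\calF$ starts as $|S|$ singleton trees $u_i$ with $C^+(u_i)=\{i\}$; each such node is simultaneously a root and a leaf identified with the species $i$, and the singletons trivially partition $S$.

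For the inductive step, suppose both the lemma and the partition invariant hold just before the algorithm creates a new internal node $u$. Adding $u$ does not break the forest structure: since $u$ is brand-new and no edge is ever deleted, the only edges incident to $u$ are those directed from $u$ to some previously existing roots, so no cycle can appear and no older node acquires a second parent. Species-leaves stay leaves because only current roots are considered as possible children of $u$; even when a species-leaf $u_i$ was a root, it just becomes a child of $u$ without acquiring a child of its own.

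The substantive part is showing $C^+(u)=\{\text{species descending from }u\text{ in }\calF\}$. Structurally the descending species of $u$ are exactly the species in the subtrees of $u$'s newly-assigned children, which by the inductive hypothesis for those children equals $\bigcup_{v\text{ child of }u} C^+(v)$. Algorithm~\ref{algo:Derive-hierarchy} sets $C^+(u)=C(u)\cup\bigcup_{v\text{ child of }u}C^+(v)$ after the reduction loop, so the claim reduces to checking that the reduced $C(u)$ is already contained in $\bigcup_{v\text{ child}}C^+(v)$. Here the partition invariant does the work: every surviving $i\in C(u)$ lies in $C^+(r)$ for a unique current root $r$; the reduction loop erased precisely those $i$ whose host satisfied $C(r)\cap L(u)=\emptyset$, so a surviving $i$'s host satisfies $C(r)\cap L(u)\ne\emptyset$ and is therefore promoted to a child of $u$ by the second loop. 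This closes the extended-cluster claim and simultaneously re-establishes the partition invariant, since the new partition is obtained from the old one by replacing the $C^+(v)$'s of the new children with their union $C^+(u)$.

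The delicate point, and the step I expect to require the most care, is ruling out that a freshly created internal node $u$ ends up with no children at all, which would make $u$ itself a spurious new leaf not identified with a species. This step leverages the hierarchy-friendly guarantee of Lemma~\ref{lem:oneIntersection} together with the specific structure produced by LP-Cleaning: one argues that for some $i\in L(u)$ the unique current root hosting $i$ must have its core intersecting $L(u)$, so that at least one child is assigned in the second loop. Once this is in place, the induction goes through and the lemma follows.
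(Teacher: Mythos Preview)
Your inductive approach and the reduction to showing $C(u)\subseteq\bigcup_{v\text{ child of }u}C^+(v)$ match the paper's proof exactly. Carrying the partition invariant explicitly is a clean touch; the paper instead derives it afterward as Corollary~\ref{cor:partitionExtendedClustersOfRoots}. One small imprecision worth tightening: the removal loop tests $C(v)\cap C(u)=\emptyset$ against the \emph{current}, shrinking $C(u)$, not against $L(u)$. These are equivalent precisely because of your partition invariant (removing $C^+(v')$ for another root $v'$ cannot delete any element of $C(v)$), but you should say so rather than silently replace $C(u)$ by $L(u)$.

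Your ``delicate point'' about a childless $u$ is real, and the paper glosses over it too, simply asserting that the new node ``becomes the parent of some of the existing roots.'' However, your proposed fix via Lemma~\ref{lem:oneIntersection} is not the right tool: that lemma prevents a lower-level LP-cluster from hitting two same-level LP-clusters, which is not the obstruction here. What is actually needed is that $L(u)$ is not wholly contained in $\bigcup_{r:\,C(r)\cap L(u)=\emptyset}\Delta^+(r)$, and establishing this requires size bounds on the $\Delta^+(r)$'s, i.e.\ exactly the inductive inequality~\eqref{ineq:dplus} proved later. That argument yields $|C(u)|\ge 0.95\,|L(u)|>0$ and relies only on the forest structure and the ``$C^+=\text{descending species}$'' half of the present lemma (via Corollary~\ref{cor:intersection-ancestry} and Lemma~\ref{lem:extended_union_of_descendants}), not on the leaf count, so there is no circularity if you fold the two inductions together. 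Alternatively, since a childless $u$ would have $C(u)=C^+(u)=\emptyset$ and remain an isolated root forever, you can simply allow such degenerate nodes and check that none of the downstream corollaries are harmed.
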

\begin{proof}
We prove this inductively based on the order in which the nodes are added in $\mathcal{F}$. The base case for both the claims follows by the initialization of the forest with $|S|$ leaves identified with the species of $S$.

When we insert a node, it becomes the parent of some of the existing roots, therefore the forest structure is preserved.

Next let $u$ be some node in $\mathcal{F}$ and let $v_1,\dots,v_k$ be the children of $u$. Then by construction all these children nodes are added to $\mathcal{F}$ before $u$, and thus by induction argument for each $v_m$ the set of descending species of $v_m$ is exactly the set $C^+(v_m)$. Now we need to prove the same for node $u$. Note that the set of descending species of $u$ is precisely the set $\bigcup_{m\in[k]}C^+(v_m)$. Moreover by construction $C^+(u)=C(u)\cup (\bigcup_{m\in[k]}C^+(v_m))$. Hence to prove the claim we need to show that $C(u)\subseteq \bigcup_{m\in[k]}C^+(v_m)$. For the sake of contradiction let $w\in C(u)$ be a species such that $w\notin \bigcup_{m\in[k]}C^+(v_m)$. As $\calF(u) \setminus \{u\}$ is a forest, there exists a unique node $r$ which is the root node of the tree of $\calF(u) \setminus \{u\}$ that contains $w$. 
Hence again by induction argument $w\in C^+(r)$.
By our assumption, as $r$ is not a child of $u$, $C(u)\cap C(r)=\emptyset$. Thus Algorithm~\ref{algo:Derive-hierarchy} (Line~\ref{line:conditionRemoval}) sets $C(u) \gets C(u)\setminus C^+(r)$ and hence $w\notin C(u)$, which is a contradiction.  
\end{proof}

This simple lemma alone is enough to prove the following corollaries:
\begin{corollary} \label{cor:partitionExtendedClustersOfRoots}
For any $u\in \calF$, the extended-clusters of the root nodes in $\calF(u)$ form a partition of $S$.
\end{corollary}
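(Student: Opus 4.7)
\bigskip
\noindent\textbf{Proof proposal for Corollary~\ref{cor:partitionExtendedClustersOfRoots}.}
The plan is to derive the corollary directly from Lemma~\ref{lem:extended_cluster_descending_species}, which already does all the structural work: it establishes that $\calF(u)$ is a forest of rooted trees whose leaves are in bijection with $S$, and that for every node $v$ in $\calF(u)$ the extended-cluster $C^+(v)$ equals the set of species (leaves) descending from $v$.

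First I would observe that, since $\calF(u)$ is a forest whose leaves are exactly identified with the species in $S$, every species $i \in S$ lies in exactly one tree of $\calF(u)$, and therefore descends from exactly one root of $\calF(u)$. Next I would use the second part of the lemma: for each root $v$ of $\calF(u)$, the extended-cluster $C^+(v)$ coincides with the set of species descending from $v$. Combining these two facts, the family $\{C^+(v) : v \text{ is a root of } \calF(u)\}$ covers $S$ (every species descends from some root) and is pairwise disjoint (no species descends from two distinct roots of a forest). Hence this family is a partition of $S$.

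There is essentially no obstacle here: the corollary is a one-line deduction once Lemma~\ref{lem:extended_cluster_descending_species} is in hand, and I would keep the write-up to a couple of sentences. The only thing worth being careful about is that the statement is about the incrementally-built forest $\calF(u)$ at the moment just after node $u$ is introduced (not only the final forest $\calF$), and the lemma is stated uniformly for all such snapshots, so the argument applies verbatim at every stage of the algorithm.
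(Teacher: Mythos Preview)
Your proposal is correct and matches the paper's approach exactly: the paper's proof is the one-liner ``As $\calF(u)$ is a forest, each species is a descendant of exactly one such root and thus belongs in exactly one such extended cluster,'' which is precisely the deduction you describe from Lemma~\ref{lem:extended_cluster_descending_species}.
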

\begin{proof}
As $\calF(u)$ is a forest, each species is a descendant of exactly one such root and thus belongs in exactly one such extended cluster.
\end{proof}
\begin{corollary}
\label{cor:validOutput}
The output of our algorithm is a sequence of hierarchical partitions of $S$.
\end{corollary}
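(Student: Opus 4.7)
The plan is to verify two properties of the output $P^{(*)} = (P^{(1)}, \ldots, P^{(\ell)})$: (i) each $P^{(t)}$ is a partition of $S$, and (ii) for every $t<\ell$, every set in $P^{(t)}$ is a subset of some set in $P^{(t+1)}$. Property (i) is essentially free: $P^{(t)}$ is defined as the family of extended-clusters $C^+(v)$ of the roots of $\calF$ at the end of iteration $t$, and Corollary~\ref{cor:partitionExtendedClustersOfRoots} tells us these extended-clusters partition $S$.

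For (ii), I would fix $t<\ell$ and pick an arbitrary set $A\in P^{(t)}$, writing $A=C^+(v)$ for some root $v$ of $\calF$ at the end of iteration $t$. Because Algorithm~\ref{algo:Derive-hierarchy} only adds nodes (it never rearranges subtrees once built), $v$ remains present throughout iteration $t+1$, and can lose root status only by becoming a child of some node $u$ with $t(u)=t+1$ that is introduced during iteration $t+1$. If $v$ is still a root at the end of iteration $t+1$, then $A\in P^{(t+1)}$ and the containment is trivial. Otherwise $v$ has such a parent $u$, and I need to track $v$ up to a root at level $t+1$.

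The key sub-claim here is that a level-$(t+1)$ node is never a child of another level-$(t+1)$ node, so that $u$ is itself a root at the end of iteration $t+1$. This follows because distinct level-$(t+1)$ LP-clusters are distinct parts of the partition $Q^{(t+1)}$ and hence disjoint; their core-clusters (subsets of the LP-clusters) are thus also disjoint, so the parent-creation condition $C(v')\cap C(u')\neq\emptyset$ in Line~\ref{line:conditionExtension} of Algorithm~\ref{algo:Derive-hierarchy} can never fire between two level-$(t+1)$ nodes. Given this, $u$ is a root at the end of iteration $t+1$, so $C^+(u)\in P^{(t+1)}$; and by Lemma~\ref{lem:extended_cluster_descending_species}, $C^+(u)$ consists of all species descending from $u$ in $\calF$, in particular every species descending from $v$, so $A=C^+(v)\subseteq C^+(u)$.

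The only real obstacle I foresee is the sub-claim about same-level nodes never becoming ancestors of one another; after that, the hierarchical containment is an immediate consequence of Lemma~\ref{lem:extended_cluster_descending_species}, and the partition property is exactly Corollary~\ref{cor:partitionExtendedClustersOfRoots}.
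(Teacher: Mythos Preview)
Your proof is correct, but it is more elaborate than the paper's. For the partition property you and the paper do the same thing (invoke Corollary~\ref{cor:partitionExtendedClustersOfRoots}). For the hierarchical property, however, the paper simply observes that the forest is built incrementally by adding parents of existing roots and never rearranging or deleting anything; consequently, once two species lie in the same rooted tree they can never be separated again, so if they are in the same part of $P^{(t)}$ they are automatically in the same part of $P^{(t+1)}$. This one-line argument sidesteps entirely the need to track a specific root $v$ up to a root at the next level, and in particular avoids your sub-claim that two level-$(t+1)$ nodes are never in an ancestor--descendant relation. Your sub-claim is true (disjointness of same-level LP-clusters does the job), but it is doing unnecessary work: the paper's ``trees only merge, never split'' observation already gives the containment $P^{(t)}$-part $\subseteq$ $P^{(t+1)}$-part without ever identifying which $P^{(t+1)}$-part it is.
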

\begin{proof}
By Corollary~\ref{cor:partitionExtendedClustersOfRoots} the output of the algorithm is a sequence of partitions of $S$.
To see that the output partitions are hierarchical, notice that if two species are in the same rooted tree at some point in the algorithm, then they are never separated as we only add nodes in the forest.
\end{proof}
\begin{corollary}
\label{cor:disjointRemovalExtension}
For any node $u\in \calF$, the species removed from its LP-cluster and the species inserted in its core cluster are disjoint, $\Delta^-(u) \cap \Delta^+(u) = \emptyset$.
\end{corollary}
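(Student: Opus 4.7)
The goal is to show that any species removed from the LP-cluster of $u$ while forming its core-cluster is \emph{not} subsequently reintroduced into $u$'s extended-cluster. My approach relies on the fact that at the moment $u$ is processed, the extended-clusters of the roots of $\calF(u)\setminus\{u\}$ form a partition of $S$ (Corollary~\ref{cor:partitionExtendedClustersOfRoots}), together with a simple monotonicity observation about how $C(u)$ evolves during the two inner loops of Algorithm~\ref{algo:Derive-hierarchy}.

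The first step is to fix $i \in \Delta^-(u) = L(u)\setminus C(u)$ and identify the cause of its removal. By the structure of the removal loop, $i$ was deleted from $C(u)$ precisely because $i \in C^+(v)$ for some root $v$ (at the time of removal) with $C(v)\cap C(u)=\emptyset$. Invoking Corollary~\ref{cor:partitionExtendedClustersOfRoots} for $\calF(u)\setminus\{u\}$, this root $v$ is \emph{unique}: $i$ lies in the extended-cluster of exactly one root of that forest, and the set of roots other than $u$ is not modified during the removal loop (the parenting step happens only in the extension loop).

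The second step is the monotonicity observation: during the removal loop, $C(u)$ only shrinks, and the core-clusters $C(v)$ of the other roots do not change. Hence, once $C(v)\cap C(u)=\emptyset$ is established at the time of $i$'s removal, it continues to hold through the end of the removal loop and, a fortiori, at the beginning of the extension loop.

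Finally, I combine these two observations to rule out $i\in\Delta^+(u)$. The only way $i$ could end up in $C^+(u)\setminus C(u)$ is via the extension loop adding $C^+(v')$ for some root $v'$ with $C(v')\cap C(u)\ne\emptyset$. By the partition property, any such $v'$ with $i\in C^+(v')$ must coincide with the unique root $v$ identified above, but $v$ fails the extension condition by monotonicity. Therefore $i\notin C^+(u)\setminus C(u)=\Delta^+(u)$, proving $\Delta^-(u)\cap\Delta^+(u)=\emptyset$. I do not foresee a real obstacle here; the only point that requires a bit of care is the monotonicity argument, which must be phrased so that it is insensitive to the order in which the roots are processed in the removal loop.
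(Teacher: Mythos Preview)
Your proposal is correct and follows essentially the same argument as the paper: pick $i\in\Delta^-(u)\cap\Delta^+(u)$, use Corollary~\ref{cor:partitionExtendedClustersOfRoots} to identify the unique root $v$ of $\calF(u)\setminus\{u\}$ with $i\in C^+(v)$, and derive the contradiction that $C(v)\cap C(u)$ must be both empty (from the removal loop) and nonempty (from the extension loop). The only difference is that you make explicit the monotonicity of $C(u)$ during the removal loop, which the paper leaves implicit; this extra care is harmless and arguably clarifies why the condition $C(v)\cap C(u)=\emptyset$ persists to the final $C(u)$.
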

\begin{proof}
For the sake of contradiction let $i \in \Delta^-(u) \cap \Delta^+(u)$. Since the extended clusters of root nodes in $\calF(u)\setminus \{u\}$ form a partition, let $v$ be the unique such root for which $i\in C^+(v)$.
Now as $i \in \Delta^-(u)$, $C(v)\cap C(u)=\emptyset$ (Line~\ref{line:conditionRemoval} of Algorithm~\ref{algo:Derive-hierarchy}). But again $i \in \Delta^+(u)$ implies $C(v)\cap C(u)\neq \emptyset$ (Line~\ref{line:conditionExtension} of Algorithm~\ref{algo:Derive-hierarchy}) and both these can never be satisfied together.
\end{proof}
\begin{corollary}
\label{cor:intersection-ancestry}
If two nodes $u,v\in \calF$ do not have an ancestry-relationship, then their extended clusters do not intersect.
\end{corollary}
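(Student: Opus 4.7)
The plan is to derive this corollary as an immediate consequence of Lemma~\ref{lem:extended_cluster_descending_species}, which characterizes $C^+(u)$ as exactly the set of species descending from $u$ in the forest $\calF$. Under that characterization, the statement reduces to a purely graph-theoretic fact about rooted forests: two nodes whose subtrees share a leaf must have an ancestry relationship.

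More concretely, first I would invoke Lemma~\ref{lem:extended_cluster_descending_species} to replace $C^+(u) \cap C^+(v)$ by the intersection of the descendant-sets of $u$ and $v$ (restricting to leaves, since leaves are identified with species of $S$). Then I would argue by contradiction: suppose some leaf $w \in S$ descends from both $u$ and $v$. Because $\calF$ is a forest (again by Lemma~\ref{lem:extended_cluster_descending_species}), the leaf $w$ lies in a unique rooted tree, and the path from $w$ to its root is unique. Both $u$ and $v$ are ancestors of $w$, so both lie on this path; hence one of them is an ancestor of the other, contradicting the assumption that $u$ and $v$ have no ancestry relationship.

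There is essentially no obstacle here: once Lemma~\ref{lem:extended_cluster_descending_species} has been established, the corollary is a one-line consequence of the definition of a rooted forest, and no additional properties of the Derive-Hierarchy algorithm or the LP-Cleaning are needed. The proof can be written in two or three sentences.
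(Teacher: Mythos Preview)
Your proposal is correct and matches the paper's own proof exactly: the paper also argues (in one sentence) that an intersection of extended clusters yields a common descending species via Lemma~\ref{lem:extended_cluster_descending_species}, which in a forest forces an ancestry relationship. Your write-up is simply a slightly more detailed version of the same argument.
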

\begin{proof}
If their extended clusters intersected, then they would have a descending species in common, which implies an ancestry-relationship.
\end{proof}

We also need the following result.

\begin{lemma}
\label{lem:extended_union_of_descendants}
For any node $u\in \calF$, its extended-cluster is equal to the union of the core clusters of all descendant nodes $v$ of $u$.
\end{lemma}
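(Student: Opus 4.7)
The natural approach is induction on the order in which nodes are added to $\calF$ (equivalently, bottom-up induction on the tree structure of the final forest, using Lemma~\ref{lem:extended_cluster_descending_species} to guarantee that the descendants of a node are exactly those created before it and attached to its subtree). Here I interpret ``descendant of $u$'' as including $u$ itself, since otherwise the statement would fail on leaves: by construction $C^+(u_i) = C(u_i) = \{i\}$ for every leaf $u_i$, which supplies the base case of the induction.

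For the inductive step, consider an internal node $u$ created at some iteration, and let $v_1, \ldots, v_k$ be the children assigned to $u$ during the loop on Line~\ref{line:conditionExtension} of Algorithm~\ref{algo:Derive-hierarchy}. Reading the algorithm directly, after $u$ is processed we have
\[
C^+(u) \;=\; C(u) \,\cup\, \bigcup_{m=1}^{k} C^+(v_m).
\]
Each $v_m$ was created strictly before $u$, so the inductive hypothesis applies: $C^+(v_m)$ equals the union of $C(w)$ over all descendants $w$ of $v_m$ (including $v_m$). Since the descendants of $u$ are exactly $u$ together with the descendants of its children (a consequence of $\calF$ being a forest, guaranteed by Lemma~\ref{lem:extended_cluster_descending_species}), substituting the inductive hypothesis into the displayed equation yields
\[
C^+(u) \;=\; C(u) \,\cup\, \bigcup_{m=1}^{k} \bigcup_{\substack{w \text{ descendant}\\ \text{of } v_m}} C(w) \;=\; \bigcup_{\substack{w \text{ descendant}\\ \text{of } u}} C(w),
\]
which is exactly the claim.

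The only subtle point to check is that $C^+(u)$ is not further modified after $u$ is created: once $u$ exists in $\calF$, later iterations either make $u$ a child of some newer node (which does not change $C(u)$ or $C^+(u)$ themselves, only extends the cluster of the new parent), or leave $u$ untouched. This monotonicity is built into the pseudocode and is also implicit in the comment preceding the lemma (``the subtree rooted at any specific node is never modified''). So there is no real obstacle here; the only thing to be careful about is the convention that $u$ counts as its own descendant, which is needed both for the leaf base case and to justify the final equality.
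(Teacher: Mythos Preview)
Your proof is correct and follows essentially the same approach as the paper: both argue by induction on the order of node creation, use the leaves as the base case, write $C^+(u)=C(u)\cup\bigcup_m C^+(v_m)$ for the children $v_m$, and substitute the inductive hypothesis. Your proposal is in fact slightly more careful than the paper's, since you explicitly address the convention that $u$ is its own descendant and the fact that $C^+(u)$ is never modified after $u$ is created.
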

\begin{proof}
We prove this inductively. As a base-case, the claim trivially holds for the $|S|$ initial leaves. For an internal node $u$, let $v_1,\dots,v_k$ be the children of $u$ and let $D(u)$ be the descendant nodes of $u$. Then $D(u)=\cup_{m\in[k]}D(v_m)$. Also, by induction, for each $v_m$, $C^+(v_m)=\cup_{w\in D(v_m)}C(w)$. Now as $C^+(u)=C(u)\cup (\cup_{m\in[k]}C^+(v_m))$ we have $C^+(u)= C(u)\cup (\cup_{w\in D(u)}C(w))$. which proves our claim.
\end{proof}

\subsection{Managing removals and extensions}
Using the developed toolkit of structural results, we are ready to show that for any node $u\in \calF$, all three of the LP-cluster $L(u)$, the core-cluster $C(u)$ and the extended-cluster $C^+(u)$ are similar; more than that, we show lower bounds of the LP cost related to $\Delta^-(u) = L(u)\setminus C(u)$ and $\Delta^+(u) = C^+(u) \setminus C(u)$.

In particular, we claim that the following inequality holds for every $u\in \calF$.

\begin{align}
    |\Delta^+(u)| \le \dPlus |C(u)| \label{ineq:dplus}
\end{align}

We prove this claim inductively, based on the order in which nodes are added in $\calF$. As a base case, we initially create a node $u_i$ for each species $i\in S$ with $C(u_i)=C^+(u_i)=\{i\}$, meaning that $\Delta^+(u_i) = \emptyset$.

For any other node $u$, we argue about the size of its extended-cluster $C^+(u)$ in relation with the core-clusters of its descendants, as suggested by Lemma~\ref{lem:extended_union_of_descendants}. We now partition the descendants of $u$ in three parts and argue about each one of them.

\begin{figure}
    \centering
    \includegraphics[width=400pt, height=200pt]{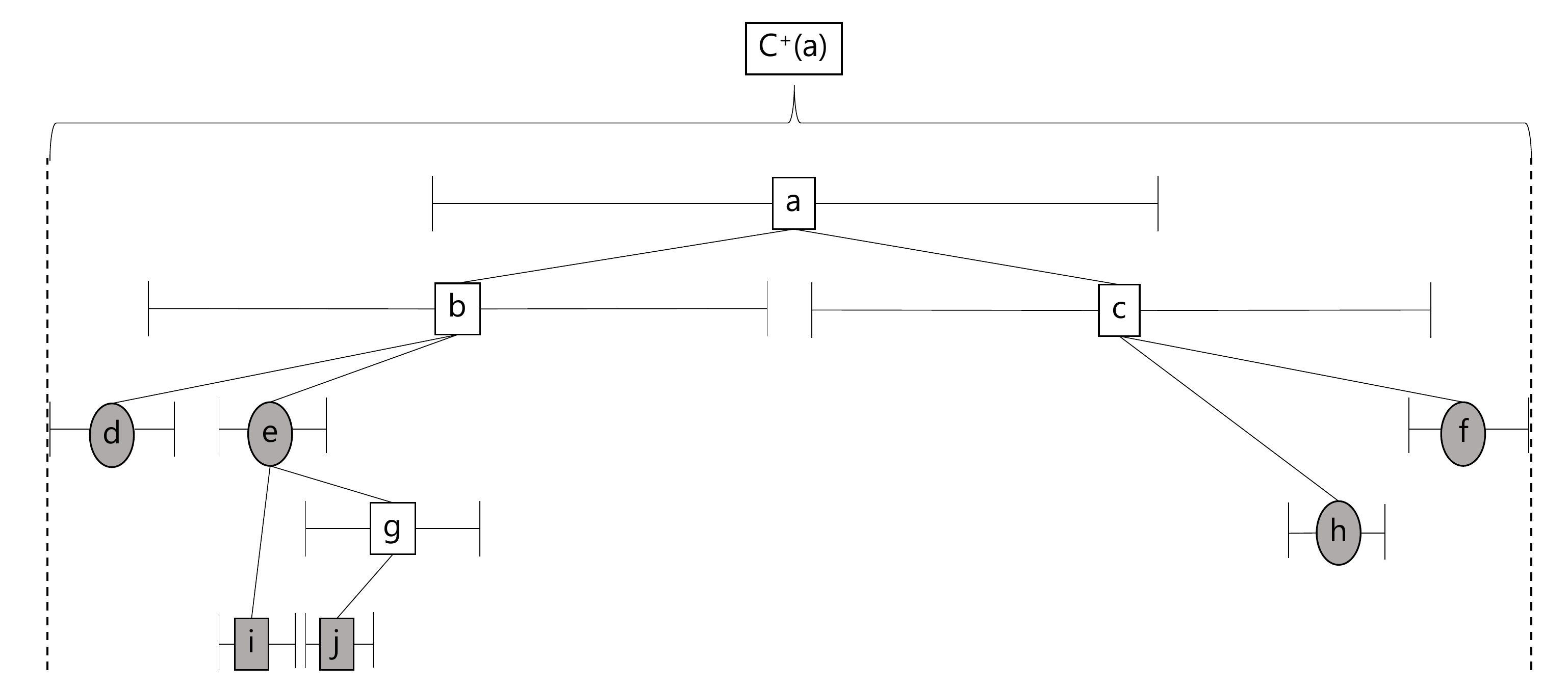}
    \caption{Part of the forest $\calF(\text{a})$. Intervals around nodes denote core-clusters (two core-clusters intersect if a vertical line intersects both); colored nodes $\{d,e,f,h,i,j\}$ have core-clusters not intersecting the core-cluster of $\text{a}$. In particular, the circle-shaped colored nodes are $\text{a}$'s \TNI{} (denoted by $J=\{d,e,f,h\}$). Their proper descendants define $J^+=\{g,i,j\}$. $R=\{b,c\}$ contains all other proper descendants of $\text{a}$.}
    \label{fig:highestNonintersectingDescendants}
\end{figure}

Informally, for a node $u$ we define its \TNI~$J$ as the set of highest level descendant nodes in $\calF$ whose core-clusters do not intersect $C(u)$ (the reader is encouraged to consult Figure~\ref{fig:highestNonintersectingDescendants} before proceeding). More formally, using $v \prec_{\calF} u$ to denote that $v$ is a descendant of $u$ in forest $\calF$, we have:
$$J = \left\{v\in \calF \left| \begin{array}{l}
v \prec_{\calF} u\\ C(u)\cap C(v) = \emptyset\\ C(u)\cap C(w)\not= \emptyset, \forall w \text{ s.t. } v\prec_{\calF}w\prec_{\calF}u \end{array} \right. \right\}$$

Notice that, by definition, if two nodes $v,w$ belong in $u$'s \TNI, then none is an ancestor of the other. Therefore $u$'s \TNI~$J$ naturally partitions the proper descendants of $u$ in three parts: $J$ itself, the set $J^{+}$ of proper descendants of nodes in $J$, and $R$ containing the rest of the proper descendants of $u$ (i.e., the proper descendants of $u$ that are not descendants of any node in $J$). We also define sets of species related to these sets:
\begin{alignat}{2}
    S_{J}  &= \bigcup_{v\in J}     &C(v)  & \label{eq:sj}                                           \\
    S_{J^+}&= \bigcup_{v\in J^{+}} &C(v)  & \quad \setminus (C(u)\cup S_J) \notag            \\
    S_{R}  &= \bigcup_{v\in R}     &C(v)  & \quad \setminus (C(u)\cup S_J\cup S_{J^+})  \notag
\end{alignat}

The apparent asymmetry of not excluding $C(u)$ from $S_J$ follows from the definition of $u$'s \TNI~$J$; the core-clusters of nodes in $J$ are disjoint from $C(u)$, meaning $S_J$ would be the same even if we excluded species in $C(u)$. Note that this is not the case for the core-clusters in $J^{+}$ as proper descendants of nodes in $J$ might still intersect $C(u)$, as in Figure~\ref{fig:highestNonintersectingDescendants}.

Notice that by Lemma~\ref{lem:extended_union_of_descendants} we have $C^+(u) = C(u) \cup (S_J\cup S_{J^+} \cup S_R)$, thus
\begin{align}
    \Delta^+(u) = S_J \cup S_{J^+} \cup S_R \label{eq:deltaPlus}
\end{align}

If $v\in J \cup J^+ \cup R$, and its core-cluster does not intersect the core-cluster of $u$, then by definition of $J$ we have that $v$ is in descendants (not necessarily proper) of $J$. Therefore $v\in J\cup J^+$, meaning that nodes in $R$ have core-clusters that intersect $C(u)$. Furthermore, by definition, each node $v$ in $u$'s \TNI~has a parent whose core-cluster intersects $C(u)$. Therefore, for any species $i\in C(u)$ and any species $j\in S_J \cup S_R$, by Lemma~\ref{lem:threeDiameter} we have that their LP-distance is small, that is 
\begin{align}
    x_{i,j}^{(t(u))} < \threeDiameter, \forall i\in C(u), j\in S_J\cup S_R \label{ineq:small_distance_core_sJ_sR}
\end{align}

Species in $S_J\cup S_R$ are not in $C(u)$, and so by Corollary~\ref{cor:disjointRemovalExtension} they are not in $L(u)$ as they belong in $\Delta^+(u)$. Then Lemma~\ref{lem:outsideL} gives
\begin{align}
    |S_J \cup S_R| < \outsideL |L(u)| \label{ineq:plusBlueGray}
\end{align}

We are left to argue about species in $J^+$, that is in core-clusters of the descendants of $u$'s \TNI. By Lemma~\ref{lem:extended_union_of_descendants}, these species all belong in the extended-clusters of $u$'s \TNI, $\bigcup_{v\in J} C^{+}(v) \supseteq S_J \bigcup S_{J^+}$. By Corollary~\ref{cor:intersection-ancestry} these extended-clusters are disjoint, thus $S_{J^+} \subseteq \bigcup_{v\in J} \Delta^+(v)$. By the inductive hypothesis~(\ref{ineq:dplus}) we get 

\begin{align}
    |S_{J^+}| \le \dPlus |S_J| \label{eq:sizeJplus}
\end{align}

Therefore, by Inequality~(\ref{ineq:plusBlueGray}) we get that 
\begin{align}
    |S_{J^+}| < \outsideL \cdot \dPlus |L(u)|   = \outsideL \cdot \dPlus |C(u)\cup \Delta^-(u)|\label{ineq:plusRed}
\end{align}

By (\ref{ineq:plusBlueGray}) and (\ref{ineq:plusRed}) we bound the size of $\Delta^+(u)$:

\begin{align}
|\Delta^+(u)| < \dPlusOne \cdot \outsideL |L(u)| \label{ineq:extendedByLP}
\end{align}

We are only left with bounding $|L(u)|$. For this we prove that 

\begin{align}
    |\Delta^-(u)| < \dLC |C(u)| \label{ineq:LPbyCore}
\end{align}

which, combined with $(\ref{ineq:extendedByLP})$, proves our initial claim.

\begin{figure}
    \centering
    \includegraphics[width=\textwidth, trim=0.0cm 0.0cm 0.0cm 5.5cm]{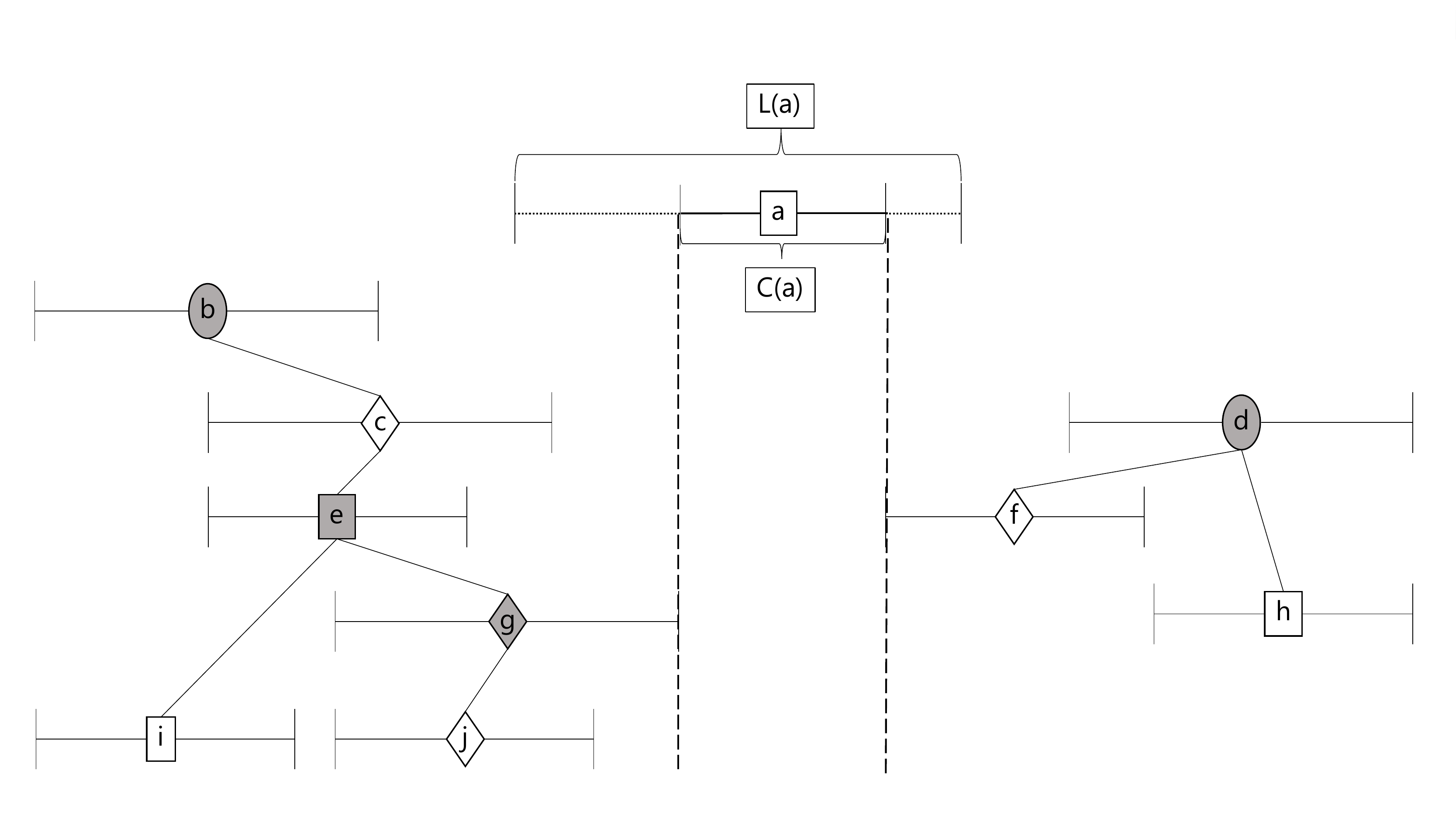}
    \caption{Part of the forest $\calF(\text{a})$. Intervals around nodes denote core-clusters (two core-clusters intersect if a vertical line intersects both); for node $\text{a}$ we also denote its LP-cluster by horizontal dotted lines. All other depicted nodes are not descendants of $\text{a}$. The diamond-shaped nodes $\{c,f,g,j\}$ are contained in $M$, colored nodes $\{b,d,e,g\}$ are contained in $K'$, and circle-shaped nodes $\{b,d\}$ are contained in $K\subseteq K'$.
    }
    \label{fig:highestNonintersectingNonDescendants}
\end{figure}

Before proving $(\ref{ineq:LPbyCore})$ we make some definitions (see Figure~\ref{fig:highestNonintersectingNonDescendants}). 
Roughly speaking, we want to identify an appropriate set $K$ of nodes such that the union of their extended-clusters both contains $\Delta^-(u)$ and its cardinality is reasonably boundable.
In fact, the nodes in $K$ are descendants of roots of $\calF(u)\setminus\{u\}$ that satisfy the condition in Line~\ref{line:conditionRemoval} of Algorithm~\ref{algo:Derive-hierarchy} (i.e., nodes $v$ such that $C(v) \cap C(u) = \emptyset$, and $C^{+}(v) \cap L(u) \not= \emptyset$). 

We now give a formal constructive definition of the set $K$.
Let $M$ be the set containing all the non-descendants of $u$  
at level at most $t(u)$ whose core-clusters intersect $L(u)$. We define $K'$ to be the set of parents of the nodes in $M$. Finally, $K$ is obtained from $K'$ by removing the nodes who have a proper ancestor in $K'$. Notice by Corollary~\ref{cor:intersection-ancestry}, their extended-clusters are disjoint.
We also define sets of species associated with $K$ as follows.
\begin{alignat}{2}
    S_K     &= \bigcup_{v\in K} C(v) \label{ineq:sk}\\
    S_{K^+} &= \bigcup_{v\in K} C^+(v) \notag
\end{alignat}



Note $\Delta^-(u)\subseteq S_{K^+}$.
Next we claim for each node $v\in K$, $C(v)\cap L(u)=\emptyset$.
Now if we can prove this claim then it implies $S_K\cap L(u)=\emptyset$ and thus we can write $\Delta^-(u)\subseteq S_{K^+}\setminus S_K$.
Next we prove the claim.
Notice that for any node $v \in M$, $v$ is not a descendant of $u$ but $C(v)\cap L(u)\neq \emptyset$; thus there always exists a node $w\in K$ such that $w$ is an ancestor of $v$ and $C(w)\cap L(u)=\emptyset$. 

Now, for the sake of contradiction, assume there exists a node $w\in K$ such that $C(w)\cap L(u)\neq \emptyset$. But then $w\in M$, and following the previous argument there exists a node $w'\in K$ such that $w'$ is an ancestor of $w$ and $C(w')\cap L(u)=\emptyset$. This is a contradiction, as by construction both $w$ and $w'$ cannot be present in $K$.


Furthermore, notice that no node $w\in K$ is at level $t(w)= t(u)$, as that would imply a child $w'\in M$ of $w$; but $C(w')$ intersects $C(w)$ (and therefore $L(w)$) as $w'$ is a child of $w$, and $C(w')$ intersects $L(u)$ since $w\in M$. This is a contradiction, by Lemma~\ref{lem:oneIntersection}.

We conclude that $K$ contains nodes at level at most $t(u)-1$, which allows us to apply the inductive hypothesis $|\Delta^+(w)| \le \dPlus |C(w)|$ for nodes $w\in K$. Thus, from $\Delta^-(u) \subseteq S_{K^+} \setminus S_{K}$ we get 

\begin{align}
|\Delta^-(u)| \le \dPlus |S_K| \label{eq:sizeDeltaMinusSk}
\end{align}

Furthermore, all nodes in $K$ have a child whose core-cluster intersects $C(u)$, and so by Lemma~\ref{lem:threeDiameter} the $LP$-distance between a species $i\in L(u)$ and a species $j\in S_K$ is small, $x_{i,j}^{t(u)} < \threeDiameter$. By Lemma~\ref{lem:outsideL} we get $|S_K| < \outsideL |L(u)|$, which gives us $|\Delta^-(u)| \le \outsideL \cdot \dPlus |L(u)|$.

By the definition of $\Delta^-(u)=L(u)\setminus C(u)$ we get $|C(u)| \ge (1-\outsideL \cdot \dPlus)|L(u)|$, by which
\begin{align}
    |\Delta^-(u)| \le \frac{\outsideL \cdot \dPlus}{1-\outsideL \cdot \dPlus} |C(u)| \label{ineq:deltaMinus}
\end{align}
which concludes the proof of claim~(\ref{ineq:LPbyCore}),
and as previously argued, the proof of claim~(\ref{ineq:dplus}).

As a byproduct of this analysis, we can also give some lower bounds on the LP cost. 

\begin{lemma}
\label{lem:lowerBoundDeletion}
Given a node $u\in \calF$, $cost_{I(u)}^{(t(u))} = \Omega(\delta^{(t(u))} |L(u)| |\Delta^-(u)|)$.
\end{lemma}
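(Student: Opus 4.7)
The plan is to reuse the combinatorial witnesses from the preceding bound on $|\Delta^-(u)|$, but now translate them into concrete LP cost charged to pairs touching $I(u)$. Recall that the earlier analysis already established $\Delta^-(u)\subseteq S_{K^+}\setminus S_K$, that the extended-clusters of distinct nodes in $K$ are pairwise disjoint (Corollary~\ref{cor:intersection-ancestry}), that every $w\in K$ has level at most $t(u)-1$, and that $S_K\cap L(u)=\emptyset$. Applying the inductive hypothesis~(\ref{ineq:dplus}) to each $w\in K$ and using disjointness of extended-clusters yields
\[
|S_{K^+}|-|S_K|=\sum_{w\in K}|\Delta^+(w)|\le \dPlus\,|S_K|,
\]
so $|\Delta^-(u)|\le \dPlus\,|S_K|$. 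It therefore suffices to show $cost_{I(u)}^{(t(u))}=\Omega(\delta^{(t(u))}\,|L(u)|\,|S_K|)$.

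The core step is to charge $\Omega(\delta^{(t(u))}\,|L(u)|)$ of LP cost (inside $cost_{I(u)}^{(t(u))}$) to each species $i'\in S_K$. Since $S_K\cap L(u)=\emptyset$, each such $i'$ is either in $I(u)\setminus L(u)$ or outside $I(u)$. In the first case, $i'$ was discarded by LP-Cleaning, so Lemma~\ref{lem:costLPcleaning} gives $cost_{i'}^{(t(u))}=\Omega(\delta^{(t(u))}\,|I(u)|)=\Omega(\delta^{(t(u))}\,|L(u)|)$, and because $i'\in I(u)$, all of $cost_{i'}^{(t(u))}$ is included in $cost_{I(u)}^{(t(u))}$. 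In the second case, I would apply Lemma~\ref{lem:threeDiameter} to the triple $(u,v,w)$, where $w\in K$ is the unique node with $i'\in C(w)$ and $v\in M$ is a child of $w$ (which exists because $K\subseteq K'$): $L(v)$ intersects $L(w)$ (child--parent via Line~\ref{line:conditionExtension}) and intersects $L(u)$ (by $v\in M$), so $x_{i,i'}^{(t(u))}<\threeDiameter$ for every $i\in L(u)$. As $i\in I(u)$ and $i'\notin I(u)$, the pair lies outside $\calE(Q^{(t(u))})$, giving $cost_{i,i'}^{(t(u))}>(1-\threeDiameter)\,\delta^{(t(u))}$; summing over $i\in L(u)$ yields $\Omega(\delta^{(t(u))}\,|L(u)|)$ of cost, again contained in $cost_{I(u)}^{(t(u))}$.

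Summing these per-species contributions over all $i'\in S_K$ gives the target bound
\[
cost_{I(u)}^{(t(u))}=\Omega(\delta^{(t(u))}\,|L(u)|\,|S_K|)=\Omega(\delta^{(t(u))}\,|L(u)|\,|\Delta^-(u)|).
\]
The main delicate point, and the only real obstacle, is the final accounting: a pair with both endpoints in $S_K\cap I(u)$ is counted once by $cost_{I(u)}^{(t(u))}$ but could be attributed twice (once through each endpoint) in the Case~1 argument above. This overcounts by at most a factor of $2$, which is absorbed into the $\Omega$, and does not affect the asymptotic bound.
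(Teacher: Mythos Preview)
Your proof is correct and follows essentially the same approach as the paper: both arguments charge each species in $S_K$ with $\Omega(\delta^{(t(u))}|L(u)|)$ of LP cost by splitting into the cases $j\in I(u)\setminus L(u)$ (invoking Lemma~\ref{lem:costLPcleaning}) and $j\notin I(u)$ (invoking Lemma~\ref{lem:threeDiameter} via a child in $M$), and then conclude via~(\ref{eq:sizeDeltaMinusSk}). Your write-up is in fact slightly more careful than the paper's, since you make explicit the triple $(u,v,w)$ used in Lemma~\ref{lem:threeDiameter} and address the factor-$2$ overcounting from pairs with both endpoints in $S_K\cap I(u)$, which the paper leaves implicit.
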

 \begin{proof}
Fix a $j\in S_K$, as defined in \req{ineq:sk}.
We have shown that for all $i\in L(u)$, the LP-distance with $j$ is small, $x_{i,j}^{(t(u))}<\threeDiameter$.
If $j\in S_K$ is not in the input-cluster $I(u)$, then $cost_{i,j}^{(t(u))} = \delta^{(t(u))} (1-x_{i,j}^{(t(u))}) > \delta^{(t(u))}(1-\threeDiameter)$ for each $\{i,j\}$ pair with $i\in L(u)$.
Else, it was removed from the input-cluster in the LP-Cleaning step, $cost_{j}^{(t(u))}=\Omega(\delta^{(t(u))}|I(u)|)$ by Lemma~\ref{lem:costLPcleaning}.

By the algorithm, $I(u)\supseteq L(u)$, so summing up these costs gives $cost_{I(u)}^{(t(u))} = \Omega(\delta^{(t(u))}|L(u)| |S_K|)$ which is $\Omega(\delta^{(t(u))}|L(u)| |\Delta^-(u)|)$ by $(\ref{eq:sizeDeltaMinusSk})$.
\end{proof}

\begin{lemma}
\label{lem:lowerBoundExtension}
Given a node $u\in \calF$, $cost_{I(u)}^{(t(u))} = \Omega(\delta^{(t(u))}|C(u)| |\Delta^+(u)|)$.
\end{lemma}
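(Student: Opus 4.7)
\medskip

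\noindent\textbf{Proof plan for Lemma~\ref{lem:lowerBoundExtension}.} The plan is to reuse the decomposition of $\Delta^+(u)$ that was already developed in the proof of \req{ineq:dplus}, and then charge each species of the ``non-descendant-sibling'' type directly to the LP cost. Recall from \req{eq:deltaPlus} that $\Delta^+(u)=S_J\cup S_{J^+}\cup S_R$ where $J$ is $u$'s \TNI{} and, by \req{eq:sizeJplus}, $|S_{J^+}|\le \dPlus|S_J|$. Hence
\begin{equation*}
|S_J\cup S_R|\;=\;|S_J|+|S_R|\;\geq\;\frac{|\Delta^+(u)|}{1+\dPlus}\;=\;\Omega(|\Delta^+(u)|).
\end{equation*}
So it suffices to show that every $j\in S_J\cup S_R$ can be charged a contribution of $\Omega(\delta^{(t(u))}|C(u)|)$ in $cost_{I(u)}^{(t(u))}$, and that the charges from different $j$'s can be combined without double counting up to a constant factor.

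The main tool is \req{ineq:small_distance_core_sJ_sR}: for every $i\in C(u)$ and every $j\in S_J\cup S_R$ we have $x_{i,j}^{(t(u))}<\threeDiameter$. I would split the species $j\in S_J\cup S_R$ into two disjoint sets $A=(S_J\cup S_R)\setminus I(u)$ and $B=(S_J\cup S_R)\cap I(u)$ (note $j\notin L(u)$ since $\Delta^+(u)\cap L(u)=\emptyset$ by Corollary~\ref{cor:disjointRemovalExtension}). For $j\in A$, each pair $\{i,j\}$ with $i\in C(u)$ is outside $\calE(Q^{(t(u))})$, so
\begin{equation*}
cost_{i,j}^{(t(u))}\;=\;\delta^{(t(u))}(1-x_{i,j}^{(t(u))})\;>\;\delta^{(t(u))}(1-\threeDiameter),
\end{equation*}
which sums over $i\in C(u)$ to $\Omega(\delta^{(t(u))}|C(u)|)$ per $j$. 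These pairs are incident to $I(u)$ (via $i\in C(u)\subseteq I(u)$), pairwise disjoint across different $j\in A$, and disjoint from any pair both of whose endpoints lie in $I(u)$. For $j\in B$, the species $j$ lies in $I(u)\setminus L(u)$, so it was discarded by LP-Cleaning, and Lemma~\ref{lem:costLPcleaning} gives $cost_j^{(t(u))}=\Omega(\delta^{(t(u))}|I(u)|)=\Omega(\delta^{(t(u))}|C(u)|)$ since $|I(u)|\geq|L(u)|\geq|C(u)|$.

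Combining, I get
\begin{equation*}
cost_{I(u)}^{(t(u))}\;\geq\;\sum_{j\in A}\sum_{i\in C(u)}cost_{i,j}^{(t(u))}\;+\;\tfrac{1}{2}\sum_{j\in B}cost_j^{(t(u))}\;=\;\Omega\!\left(\delta^{(t(u))}|C(u)|(|A|+|B|)\right),
\end{equation*}
where the factor $1/2$ in the $B$ sum accounts for pairs whose both endpoints lie in $B$ being counted twice (all other pairs counted in $cost_j^{(t(u))}$ for $j\in B$ are incident to $I(u)$ and counted once in $cost_{I(u)}^{(t(u))}$, and these pairs are disjoint from the $A$-pairs, which have an endpoint outside $I(u)$). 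Finally, using $|A|+|B|=|S_J\cup S_R|=\Omega(|\Delta^+(u)|)$, I conclude the claimed bound.

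The only mildly delicate step is the bookkeeping that ensures the $A$-charges and the $B$-charges are summed without double counting; this is handled by the observation above that $A$-pairs have an endpoint in $S\setminus I(u)$ whereas $B$-pairs have both endpoints in $I(u)$ (or go to the same $A$-endpoint is impossible because $A\cap I(u)=\emptyset$). Everything else is a direct application of inequalities already proved in this section.
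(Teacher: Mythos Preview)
Your proof is correct and follows essentially the same approach as the paper: decompose $\Delta^+(u)=S_J\cup S_{J^+}\cup S_R$, use \req{eq:sizeJplus} to discard $S_{J^+}$ up to a constant, then split $S_J\cup S_R$ according to membership in $I(u)$ and charge via \req{ineq:small_distance_core_sJ_sR} or Lemma~\ref{lem:costLPcleaning} respectively. The paper's version is terser (it just says ``summing these costs proves our claim''), whereas you spell out the double-counting bookkeeping with the $A$/$B$ split and the factor $1/2$; this extra care is correct and arguably clearer, but the underlying argument is identical.
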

\begin{proof}
Let $S_J, S_{J^+}, S_R$ be defined as in \req{eq:sj}. It holds that $\Delta^+(u) = S_J \cup S_{J^+} \cup S_R$ by (\ref{eq:deltaPlus}) and these three sets are pairwise disjoint by definition.
By \req{eq:sizeJplus}, the size of $S_{J^+}$ is small compared to $|S_{J}|$ which implies that $|\Delta^+(u)| = \Omega(|S_J \cup S_R|)$.
Furthermore, by \req{ineq:small_distance_core_sJ_sR}, for any $i\in C(u), j\in S_J\cup S_R$, we have that their LP-distance is small, that is $x_{i,j}^{(t)}<\threeDiameter$. 

We fix such a $j \in S_J \cup S_R$, therefore $j\not \in C(u)$.
If $j\in S_J \cup S_R$ is not in the input-cluster $I(u)$, then $cost_{i,j}^{(t(u))} = \delta^{(t(u))}(1-x_{i,j}^{(t(u))}) > \delta^{(t(u))}(1-\threeDiameter)$ for each $\{i,j\}$ pair with $i\in C(u)$.
Else, $j\in I(u)$, but $j\not\in L(u)$.
That is because, $j$ is not in $C(u)$, and if it was in $L(u)$ then it would contradict Corollary~\ref{cor:disjointRemovalExtension}.
Therefore $j$ was removed from the input-cluster in the LP-Cleaning step (Line~\ref{line:outsideA} of Algorithm~\ref{algo:lp-cleaning}),
and $cost_j^{(t(u))} = \Omega(\delta^{(t(u))}|I(u)|)$ by Lemma~\ref{lem:costLPcleaning}.
Summing these costs proves our claim.
\end{proof}

\subsection{Approximation factor}
In this section we prove that Algorithm~\ref{algo:main-algorithm} is an $\calO(1)$ approximation of the LP cost.

We first make some definitions. Let $t\in [\ell]$. An input-cluster $C_I\in Q^{(t)}$ is \StrongInput{} if there exists a level-$t$ node $u\in \calF$ such that $I(u)=C_I$. Similarly, a part $P$ of the output partition $P^{(t)}$ is \StrongPart{} if there exists a level-$t$ node $u\in \calF$ such that $C^+(u)=P$. In both cases we say that $u$ is the corresponding node. We characterize an input-cluster as \WeakInput{} if it is not \StrongInput{}, and similarly a part of the output partition $P^{(t)}$ as \WeakPart{} if it is not \StrongPart{}.

We start with upper bounding the cost of Algorithm~\ref{algo:main-algorithm}. The upper bound is related to the input-clusters (distinguishing between \StrongInput{} and \WeakInput{}) and the parts of the output partitions (again distinguishing between \StrongPart{} and \WeakPart{}).
Informally, for \WeakInput{} input-clusters and \WeakPart{} parts, the cost of our algorithm is proportional to the sum of squares of their size. For a \StrongInput{} input-cluster with corresponding node $u$, the cost of our algorithm is proportional to its size times the number of species of the input-cluster that did not end up in $u$'s core-cluster. For a \StrongPart{} part of the output partitions, the cost of our algorithm is proportional to its size times the number of its species that did not end up in $u$'s core-cluster.

\begin{lemma}
\label{lem:upperBoundAlgorithm}
Suppose we are given a \NewProblem{} instance $S,Q^{(*)},\delta^{(*)}$ and LP-distances $x^{(*)}$. Then the cost of the output of Algorithm~\ref{algo:main} at level $t$ is at most 
\[\delta^{(t)} \left( \sum_{\substack{C_I \in Q^{(t)}\\C_I \text{ is \WeakInput{}} }} {|C_I| \choose 2} + \sum_{\substack{P \in P^{(t)}\\P \text{ is \WeakPart{}}}} {|P| \choose 2} + 
\sum_{\substack{u \in \calF\\ t(u)=t}} \left(|I(u)\setminus C(u)||I(u)| + |\DelPlus u||C^+(u)| \right) \right)\]

\end{lemma}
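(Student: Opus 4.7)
The plan is to rewrite the level-$t$ cost $\delta^{(t)}|\calE(Q^{(t)})\triangle\calE(P^{(t)})|$ as $\delta^{(t)}(A+B)$, where $A$ counts pairs sharing an input-cluster of $Q^{(t)}$ but lying in distinct parts of $P^{(t)}$, and $B$ counts the symmetric quantity. Because both $Q^{(t)}$ and $P^{(t)}$ are partitions of $S$, every pair contributing to $A$ (resp.\ $B$) belongs to a unique input-cluster (resp.\ output part), so I can write $A=\sum_{C_I\in Q^{(t)}}A_{C_I}$ and $B=\sum_{P\in P^{(t)}}B_{P}$ without double counting, and bound each summand against the corresponding term of the claim.

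Before that, I first establish the auxiliary fact needed to make the \StrongInput{}/\StrongPart{} terminology match up: every level-$t$ node $u$ remains a root of $\calF$ at the end of iteration $t$, so that $C^+(u)$ is literally a part of $P^{(t)}$. The reason is that for two distinct level-$t$ nodes $u,v$, the LP-clusters $L(u)$ and $L(v)$ come from distinct parts of $Q^{(t)}$ by Algorithm~\ref{algo:lp-cleaning} and are therefore disjoint; since $C(u)\subseteq L(u)$ and $C(v)\subseteq L(v)$, the adoption condition $C(u)\cap C(v)\neq\emptyset$ in Line~\ref{line:conditionExtension} of Algorithm~\ref{algo:Derive-hierarchy} can never be triggered between two level-$t$ nodes. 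Consequently the level-$t$ nodes of $\calF$ are in bijection both with the \StrongInput{} clusters (via $u\mapsto I(u)$) and with the \StrongPart{} parts (via $u\mapsto C^+(u)$), and the three sums in the claim are well-defined.

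For $A$: a \WeakInput{} cluster $C_I$ contributes trivially at most ${|C_I|\choose 2}$; for a \StrongInput{} cluster $I(u)$, any pair $\{i,j\}\subseteq C(u)$ lies inside the single output part $C^+(u)\supseteq C(u)$ and so contributes $0$ to $A$, while the remaining pairs from $I(u)$ must involve at least one species of $I(u)\setminus C(u)$, giving $A_{I(u)}\leq |I(u)\setminus C(u)|\cdot|I(u)|$. A symmetric argument handles $B$: a \WeakPart{} part $P$ contributes at most ${|P|\choose 2}$, while for a \StrongPart{} part $C^+(u)$ any pair $\{i,j\}\subseteq C(u)$ sits inside the single input-cluster $I(u)\supseteq L(u)\supseteq C(u)$ and contributes $0$ to $B$, leaving pairs that meet $\DelPlus u=C^+(u)\setminus C(u)$ and yielding $B_{C^+(u)}\leq|\DelPlus u|\cdot|C^+(u)|$.

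Summing these bounds over the respective partitions and multiplying by $\delta^{(t)}$ reproduces exactly the expression in the lemma. The only non-routine step is the auxiliary fact that level-$t$ nodes stay roots; once this is established, everything reduces to elementary counting using the inclusions $C(u)\subseteq L(u)\subseteq I(u)$ and $C(u)\subseteq C^+(u)$ together with the fact that $Q^{(t)}$ and $P^{(t)}$ are partitions.
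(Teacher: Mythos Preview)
Your proposal is correct and follows essentially the same approach as the paper: split the symmetric difference into the two directions, distribute each over the corresponding partition, and for the \StrongInput{}/\StrongPart{} pieces use that every pair inside $C(u)$ lies in both $\calE(Q^{(t)})$ and $\calE(P^{(t)})$. Your counting is slightly more direct than the paper's (which detours through ${|I(u)|\choose 2}-{|C(u)|\choose 2}$ before reaching $|I(u)\setminus C(u)|\cdot|I(u)|$), and the auxiliary fact you prove about level-$t$ nodes remaining roots, while true and pleasant, is not strictly needed for the inequality: since every \StrongPart{} part has \emph{some} corresponding level-$t$ node and the per-node bounds are non-negative, summing over all level-$t$ nodes can only overshoot the sum over \StrongPart{} parts.
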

\begin{proof}
The cost at level $t$ is $\delta^{(t)}$ times the number of pairs $\{i,j\}$ that do not end up in the same part of the output partition $P^{(t)}$ but $\{i,j\} \in \calE(Q^{(t)})$, plus the number of pairs $\{i,j\}$ that end up in the same part of the output partition but $\{i,j\} \not \in \calE(Q^{(t)})$. This is 

\begin{align*}
    \delta^{(t)} &|\calE(Q^{(t)}) \setminus \calE(P^{(t)})| + \delta^{(t)} |\calE(P^{(t)}) \setminus \calE(Q^{(t)})| \\
    = \delta^{(t)} \sum_{C_I\in Q^{(t)}} &|{C_I \choose 2} \setminus \calE(P^{(t)})|
    +
    \delta^{(t)} \sum_{P\in P^{(t)}} |{P \choose 2} \setminus \calE(Q^{(t)})|\\
    = \delta^{(t)} \sum_{\substack{C_I\in Q^{(t)}\\C_I \text{ is \WeakInput{}}}} &|{C_I \choose 2} \setminus \calE(P^{(t)})|
    + \delta^{(t)} \sum_{\substack{C_I\in Q^{(t)}\\C_I \text{ is \StrongInput{}}}} |{C_I \choose 2} \setminus \calE(P^{(t)})| \\
    + \delta^{(t)} \sum_{\substack{P\in P^{(t)}\\P \text{ is \WeakPart{}}}} &|{P \choose 2} \setminus \calE(Q^{(t)})|
    + \delta^{(t)} \sum_{\substack{P\in P^{(t)}\\P \text{ is \StrongPart{}}}} |{P \choose 2} \setminus \calE(Q^{(t)})| \\
\end{align*}

Notice that if $i,j$ are in the same core-cluster $C(u)$ of some node $u$ at level $t(u)=t$, then $\{i,j\} \in \calE(Q^{(t)}) \cap \calE(P^{(t)})$. Also, for each \StrongInput{} input-cluster there exists a corresponding node $u$, and vice-versa (similarly for \StrongPart{} parts of the output-partitions). Therefore:

\begin{align*}
\delta^{(t)} \sum_{\substack{C_I\in Q^{(t)}\\C_I \text{ is \StrongInput{}}}} |{C_I \choose 2} \setminus \calE(P^{(t)})| &\le
\delta^{(t)} \sum_{\substack{u\in \calF\\ t(u)=t}} {|I(u)| \choose 2} - {|C(u)| \choose 2}\\
\delta^{(t)} \sum_{\substack{P\in P^{(t)}\\P \text{ is \StrongPart{}}}} |{P \choose 2} \setminus \calE(Q^{(t)})| &\le
\delta^{(t)} \sum_{\substack{u\in \calF\\ t(u)=t}} {|C^+(u)| \choose 2} - {|C(u)| \choose 2}
\end{align*}

For an input cluster $C_I$ with a corresponding node $u$ at level $t$ such that $I(u)=C_I$, and since always $C(u) \subseteq I(u)$
\[{|I(u)| \choose 2} - {|C(u)| \choose 2} = |I(u)\setminus C(u)||I(u)| - {|I(u)\setminus C(u)| \choose 2} \le |I(u)\setminus C(u)||I(u)|\]
Notice that subtraction is needed since $|I(u)\setminus C(u)||I(u)|$ double-counts the pairs in ${{I(u)\setminus C(u)} \choose 2}$.

Similarly, for a part $P$ of the output partition $P^{(t)}$ with a corresponding node $u$ at level $t$ such that $C^+(u)=P$, and since always $C(u) \subseteq C^+(u)$
\begin{align*}
{|C^+(u)| \choose 2} - {|C(u)| \choose 2} = |C^+(u)\setminus C(u)||C^+(u)| - {|C^+(u)\setminus C(u)| \choose 2} &\le |C^+(u)\setminus C(u)||C^+(u)| \\
&= |\DelPlus u||C^+(u)|
\end{align*}
\end{proof}

For each term of Lemma~\ref{lem:upperBoundAlgorithm}, we show a matching lower bound for the LP cost. First, we give a lower bound of the LP cost related to the \WeakInput{} input-clusters.
\begin{lemma}
\label{lem:lowerBoundCI}
The LP cost at level $t$ $cost^{(t)}$ is
\[\Omega\left(\delta^{(t)} \sum_{\substack{C_I \in Q^{(t)}\\C_I \text{ is \WeakInput{}}}} {|C_I| \choose 2}\right)\]
\end{lemma}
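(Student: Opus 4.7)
The plan is to show that every weak input-cluster ``pays'' a lot in the LP cost via the very species that caused it to be discarded by LP-Cleaning. Recall that a level-$t$ input-cluster $C_I \in Q^{(t)}$ is weak iff there is no level-$t$ node $u \in \calF$ with $I(u)=C_I$; by construction of Algorithm~\ref{algo:lp-cleaning}, this happens exactly when $C_I$ was discarded in Line~\ref{line:deltaIL}, i.e., when $|C_{LP}| < \deltaInLp\,|C_I|$, so that at least $(1-\deltaInLp)|C_I| = 0.1\,|C_I|$ species of $C_I$ were stripped away in Line~\ref{line:outsideA}.

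The key ingredient is Lemma~\ref{lem:costLPcleaning}: every species $i \in C_I \setminus C_{LP}$ individually incurs LP cost $cost_i^{(t)} = \Omega(\delta^{(t)}|C_I|)$, because it failed either the ``close to at least half of $C_I$'' test or the ``far from outside of $C_I$'' test, and in either case a constant fraction of either $|C_I|$ or $\badOutEdges|C_I|$ pairs contribute fractional LP weight bounded away from $0$ or $1$. Combining this with $|C_I \setminus C_{LP}| \ge (1-\deltaInLp)|C_I|$ for weak $C_I$ gives
\begin{equation*}
\sum_{i \in C_I \setminus C_{LP}} cost_i^{(t)} \;=\; \Omega\!\bigl(\delta^{(t)}|C_I|^2\bigr) \;=\; \Omega\!\left(\delta^{(t)}\binom{|C_I|}{2}\right).
\end{equation*}

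Finally, I would sum this bound over all weak input-clusters $C_I \in Q^{(t)}$. Because distinct input-clusters in the partition $Q^{(t)}$ are disjoint, each species $i$ appears in at most one inner sum, and each pair cost $cost_{i,j}^{(t)}$ is counted at most twice across the outer sum (once when attributed to $i$, once when attributed to $j$, even if $i,j$ lie in different weak clusters). Hence
\begin{equation*}
2\,cost^{(t)} \;\ge\; \sum_{\substack{C_I \in Q^{(t)}\\ C_I \text{ weak}}} \sum_{i \in C_I \setminus C_{LP}} cost_i^{(t)} \;=\; \Omega\!\left(\delta^{(t)}\sum_{\substack{C_I \in Q^{(t)}\\C_I \text{ weak}}}\binom{|C_I|}{2}\right),
\end{equation*}
which yields the claim. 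No step here is really an obstacle; the only care needed is (i) to correctly identify ``weak'' with ``discarded by Line~\ref{line:deltaIL}'', and (ii) to track the factor-$2$ double counting when passing from per-species costs $cost_i^{(t)}$ to the total level-$t$ cost $cost^{(t)}$.
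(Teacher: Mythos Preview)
Your proposal is correct and follows essentially the same approach as the paper: identify weak input-clusters as those discarded in Line~\ref{line:deltaIL} (so $|C_I\setminus C_{LP}|\ge 0.1|C_I|$), invoke Lemma~\ref{lem:costLPcleaning} per species to get $cost^{(t)}_{C_I\setminus C_{LP}}=\Omega(\delta^{(t)}|C_I|^2)$, and sum over weak clusters with the factor-2 double-counting observation. The paper's proof is the same argument stated more tersely.
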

\begin{proof}
Let $C_I \in Q^{(t)}$ be an input-cluster and $C_{LP}$ be the corresponding LP-cluster by Algorithm~\ref{algo:lp-cleaning}. By Lemma~\ref{lem:costLPcleaning}, $cost^{(t)}_{C_{I} \setminus C_{LP}}= \Omega(\delta^{(t)}|C_{I}\setminus C_{LP}||C_I|)$.

If $C_I$ has no corresponding node $u$ with $I(u)=C_I, t(u)=t$, this means that $|C_{LP}|< \deltaInLp |C_I|$, therefore $|C_{I}\setminus C_{LP}|=\Omega(|C_I|)$, which makes the aforementioned cost $\Omega(\delta^{(t)}|C_I|^2) = \Omega(\delta^{(t)}{|C_I| \choose 2})$.

Summing over all these input-clusters may only double-count each pair, which completes the proof.
\end{proof}

Next, we give a lower bound of the LP cost related to the \StrongInput{} input-clusters.
\begin{lemma}
\label{lem:lowerBoundIC}
The LP cost at level $t$ $cost^{(t)}$ is
\[\Omega \left( \delta^{(t)} \sum_{\substack{u \in \calF\\ t(u)=t}} \left(|I(u)\setminus C(u)||I(u)| \right) \right) \]
\end{lemma}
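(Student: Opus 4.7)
The plan is to charge the cost $|I(u)\setminus C(u)||I(u)|$ to the LP via the two previously-established lower bounds, after splitting $I(u)\setminus C(u)$ into the species lost during LP-cleaning and the species removed during Derive-Hierarchy.

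First I would write $I(u)\setminus C(u) = (I(u)\setminus L(u)) \cup \Delta^-(u)$, where the union is disjoint because $\Delta^-(u) = L(u)\setminus C(u) \subseteq L(u)$. This gives
\[
|I(u)\setminus C(u)| \cdot |I(u)| \;\le\; |I(u)\setminus L(u)|\cdot |I(u)| \;+\; |\Delta^-(u)|\cdot |I(u)|,
\]
so it suffices to lower bound $cost^{(t)}$ by a constant times $\delta^{(t)}$ summed against each of the two terms on the right.

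For the first term, every species $i \in I(u)\setminus L(u)$ was discarded by LP-Cleaning (Line~\ref{line:outsideA} of Algorithm~\ref{algo:lp-cleaning}), so Lemma~\ref{lem:costLPcleaning} gives $cost_i^{(t)} = \Omega(\delta^{(t)}|I(u)|)$. Summing over such $i$ yields $\sum_{i\in I(u)\setminus L(u)} cost_i^{(t)} = \Omega(\delta^{(t)}|I(u)\setminus L(u)|\cdot|I(u)|)$. For the second term, Lemma~\ref{lem:lowerBoundDeletion} directly gives $cost_{I(u)}^{(t)} = \Omega(\delta^{(t)}|L(u)|\cdot|\Delta^-(u)|)$, and since the LP-cleaning threshold (Line~\ref{line:deltaIL} of Algorithm~\ref{algo:lp-cleaning}) ensures $|L(u)|\ge \deltaInLp |I(u)| = \Omega(|I(u)|)$, we obtain $cost_{I(u)}^{(t)} = \Omega(\delta^{(t)}|I(u)|\cdot|\Delta^-(u)|)$.

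The step that requires the most care is summing these per-node bounds over all level-$t$ nodes $u$ without blowing up by more than a constant factor. Here I would use the fact that distinct level-$t$ nodes in $\calF$ correspond to distinct input-clusters $I(u) \in Q^{(t)}$, which are pairwise disjoint because $Q^{(t)}$ is a partition. Consequently, in both $\sum_u \sum_{i\in I(u)\setminus L(u)} cost_i^{(t)}$ and $\sum_u cost_{I(u)}^{(t)}$, any single LP cost contribution $cost_{i,j}^{(t)}$ is counted at most twice (once for each endpoint that lies in some $I(u)$), so each such sum is at most $2\,cost^{(t)}$. Combining this with the per-node lower bounds from the previous paragraph gives
\[
cost^{(t)} \;=\; \Omega\!\left(\delta^{(t)} \sum_{\substack{u\in \calF\\ t(u)=t}} |I(u)\setminus C(u)|\cdot|I(u)|\right),
\]
which is exactly the statement of the lemma.
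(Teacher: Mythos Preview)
Your proof is correct and follows essentially the same approach as the paper: split $I(u)\setminus C(u)$ into $I(u)\setminus L(u)$ and $\Delta^-(u)$, apply Lemma~\ref{lem:costLPcleaning} and Lemma~\ref{lem:lowerBoundDeletion} respectively, use $|L(u)|\ge \deltaInLp|I(u)|$, and observe that the disjointness of the input-clusters limits double-counting when summing over level-$t$ nodes. The paper's write-up is slightly terser (it sums each lemma over all nodes first, then combines), but the logic is identical.
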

\begin{proof}
Summing the cost of Lemma~\ref{lem:lowerBoundDeletion} over all nodes $u$ at level $t(u)=t$ gives a cost of 
\[\Omega \left( \delta^{(t)} \sum_{\substack{u \in \calF\\ t(u)=t}} \left(|L(u)\setminus C(u)||L(u)| \right) \right) \]
since we may only double-count some pairs. Similarly, summing the cost of Lemma~\ref{lem:costLPcleaning} over all species in such nodes gives a cost of 
\[\Omega \left( \delta^{(t)} \sum_{\substack{u \in \calF\\ t(u)=t}} \left(|I(u)\setminus L(u)||I(u)| \right) \right) \]

We prove our claim by summing these two, and noticing that $|L(u)| \ge \deltaInLp |I(u)|$ by Line~\ref{line:deltaIL} of Algorithm~\ref{algo:lp-cleaning}.
\end{proof}

The following lemma lower bounds the LP cost in relation to the \StrongPart{} parts of the output partition $P^{(t)}$.

\begin{lemma}
\label{lem:lowerBoundCplus}
The LP cost at level $t$ $cost^{(t)}$ is
\[\Omega \left( \delta^{(t)} \sum_{\substack{u \in \calF\\ t(u)=t}} \left(|\DelPlus u||C^+(u)| \right) \right) \]
\end{lemma}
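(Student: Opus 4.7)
The plan is to directly combine the per-node lower bound from Lemma~\ref{lem:lowerBoundExtension} with the structural bound~(\ref{ineq:dplus}) linking $|C(u)|$ to $|C^+(u)|$. Concretely, for each level-$t$ node $u \in \calF$, Lemma~\ref{lem:lowerBoundExtension} gives
$$cost_{I(u)}^{(t)} = \Omega\!\left(\delta^{(t)} |C(u)|\, |\DelPlus u|\right).$$
I would first sum this inequality over all $u \in \calF$ with $t(u)=t$. Since by construction each $I(u)$ is a distinct part of the partition $Q^{(t)}$, the input-clusters $\{I(u) : t(u)=t\}$ are pairwise disjoint, so any pair $\{i,j\}$ contributes to $cost_{I(u)}^{(t)}$ for at most two values of $u$ (one for each endpoint). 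Hence
$$\sum_{\substack{u \in \calF\\ t(u)=t}} cost_{I(u)}^{(t)} \;\le\; 2\, cost^{(t)},$$
so $cost^{(t)} = \Omega\bigl(\delta^{(t)} \sum_{u : t(u)=t} |C(u)|\,|\DelPlus u|\bigr)$.

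To finish, I need to replace $|C(u)|$ by $|C^+(u)|$ up to a constant factor. For this I invoke the structural bound~(\ref{ineq:dplus}), $|\DelPlus u| \le \dPlus |C(u)|$, which was established inductively in the previous subsection. Since $C^+(u) = C(u) \cup \DelPlus u$ with $C(u)$ and $\DelPlus u$ disjoint (by Corollary~\ref{cor:disjointRemovalExtension}), this yields
$$|C^+(u)| \;=\; |C(u)| + |\DelPlus u| \;\le\; (1+\dPlus)\,|C(u)|,$$
so $|C(u)| = \Omega(|C^+(u)|)$. Substituting into the summed bound immediately gives
$$cost^{(t)} \;=\; \Omega\!\left(\delta^{(t)} \sum_{\substack{u \in \calF\\ t(u)=t}} |C^+(u)|\,|\DelPlus u|\right),$$
which is exactly the claim.

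The argument is essentially bookkeeping: every piece of work was done in establishing Lemma~\ref{lem:lowerBoundExtension} (the nontrivial per-node lower bound coming from the triangle inequality on the LP, using that the species in $\Delta^+(u)$ are LP-close to $C(u)$ and either sit outside $I(u)$ or were stripped by LP-Cleaning) and in the inductive bound~(\ref{ineq:dplus}). The only place one has to be careful is the double-counting when passing from per-node costs to the total level-$t$ cost, and the disjointness of the $I(u)$'s across $u$ at a common level makes this a constant-factor loss of at most $2$. No additional obstacles are expected.
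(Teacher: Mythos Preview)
Your proposal is correct and follows essentially the same approach as the paper: sum Lemma~\ref{lem:lowerBoundExtension} over all level-$t$ nodes and note that the disjointness of the input-clusters $I(u)$ at a fixed level keeps the double-counting to a factor of~$2$. You are in fact more careful than the paper's one-line proof, since you explicitly invoke~(\ref{ineq:dplus}) to pass from $|C(u)|$ to $|C^+(u)|$; the paper leaves that step implicit. One cosmetic point: the disjointness of $C(u)$ and $\DelPlus u$ is immediate from the definition $\DelPlus u = C^+(u)\setminus C(u)$, so there is no need to cite Corollary~\ref{cor:disjointRemovalExtension} there.
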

\begin{proof}
Summing the cost of Lemma~\ref{lem:lowerBoundExtension} over all nodes $u$ at level $t(u)=t$ proves our claim, since we may only double-count some pairs.
\end{proof}

The following lemma lower bounds the LP cost in relation to the \WeakPart{} parts of the output partition $P^{(t)}$.
\begin{lemma}
\label{lem:lowerBoundP}
The LP cost at level $t$ $cost^{(t)}$ is
\[ \Omega \left( \delta^{(t)} \sum_{\substack{P \in P^{(t)}\\P \text{ is \WeakPart{}}}} {|P| \choose 2} \right) \]
\end{lemma}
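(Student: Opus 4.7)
The plan is to associate each weak part $P \in P^{(t)}$ with the unique root $v$ of $\calF$ after iteration $t$ for which $P = C^+(v)$; necessarily $t(v) < t$, since otherwise $P$ would be strong with corresponding node $v$. Leaf roots have $|P|=1$ so $\binom{|P|}{2}=0$ and may be ignored. For the remaining non-leaf roots, inequality~\req{ineq:dplus} gives $|P| \leq \dPlusOne |C(v)|$, so it suffices to lower-bound $cost^{(t)}$ by $\Omega\bigl(\delta^{(t)} \sum_v |C(v)|^2\bigr)$. Two structural facts will drive the argument. First, because $v$ remains a root throughout iteration $t$, Line~\ref{line:conditionExtension} of Algorithm~\ref{algo:Derive-hierarchy} forces $C(v) \cap L(u) = \emptyset$ for every level-$t$ LP-cluster $L(u)$. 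Second, Lemma~\ref{lem:diameterCluster} combined with the LP hierarchy inequality~\req{ineq:hier} gives pairwise LP-distance less than $\diameterInCluster$ among species of $C(v)$ at level $t$. An immediate consequence of the first fact is that the core-clusters of any two such roots are pairwise disjoint, which will be the main tool for controlling double counting.

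I will then split on how $C(v)$ is distributed across the input-partition $Q^{(t)}$. In Case~A, at least half of the pairs in $\binom{C(v)}{2}$ lie in distinct input-clusters of $Q^{(t)}$; each such pair contributes $\delta^{(t)}(1 - x_{a,b}^{(t)}) > (1 - \diameterInCluster)\delta^{(t)}$ to $cost^{(t)}$, yielding $\Omega(\delta^{(t)} |C(v)|^2)$ per root, and disjointness of the $C(v)$'s rules out any double counting. In Case~B, a direct averaging argument (from $\sum_j |A_j|^2 \geq \tfrac12 |C(v)|^2$ and $\sum_j |A_j|^2 \leq (\max_j |A_j|)\cdot |C(v)|$) yields an input-cluster $C_I(v) \in Q^{(t)}$ with $|A_v| := |C(v) \cap C_I(v)| = \Omega(|C(v)|)$. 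In Sub-case~B1, where $C_I(v)$ is weak, Lemma~\ref{lem:lowerBoundCI} alone delivers $cost^{(t)} = \Omega(\delta^{(t)}|C_I(v)|^2)$. In Sub-case~B2, where $C_I(v)$ is strong with corresponding level-$t$ node $u$, the key fact $C(v) \cap L(u) = \emptyset$ forces $A_v \subseteq I(u) \setminus L(u)$, so Lemma~\ref{lem:costLPcleaning} gives $cost_a^{(t)} = \Omega(\delta^{(t)}|I(u)|)$ for each $a \in A_v$.

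The main obstacle I anticipate is the bookkeeping to aggregate these per-root contributions without inflating the double counting. I plan to group the Case-B roots by their shared input-cluster $C_I$: the $A_v$'s are then disjoint subsets of $C_I$, so $\sum_v |A_v| \leq |C_I|$ and $\sum_v |C(v)|^2 \leq (\max_v|C(v)|) \sum_v |C(v)| = O(|C_I|^2)$ in B1 and $O(|I(u)| \sum_v |A_v|)$ in B2, each matching its associated lower bound up to a constant. Since Case~A pair-costs, Case~B1 input-cluster costs, and Case~B2 species-costs each individually lower-bound $cost^{(t)}$ (the last with only a factor-$2$ slack from $\sum_a cost_a^{(t)} = 2\,cost^{(t)}$), taking the maximum of the three and losing a further factor of $3$ yields $cost^{(t)} = \Omega\bigl(\delta^{(t)}\sum_v |C(v)|^2\bigr) = \Omega\bigl(\delta^{(t)} \sum_{P\text{ weak}} \binom{|P|}{2}\bigr)$, as required.
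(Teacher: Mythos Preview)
Your argument is correct and follows essentially the same three-way case analysis as the paper: both split weak parts by whether the core-cluster is mostly contained in a single input-cluster, and then by whether that input-cluster is weak or strong, grouping by input-cluster to control double counting. The only real difference is in sub-case~B2: the paper argues $C_{P_r}\cap I(u)\subseteq I(u)\setminus C(u)$ via the partition structure of $P^{(t)}$ (the weak part and $C^+(u)$ are distinct parts of the same partition) and then invokes Lemma~\ref{lem:lowerBoundIC}, whereas you derive the slightly stronger $A_v\subseteq I(u)\setminus L(u)$ directly from the algorithm and invoke Lemma~\ref{lem:costLPcleaning}. One small point to tighten: Line~\ref{line:conditionExtension} literally only gives $C(v)\cap C(u)=\emptyset$, not $C(v)\cap L(u)=\emptyset$; you need one more sentence noting that the removals in Line~\ref{line:conditionRemoval} only subtract $C^+(w)$ for other roots $w$, which are disjoint from $C(v)\subseteq C^+(v)$ by Corollary~\ref{cor:intersection-ancestry}, so $C(v)\cap C(u)=C(v)\cap L(u)$.
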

\begin{proof}
Fix any such part $P$. By Algorithm~\ref{algo:Derive-hierarchy}, each part of the output-partition $P^{(t)}$ corresponds to the extended-cluster of some node $u\in \calF$. We use $C_P$ to refer to the core-cluster of this node corresponding to $P$, and notice that $|C_P| = \Omega(|P|)$ by (\ref{ineq:dplus}). Furthermore, by Lemma~\ref{lem:diameterCluster} any two species $i,j\in C_P \subseteq P$ have $x_{i,j}^{(t)} < \diameterInCluster$.

We take two cases based on whether there exists an input-cluster $C_I \in Q^{(t)}$ such that $|C_I \cap C_P| > |C_P/2|$. Since $Q^{(t)}$ is a partition, there may be at most one such $C_I$ for each part $P$. If none exists, then there exist $\Omega({|C_P| \choose 2})$ pairs of species in $C_P$ which belong in different input-clusters, and thus $cost^{(t)}_{C_P} = \Omega(\delta^{(t)}{|C_P| \choose 2}) = \Omega(\delta^{(t)}{|P| \choose 2})$ by \req{ineq:dplus}.

For the remaining parts, we first partition them based on parts that have the same corresponding input-cluster. Let $P_1, \ldots, P_k$ be such a maximal group with the same corresponding input cluster $C_I$, meaning that $|C_I| = \Omega(\sum_{r=1}^{k} |C_{P_r}|) = \Omega(\sum_{r=1}^{k} |P_r|)$. If $C_I$ does not correspond to any node $u$ at level $t$ such that $I(u)=C_I$, then $cost^{(t)}_{C_I} = \Omega(\delta^{(t)}{|C_I| \choose 2})$ by Lemma~\ref{lem:costLPcleaning}, which is $\Omega(\delta^{(t)}\sum_{r=1}^{k} {|P_r| \choose 2})$.

Else there exists such a node $u$ with $I(u)=C_I$, while by the statement of our Lemma there is no $v$ such that $C^+(v)=P_r$ for $r\in [k]$. Therefore all these parts are disjoint from $C^+(u)$ (Corollary~\ref{cor:validOutput}), and thus disjoint from $C(u)$. This implies

$$\bigcup_{r=1}^{k}C_{P_r} \cap I(u) \subseteq I(u)\setminus C(u)$$

Then
$$|I(u)\setminus C(u)| \ge |\bigcup_{r=1}^{k}C_{P_r} \cap I(u)| > \sum_{r=1}^k |C_{P_r}|/2 = \Omega(\sum_{r=1}^k |P_r|)$$

By Lemma~\ref{lem:lowerBoundIC} the LP cost at level $t$ is $\Omega(\delta^{(t)}|I(u)\setminus C(u)||I(u)|) = \Omega(\delta^{(t)}\sum_{r=1}^{k} {|P_r| \choose 2})$.
\end{proof}

We combine all our lower bounds in the following corollary.
\begin{corollary}
\label{cor:lowerBound}
The LP cost at level $t$ is
\[\Omega\left( \delta^{(t)} \left(
\sum_{\substack{C_I \in Q^{(t)}\\C_I \text{ is \WeakInput{}} }} {|C_I| \choose 2} + \sum_{\substack{P \in P^{(t)}\\P \text{ is \WeakPart{}}}} {|P| \choose 2} + 
\sum_{\substack{u \in \calF\\ t(u)=t}} \left(|I(u)\setminus C(u)||I(u)| + |\DelPlus u||C^+(u)| \right) \right)
\right)\]
\end{corollary}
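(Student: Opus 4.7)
The plan is that this corollary follows directly by combining the four lower-bound lemmas established just above, namely Lemmas~\ref{lem:lowerBoundCI}, \ref{lem:lowerBoundIC}, \ref{lem:lowerBoundCplus}, and \ref{lem:lowerBoundP}. Each of these lemmas asserts that the LP cost at level $t$, i.e.\ $cost^{(t)}$, is $\Omega$ of exactly one of the four summation terms appearing in the corollary statement. Therefore nothing new really has to be proved; the only work is to package the four bounds into a single statement.

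More concretely, I would first invoke Lemma~\ref{lem:lowerBoundCI} to get $cost^{(t)} = \Omega\!\left(\delta^{(t)} \sum_{C_I\in Q^{(t)},\, C_I \text{ is \WeakInput{}}} \binom{|C_I|}{2}\right)$, then Lemma~\ref{lem:lowerBoundP} to get the analogous bound for the \WeakPart{} parts of $P^{(t)}$, then Lemma~\ref{lem:lowerBoundIC} to get $cost^{(t)} = \Omega\!\left(\delta^{(t)} \sum_{u\in\calF,\, t(u)=t} |I(u)\setminus C(u)|\,|I(u)|\right)$, and finally Lemma~\ref{lem:lowerBoundCplus} for the $|\DelPlus{u}|\,|C^+(u)|$ term.

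Each of these four bounds has the form $cost^{(t)} \ge c_i \cdot T_i$ for some fixed constant $c_i>0$ and one of the four summation terms $T_i$. Setting $c = \min_i c_i$, I have $cost^{(t)} \ge c \cdot T_i$ for every $i$, hence $cost^{(t)} \ge c \cdot \max_i T_i \ge \frac{c}{4} \sum_{i=1}^{4} T_i$. This yields $cost^{(t)} = \Omega\!\left(\sum_{i=1}^{4} T_i\right)$, which is exactly the claim of the corollary.

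There is no real obstacle here: the technical work was in proving the four individual lower bounds, each of which dealt with a different combinatorial source of LP cost (species removed from an input-cluster during LP-Cleaning, species removed during core-cluster shrinking at Line~\ref{line:conditionRemoval}, species added during extension at Line~\ref{line:conditionExtension}, and parts of the output partition with no corresponding node in $\calF$). The corollary is just the convenient unified form that will be used in the final approximation-factor computation, where it will be matched against the upper bound from Lemma~\ref{lem:upperBoundAlgorithm}, which has exactly the same four summation terms, yielding the $\calO(1)$ approximation.
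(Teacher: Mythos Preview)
Your proposal is correct and matches the paper's approach exactly: the paper's proof is a single sentence stating that the corollary ``follows by summing the LP cost at level $t$ by Lemmas~\ref{lem:lowerBoundCI}, \ref{lem:lowerBoundIC}, \ref{lem:lowerBoundCplus}, \ref{lem:lowerBoundP}.'' Your additional $\max_i T_i \ge \tfrac{1}{4}\sum_i T_i$ justification is a welcome spelling-out of what ``summing'' the $\Omega$-bounds means, but there is nothing more to it than that.
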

\begin{proof}
Follows by summing the LP cost at level $t$ by Lemmas~\ref{lem:lowerBoundCI}, \ref{lem:lowerBoundIC}, \ref{lem:lowerBoundCplus}, \ref{lem:lowerBoundP}.
\end{proof}

We are now ready to combine all the aforementioned results:
\begin{lemma}
\label{lem:hcaApproximation}
Given a \NewProblem~instance $S,Q^{(*)},\delta^{(*)}$ and LP-distances $x^{(*)}$, the output of
Derive-Hierarchy($S$,LP-Cleaning($S,Q^{(*)},x^{(*)}$))
is a sequence of hierarchical partitions $P^{(*)}$ with cost $\calO(cost^{(*)})$.
\end{lemma}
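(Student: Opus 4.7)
The plan is essentially to stitch together the two main bounds already proved. Corollary~\ref{cor:validOutput} already establishes that the output is a valid sequence of hierarchical partitions, so only the approximation factor needs attention.

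The first step is to invoke Lemma~\ref{lem:upperBoundAlgorithm}, which bounds the algorithm's cost at each level $t$ by a sum of four terms: a \WeakInput{} contribution $\delta^{(t)} \sum_{C_I \text{ \WeakInput}} \binom{|C_I|}{2}$, a \WeakPart{} contribution $\delta^{(t)} \sum_{P \text{ \WeakPart}} \binom{|P|}{2}$, a removal contribution $\delta^{(t)} \sum_{u: t(u)=t} |I(u)\setminus C(u)||I(u)|$, and an extension contribution $\delta^{(t)} \sum_{u: t(u)=t} |\Delta^+(u)||C^+(u)|$. The second step is to apply Corollary~\ref{cor:lowerBound}, which asserts that the sum of these four terms is $\calO(cost^{(t)})$. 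Hence the cost incurred by the algorithm at level $t$ is $\calO(cost^{(t)})$.

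The third and final step is to sum over $t \in [\ell]$. Since both the algorithm's total cost and $cost^{(*)} = \sum_{t=1}^\ell cost^{(t)}$ decompose additively over levels, this gives a total cost of $\calO(cost^{(*)})$ as required.

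There is no real obstacle at this stage, because the nontrivial work has already been carried out: the structural analysis of cores and extensions (inequalities~\req{ineq:dplus} and \req{ineq:LPbyCore}), the hierarchy-friendly property (Lemma~\ref{lem:oneIntersection}), and the per-term LP lower bounds (Lemmas~\ref{lem:lowerBoundCI}, \ref{lem:lowerBoundIC}, \ref{lem:lowerBoundCplus}, \ref{lem:lowerBoundP}). The only remaining care is that summing the per-level LP lower bounds is fine because each pair $\{i,j\}$ only contributes to one $cost^{(t)}$ and we are taking these as separate bounds per level rather than a single combined inequality; the constants hidden by $\Omega$ and $\calO$ are absolute and do not depend on $\ell$ or $|S|$.
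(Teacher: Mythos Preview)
Your proposal is correct and follows essentially the same route as the paper: invoke Corollary~\ref{cor:validOutput} for validity, combine Lemma~\ref{lem:upperBoundAlgorithm} with Corollary~\ref{cor:lowerBound} at each level, and sum over $t$. One small slip: your parenthetical remark that ``each pair $\{i,j\}$ only contributes to one $cost^{(t)}$'' is not accurate (each pair contributes to every level's $cost^{(t)}$), but this is irrelevant to the argument---summing per-level inequalities is valid simply because both sides decompose additively over levels.
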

\begin{proof}
By Corollary~\ref{cor:validOutput}, the output is a sequence of hierarchical partitions of $S$.

For any level $t$, by Lemma~\ref{lem:upperBoundAlgorithm} and Corollary~\ref{cor:lowerBound} the cost of the output of Derive-Hierarchy($S$,LP-Cleaning($S,Q^{(*)},x^{(*)}$)) is within a constant factor from the LP cost. Summing over all levels $t$ proves our lemma.
\end{proof}

Lemma~\ref{lem:hcaApproximation} directly proves:
\begin{equation*}
    \text{\HCA} = \calO(1) \tag*{\req{eq:HCA} from Figure~\ref{fig:roadmap}}
\end{equation*}
as Algorithm~\ref{algo:main-algorithm} simply picks $x^{(*)}$ to be an optimal fractional solution to the LP relaxation.

Combining this with Inequality~\req{eq:reduction_HCC_to_HCA} concludes the proof of Theorem~\ref{thm:hcc}. \qedsymbol

\section{Constant integrality gap} \label{sec:integralityGap}
In this section, we prove that the LP formulation for \GeneralProblem{} (Section~\ref{sec:lp-defs}) has constant integrality gap. This directly extends to the integrality gap of the LP formulation used by Ailon and Charikar for ultrametrics \cite{DBLP:journals/siamcomp/AilonC11}, as the LP formulation for \GeneralProblem~is a generalization of the one for ultrametrics (implicit in \cite{DBLP:journals/siamcomp/AilonC11, 12ac-DBLP:conf/approx/HarbKM05}, discussed in Section~\ref{sec:UltraMetric-HierCorrClust}).

Notice that this is not direct from our algorithm, as for \GeneralProblem{} we do not directly work with the LP from Section~\ref{sec:lp-defs}; we rather reduce our problem to an instance of \NewProblem{}, and then round the LP of this instance.

We start with some definitions.
Suppose we have an instance of Hierarchical Correlation Clustering
$S, \delta^{(*)}=(\delta^{(1)}, \ldots, \delta^{(\ell)}), E^{(*)}=(E^{(1)}, \ldots, E^{(\ell)})$.
We say that $x$ is an LP vector if it consists of LP distances $x_{i,j}$ satisfying the triangle inequality and being in the interval $[0,1]$, for all species $i,j\in S$. For any $E \subseteq {S \choose 2}$, we extend the previously used notion of LP cost as:

\[cost^{(t)}(E,x) = \sum_{\{i,j\} \in E} (x_{i,j}) + \sum_{\{i,j\} \not \in E} (1-x_{i,j})\]

Notice that for any LP vector $x$ and edge-sets $E_1,E_2 \subseteq {S\choose 2}$, we have that 
\begin{align} \label{ineq:propertyCost}
cost^{(t)}(E_1,x) \le cost^{(t)}(E_2,x) + \delta^{(t)}|E_1 \triangle E_2|
\end{align}

That is because only pairs in their symmetric difference may be charged differently by $cost(E_1,x)$ and $cost(E_2,x)$, and the maximum such difference is $\delta^{(t)}$, as the LP-distances are between $0$ and $1$.

The LP formulation of Correlation Clustering, which is a special case of the formulation of \GeneralProblem, has constant integrality gap \cite{DBLP:journals/jcss/CharikarGW05}. 
Therefore, for a Correlation Clustering instance $S,E$, integral solution $Q$ whose cost is within a constant factor from the optimal integral solution $OPT$, and any $t$ and LP vector $x$, it holds that 
\begin{align} \label{ineq:integralityGapCorrClust}
    \delta^{(t)}|E \triangle Q| = \calO(\delta^{(t)}|E\triangle OPT|) = \calO(cost(E,x))
\end{align}

Finally, let $Q^{(*)}=(Q^{(1)},\ldots,Q^{(\ell)})$ be partitions of $S$ such that for each $t\in [\ell]$, $Q^{(t)}$ is a solution to Correlation Clustering with input $S,E^{(t)}$ whose cost is within a constant factor from the optimal. Let $x^{(*)}=(x^{(1)}, \ldots, x^{(\ell)})$ be $\ell$ LP vectors satisfying \req{ineq:hier} that are an optimal fractional solution to \GeneralProblem.

We need to prove that some integral solution $P^{(*)}=(P^{(1)}, \ldots, P^{(l)})$ to \GeneralProblem~has cost within a constant factor of the optimal fractional solution. We pick $P^{(*)}=\text{Derive-Hierarchy}(S,\text{LP-Cleaning}(S,Q^{(*)},x^{(*)}))$, that is the integral solution suggested by Lemma~\ref{lem:hcaApproximation}.
Formally, we prove

\[\sum_{t=1}^{\ell} \delta^{(t)} |P^{(t)} \triangle E^{(t)}| = \calO\left(\sum_{t=1}^{\ell} cost(E^{(t)},x^{(*)}) \right)\]

It holds that 
\[\sum_{t=1}^{\ell} \delta^{(t)} |P^{(t)} \triangle E^{(t)}| \le \sum_{t=1}^{\ell} \delta^{(t)} (|P^{(t)} \triangle \calE(Q^{(t)})| + |\calE(Q^{(t)}) \triangle E^{(t)}|)\]

By Lemma~\ref{lem:hcaApproximation} we have that

\begin{align*}
\sum_{t=1}^{\ell} \delta^{(t)} |P^{(t)} \triangle \calE(Q^{(t)})| &= \calO\left(\sum_{t=1}^{\ell} cost(\calE(Q^{(t)}),x^{(*)}) \right) \\
&\le \calO\left(\sum_{t=1}^{\ell} (cost(E^{(t)},x^{(*)}) + |E\cap \calE(Q^{(t)})| \right)
\end{align*}
with the later following by \req{ineq:propertyCost}. Therefore we bound $\sum_{t=1}^{\ell} \delta^{(t)} |P^{(t)} \triangle E^{(t)}|$:

\begin{align*}
\sum_{t=1}^{\ell} \delta^{(t)} |P^{(t)} \triangle E^{(t)}| &\le \sum_{t=1}^{\ell} \delta^{(t)} (|P^{(t)} \triangle \calE(Q^{(t)})| + |\calE(Q^{(t)}) \triangle E^{(t)}|)\\
&= \calO\left(\sum_{t=1}^{\ell} (cost(E^{(t)},x^{(*)}) + |E^{(t)}\cap \calE(Q^{(t)})| \right)\\
&= \calO\left(\sum_{t=1}^{\ell} cost(E^{(t)},x^{(*)}) \right)
\end{align*}

with the last step following from \req{ineq:integralityGapCorrClust}.

\section{From $L_1$-fitting
ultrametrics to 
hierarchical correlation clustering}\label{sec:UltraMetric-HierCorrClust}
\label{ssec:ultrametric_to_hcc}
For completeness, we here review the reduction
from ultrametrics to hierarchical correlation clustering
implicit in previous work \cite{DBLP:journals/siamcomp/AilonC11, 12ac-DBLP:conf/approx/HarbKM05}.

Given an $L_1$-fitting ultrametrics instance with
input $\mathcal{D}:{S\choose 2}\rightarrow \mathbb R_{>0}$,
we construct an input to the \GeneralProblem{} instance as follows.
Let $D^{(1)}<\ldots< D^{(\ell+1)}$ be the distances that appear in the input distance function $\calD$. For $t=1,\ldots,\ell$, define
\begin{equation}\label{eq:ultrametric-correlation}
\delta^{(t)}=D^{(t+1)}-D^{(t)}\textnormal{ and  }E^{(t)}=\left\{\{i,j\}\in {S\choose 2}\mid \calD(i,j)\le D^{(t)}\right\}
\end{equation}
Now given the solution to this hierarchical correlation clustering
problem, to construct a corresponding ultrametric tree, we first complete
the partition hierarchy with $P^{(0)}$ partitioning $S$ into singletons and $P^{(\ell+1)}$ consisting
of the single set $S$. Moreover, we set $\delta^{(0)}=D^{(1)}$.

To get the ultrametric tree $U$, we create a node for each set in the hierarchical partitioning. Next,
for $t=0,\ldots,\ell$, the parent of a level $t$ node $u$ is the node on level $t+1$ whose set contains the set of $u$, and the length of the parent edge
is $\delta^{(t)}/2$. Then nodes on level $t$ are
of height $\sum_{i=0}^{t-1} \delta^{(t)}/2=D^{(t)}/2$ and
if two species have their lowest common ancestor on level $t$, then their distance is exactly $D^{(t)}$.

The construction is reversible in a manner that given any
ultrametric tree $U$ with leaf set $S$ and all 
distances from $\{D^{(1)}, \ldots, D^{(\ell+1)}\}$, 
we get the
partitions $P^{(1)},\ldots,P^{(\ell)}$ as follows.
First, possibly by introducing nodes with only one child, 
for each species $i$ we make sure it has ancestors of heights $D^{(t)}/2$ for $t=1,\ldots,\ell+1$. Then, for $t=1,\ldots,\ell$, we let partitions $P^{(t)}$ consist of the sets of descendants for each level $t$ node in $U$.

With this relation between $U$ and $P^{(1)},\ldots,P^{(\ell)}$, it follows easily that they have the same cost relative to $\calD$ in the sense that
\[\sum_{t=1}^\ell \delta^{(t)} |E^{(t)}\, \Delta\, E(P^{(t)})|=\sum_{\{i,j\}\in{S\choose 2}} |dist_U(i,j)-\calD(i,j)|.\]
Thus, with \req{eq:ultrametric-correlation}, the hierarchical correlation clustering is equal to $L_1$-fitting ultrametrics with ultrametric distances
from the set of different distances in $\calD$. 

Finally, from Lemma 1(a) in \cite{12ac-DBLP:conf/approx/HarbKM05}, we have that
among all ultrametrics minimizing the $L_1$ distance to $\calD$, there is at least one using only distances from $\calD$. This implies that an $\alpha$-approximation algorithm for hierarchical correlation clustering
implies an $\alpha$-approximation algorithm for
$L_1$-fitting ultrametrics, that is
\begin{equation}\tag*{\req{eq:UltraMetric-HierCorrClust} from Figure~\ref{fig:roadmap}}
    \text{UltraMetric}\leq \text{\HCC}
\end{equation}

Combining this with Theorem~\ref{thm:hcc} concludes the second part of Theorem~\ref{thm:main-result}, namely that the $L_1$-fitting ultrametrics problem can be solved in deterministic polynomial time within a constant approximation factor. \qedsymbol

\section{Tree metric to ultrametric}\label{sec:tree-3ultra}
Agarwala et al. \cite{DBLP:journals/siamcomp/AgarwalaBFPT99} reduced tree metrics to
certain restricted ultrametrics. In fact, their reduction may make certain species have distance $0$ in the final tree, which means that it is actually a reduction from tree pseudometrics\footnote{Pseudometrics are a generalization of metrics that allow distance $0$ between distinct species.} to certain restricted ultrametrics. In this section we show that the restrictions are not needed, and that the reduction can be made in a way that does not introduce zero-distances.

\subsection{Tree pseudometric to (unrestricted) ultrametric}
The claim from \cite{DBLP:journals/siamcomp/AgarwalaBFPT99} is that approximating a certain restricted ultrametric within a factor $\alpha$ can be used to approximate tree pseudometric within a factor $3\alpha$. Here we completely lift these restrictions for $L_1$, and show that they can be lifted for all $L_p$ with $p\in \{2,3,\ldots\}$ with an extra factor of at most $2$.

We will need the well-known characterization of ultrametrics discussed in the introduction, that $U$ is an ultrametric iff it is a metric and $\forall \{i,j,k\} \in {S \choose 3}: U(i,j) \le \max\{U(i,k),U(k,j)\}$.

For simplicity, we prove the theorem only for $p<\infty$, as for $L_{\infty}$ the $3$ approximation \cite{DBLP:journals/siamcomp/AgarwalaBFPT99} cannot be improved by our theorem.

\begin{theorem}
For any integer $1\le p<\infty$, a factor $\alpha\ge 1$ approximation for $L_p$-fitting ultrametrics implies a factor $3 \cdot 2^{(p-1)/p} \cdot \alpha$ approximation for $L_p$-fitting tree pseudometrics.\\
In particular, for $L_1$ it implies a factor $3\alpha$.
\end{theorem}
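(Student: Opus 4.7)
The plan is to apply the Agarwala–Bafna–Farach–Paterson–Thorup reduction driven by the Gromov product identity, sharpen its error analysis, and lift its side-restriction on the output ultrametric. Given $\calD$, we enumerate over candidate pivots $c \in S$ (polynomial overhead). For each $c$, fix a shift constant $M$ conservatively large---e.g., the sum of all input distances, which certainly satisfies $M \ge \max_k \calD(k,c)$ and also dominates $\max_k T^*(k,c)$ for any reasonable optimal tree pseudometric $T^*$. Define
\[
\tilde\calD^{(c)}(i,j) \;=\; 2M + \calD(i,j) - \calD(i,c) - \calD(j,c),
\]
which is non-negative by the triangle inequality for $\calD$. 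Run the assumed $\alpha$-approximation for $L_p$-fitting ultrametrics on $\tilde\calD^{(c)}$ to obtain an ultrametric $\tilde U^{(c)}$, convert back to a tree pseudometric $T^{(c)}$ by shortening each leaf-edge at $i$ in the ultrametric tree by $M - \calD(i,c)$, and return the best $T^{(c)}$ over all $c$.

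The core of the analysis rests on two observations. First, the Gromov-product identity $2 G_c(i,j) = \calD(i,c) + \calD(j,c) - \calD(i,j)$ together with the tree characterization $G_c(i,j) \ge \min\{G_c(i,k), G_c(j,k)\}$ implies that the analogous transform $\tilde T^*(i,j) = 2(M - G_c^{T^*}(i,j))$ of any tree pseudometric $T^*$ is itself a non-negative ultrametric, crucially with the \emph{same} shift $M$; this eliminates a would-be $2(M^*-M)$ term per pair (for $M^* = \max_k T^*(k,c)$) that would otherwise accrue $\binom{|S|}{2}$ times. Then by the approximation guarantee and optimality, $\|\tilde U^{(c)} - \tilde\calD^{(c)}\|_p \le \alpha \|\tilde T^* - \tilde\calD^{(c)}\|_p$. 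Expanding
\[
\tilde T^*(i,j) - \tilde\calD^{(c)}(i,j) \;=\; [T^*(i,j) - \calD(i,j)] - [T^*(i,c) - \calD(i,c)] - [T^*(j,c) - \calD(j,c)],
\]
Minkowski bounds this by $\|T^*-\calD\|_p$ plus two ``row-at-$c$'' terms, each aggregating to at most $O(\|T^*-\calD\|_p)$ after averaging over $c \in S$: each off-diagonal entry of $|T^*-\calD|$ appears in exactly two rows, so the $(|S|-1)$ pair-multiplicity cancels with the $1/|S|$ averaging, and a power-mean inequality introduces an extra factor $2^{(p-1)/p}$ that collapses to $1$ at $p=1$, delivering the clean factor $3$ in that case.

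Second, we lift Agarwala's restriction. Their reduction required the output ultrametric to have leaf-edges of length at least $M - \calD(i,c)$ so that the shortening step yields non-negative edges. We allow the ultrametric fitting to be unrestricted and then fix short leaf-edges by a per-leaf extension, producing a valid tree pseudometric (no longer ultrametric) whose pairwise distance $T^{(c)}(i,j)$ differs from the ideal value $\tilde U^{(c)}(i,j) + \calD(i,c) + \calD(j,c) - 2M$ by at most $\delta_i + \delta_j$, where $\delta_k \ge 0$ is the leaf-deficit at $k$. A Minkowski argument charges each $\delta_k$ to the ultrametric-fit error near leaf $k$, so $\bigl(\sum_{\{i,j\}} (\delta_i+\delta_j)^p\bigr)^{1/p}$ is bounded by the LP cost itself, paying an extra factor of $2^{(p-1)/p}$ for $L_p$ with $p \ge 2$ and no extra factor for $L_1$, exactly matching the theorem's statement.

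The main obstacle is precisely this restriction-lifting step: verifying that the per-leaf extensions do not blow up the $L_p$-error by more than a factor $2^{(p-1)/p}$, which requires pairing each leaf-deficit with an ultrametric-fit error already present in the LP cost and then combining the contributions from both endpoints via the power-mean inequality. The shared-$M$ trick is the other delicate point: we must choose $M$ based on $\calD$ alone, yet large enough to accommodate any candidate $T^*$, thereby sidestepping the $|M-M^*|$ term that would otherwise ruin the analysis.
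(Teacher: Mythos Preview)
Your restriction-lifting step contains a genuine gap. You assert that $\bigl(\sum_{\{i,j\}}(\delta_i+\delta_j)^p\bigr)^{1/p}$ is controlled by the ultrametric-fit error $\|\tilde U^{(c)}-\tilde\calD^{(c)}\|_p$ up to a constant, but the only bound available is $\delta_k \le \tfrac12\max_j\bigl|\tilde U^{(c)}(k,j)-\tilde\calD^{(c)}(k,j)\bigr|$ (which follows from $\tilde\calD^{(c)}(k,j)\ge\beta_k$). A single pair $(k,j_0)$ carrying error $\epsilon$ can therefore force $\delta_k=\Theta(\epsilon)$, and since your leaf-extension shifts $T^{(c)}(k,j)$ by $\delta_k$ for \emph{every} $j$, that one local error propagates to all $|S|-1$ output pairs involving $k$. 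This yields $\bigl(\sum_{\{i,j\}}(\delta_i+\delta_j)^p\bigr)^{1/p}$ as large as $(|S|-1)^{1/p}\,\|\tilde U^{(c)}-\tilde\calD^{(c)}\|_p$, not a constant multiple; no Minkowski or averaging argument rescues this, because the fit-error you would charge against lives on one pair while the damage is spread over linearly many. (Separately, the factor-$3$ reduction already holds for every $L_p$ via a straight Minkowski bound on the three-term expansion, so the first $2^{(p-1)/p}$ you invoke there is not needed.)

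The paper avoids the problem by a different mechanism: rather than repairing the output, it first squeezes the \emph{input} to obey the restrictions, setting $\calD'(i,j)=\min\{\gamma_k,\max\{\calD(i,j),\beta_i,\beta_j\}\}$, runs the unrestricted $\alpha$-approximation on $\calD'$ to get $U'$, and then squeezes $U'$ the same way to obtain a restricted $U$. Both squeezings can only improve the fit to $\calD'$; and because $\calD'(i,j)$ always lies between $\calD(i,j)$ and any restricted value, one gets the exact pairwise decomposition $|\calD(i,j)-U(i,j)| = |\calD(i,j)-\calD'(i,j)|+|\calD'(i,j)-U(i,j)|$. The factor $2^{(p-1)/p}$ then enters exactly once, via the convexity bound $(x+y)^p\le 2^{p-1}(x^p+y^p)$, and the Agarwala--Bafna--Farach--Paterson--Thorup factor~$3$ is inherited as a black box. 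The input-squeezing idea is precisely what your post-hoc leaf-extension is missing.
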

\begin{proof}[Proof (Extending proof from \cite{DBLP:journals/siamcomp/AgarwalaBFPT99})]
The restriction from Agarwala et al. \cite{DBLP:journals/siamcomp/AgarwalaBFPT99} is 
as follows. For every species $i\in S$, we
have a ``lower bound" $\beta_i$. Moreover,
we have a distinguished species $k\in S$
with an upper bound $\gamma_k$.

We want an ultrametric $U$ such that 
\begin{align*}
    \gamma_k&\geq U(i,j)\geq \max\{\beta_i,\beta_j\} &\forall \{i,j\}\in {S\choose 2}\\
    \gamma_k&=U(k,i) &\forall i \in S\setminus \{k\}.
\end{align*}
We note that the conditions can only be satisfied
if $\gamma_k\geq \beta_i$ for all $i\in S$, so we
assume this is the case. We can even have $\beta_k=\gamma_k$.

The result from \cite{DBLP:journals/siamcomp/AgarwalaBFPT99} states that for any $p$ and $\calD:{S\choose 2}\rightarrow\mathbb R_{>0}$, if we can minimize the restricted ultrametric $L_p$ error within a factor $\alpha$ in polynomial-time,
then there is a polynomial-time algorithm that minimizes the tree pseudometric $L_p$ error within a factor $3\alpha$.

We start with creating a new
distance function $\calD'$.
\[\calD'(i,j)=\min\{ \gamma_k,\max\{\calD(i,j),\beta_i,\beta_j\}\}.\]
Intuitively, we squeeze $\calD'$ to satisfy the restrictions.
For any restricted ultrametric $U$, the error between $U$ and $\calD'$ can never be larger than the error between $U$ and $\calD$, no matter the norm $L_p$. Formally, since $U$ is restricted, we have $\max\{\beta_i,\beta_j\} \le U(i,j) \le \gamma_k$. 
\begin{itemize}
    \item  If $\calD(i,j)>\gamma_k$, then $\calD'(i,j)=\gamma_k\ge U(i,j)$ and $|U(i,j)-\calD'(i,j)|^p < |U(i,j)-\calD(i,j)|^p$.
    \item If $\calD(i,j)<\max\{\beta_i,\beta_j\}$, then $\calD'(i,j)=\max\{\beta_i,\beta_j\}\le U(i,j)$ and $|U(i,j)-\calD'(i,j)|^p < |U(i,j)-\calD(i,j)|^p$.
    \item If $\max\{\beta_i,\beta_j\} \le\calD(i,j)\le \gamma_k$, then $\calD'(i,j)=\calD(i,j)$ and $|U(i,j)-\calD'(i,j)|^p = |U(i,j)-\calD(i,j)|^p$.
\end{itemize}

We now ask for an arbitrary ultrametric fit $U'$ for $\calD'$. With exactly the same reasoning, we can only improve the cost if we replace $U'$ with
\[U(i,j)=\min\{\gamma_k,\max\{U'(i,j),\beta_i,\beta_j)\}\}.\]
Clearly $U$ now satisfies the restrictions (in the end of the proof we show that it is an ultrametric).

Our solution to $L_p$-fitting tree pseudometrics is to first create $\calD'$ from $\calD$, obtain ultrametric $U'$ by an $\alpha$ approximation to $L_p$-fitting ultrametrics, and then obtain the restricted ultrametric $U$ from $U'$. Finally we apply the result from \cite{DBLP:journals/siamcomp/AgarwalaBFPT99} to get the tree pseudometric.

Let $OPT_{\calD,R}$ be the closest restricted ultrametric to $\calD$, and $OPT_{\calD'}$ be the closest ultrametric to $\calD'$. It suffices to show that $\|U-\calD\|_p \le 2^{(p-1)/p}\alpha \|OPT_{\calD,R}-\calD\|_p$ (equivalently $\|U-\calD\|_p^p \le 2^{p-1}\alpha^p \|OPT_{\calD,R}-\calD\|_p^p$, and that $U$ is indeed an ultrametric.

By the above observations, it holds that 
\begin{align*}
   \|\calD'-U\|_p \le \|\calD'-U'\|_p &\le \alpha \|\calD'-OPT_{\calD'}\|_p \le \alpha \|\calD'-OPT_{\calD,R}\|_p \implies \\
   \|\calD'-U\|_p^p &\le \alpha^p \|\calD'-OPT_{\calD,R}\|_p^p
\end{align*}

By definition of $\calD'$, and since $U$ is restricted, for any species $i,j$ it holds
$\min\{\calD(i,j),U(i,j)\}\le \calD'(i,j) \le \max\{\calD(i,j),U(i,j)\}$. The proof follows by a direct case study of the $3$ cases $\calD(i,j)\le \max\{\beta_i,\beta_j\}$, $\max\{\beta_i,\beta_j\} < \calD(i,j) \le \gamma_k$, $\gamma_k < \calD(i,j)$. Therefore

\[|\calD(i,j)-U(i,j)| = |\calD(i,j)-\calD'(i,j)| + |\calD'(i,j)-U(i,j)|\]

For $p\ge 1$ we have
$|x|^p+|y|^p \le (|x|+|y|)^p$,
meaning 
$|\calD(i,j)-\calD'(i,j)|^p + |\calD'(i,j)-U(i,j)|^p \le |\calD(i,j)-U(i,j)|^p$.

Moreover, by the convexity of $|x|^p$ for real $x$, we get 
$((x+y)/2)^p \le (|x|^p+|y|^p)/2$, meaning
$|\calD(i,j)-U(i,j)|^p \le 2^{p-1}(|\calD(i,j)-\calD'(i,j)|^p + |\calD'(i,j)-U(i,j)|^p)$. Therefore

\[|\calD(i,j)-\calD'(i,j)|^p + |\calD'(i,j)-U(i,j)|^p \le  |\calD(i,j)-U(i,j)|^p \le 2^{p-1}(|\calD(i,j)-\calD'(i,j)|^p + |\calD'(i,j)-U(i,j)|^p)\]

The same holds if we replace $U$ with $OPT_{\calD,R}$, as we only used that $U$ is restricted. We now have 

\begin{align*}
\|\calD-U|_p^p &= \sum_{\{i,j\}\in {S\choose 2}} |\calD(i,j)-U(i,j)|^p \\
&\le \sum_{\{i,j\}\in {S\choose 2}} 2^{p-1}(|\calD(i,j)-\calD'(i,j)|^p+|\calD'(i,j)-U(i,j)|^p)\\
&= 2^{p-1}(\sum_{\{i,j\}\in {S\choose 2}} |\calD(i,j)-\calD'(i,j)|^p+\|\calD'-U\|_p^p)\\
&\le 2^{p-1}(\sum_{\{i,j\}\in {S\choose 2}} |\calD(i,j)-\calD'(i,j)|^p+\alpha^p\|\calD'-OPT_{\calD,R}\|_p^p)\\
&\le 2^{p-1}\alpha^p (\sum_{\{i,j\}\in {S\choose 2}} |\calD(i,j)-\calD'(i,j)|^p+\|\calD'-OPT_{\calD,R}\|_p^p)\\
&= 2^{p-1}\alpha^p \sum_{\{i,j\}\in {S\choose 2}}( |\calD(i,j)-\calD'(i,j)|^p+|\calD'(i,j)-OPT_{\calD,R}(i,j)|^p)\\
&\le 2^{p-1}\alpha^p \sum_{\{i,j\}\in {S\choose 2}}( |\calD(i,j)-OPT_{\calD,R}(i,j)|^p)\\
&= 2^{p-1}\alpha^p \|\calD-OPT_{D,R}\|_p^p
\end{align*}

Finally, we need to prove that $U$ inherits that it is an ultrametric. This is clear if we proceed in rounds; each round we construct a new ultrametric, and the last one will coincide with $U$.

More formally, let $U_0=U'$. In the first $|S|$ rounds, we take out a different $i'\in S$ at a time, and let
\[U_r(i,j)=\max\{U_{r-1}(i,j),\beta_{i'}\}.\]
Suppose $r>0$ is the first round where $U_r$ is not an ultrametric. Then there exists a triple $\{i,j,k\}$ such that
$U_r(i,j) > \max\{U_{r-1}(i,k), U_{r-1}(k,j)\}$
As we only increase distances, this may only happen if $U_r(i,j)>U_{r-1}(i,j)$. But this means that $U_r(i,j)=\max\{\beta_i,\beta_j\}$, which is a lower bound on $U_r(i,k)$ and $U_r(k,j)$ by construction.

Finally, $U$ is simply
\[U(i,j)=\min\{\gamma_k,U_{|S|}(i,j)\}.\]
Suppose there exists a triple $\{i,j,k\}$ that now violates the ultrametric property, then it holds that
\[U(i,j) > \max\{U_{|S|}(i,k),U_{|S|}(k,j)\}\]

As we did not increase any distance, this means that both $U(i,k)<U_{|S|}(i,k)$ and $U(k,j)<U_{|S|}(k,j)$; but distances can only reduce to $\gamma_k$ which is an upper bound on $U(i,j)$ by construction.
\end{proof}

\subsection{From tree metric to tree pseudometric}
In this section we prove that in order to find a good tree metric, it suffices to find a good tree pseudometric. This is a minor detail that we add for completeness. Informally, the construction simply replaces $0$ distances with some parameter $\epsilon$, and accordingly adapts the whole metric. By making the parameter $\epsilon$ very small, the cost is not significantly changed.

Technically, our main lemma is the following.

\begin{lemma}
\label{lem:treeMetric_to_treePseudometric}
Given is a set $S$, a distance function $\calD:{S \choose 2} \rightarrow\mathbb{R}_{>0}$, a tree $T$ with non-negative edge weights describing a tree pseudometric on $S$, and a parameter $\alpha\in (0,1]$. In time polynomial in the size of $T$ we can construct a tree $T'$ with positive edge weights describing a tree metric on $S$, such that for any $p\ge 1$, it holds that $\|T'-\calD\|_p \le (1+\alpha)\|T-\calD\|_p$.
\end{lemma}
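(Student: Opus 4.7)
My plan is to construct $T'$ by perturbing the zero-weight edges of $T$ upward by a tiny amount, chosen small enough so the distance changes are dominated by $\alpha \|T - \calD\|_p$. The one wrinkle is the degenerate case where $T$ already fits $\calD$ exactly, for which I will fall back on a purely combinatorial contraction argument.

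First I would treat the main case, where $\Delta := \|T - \calD\|_p > 0$. Letting $N_e$ denote the number of edges of $T$, I would set the perturbation parameter to
\[\epsilon \;:=\; \frac{\alpha\,\Delta}{N_e \cdot \binom{|S|}{2}},\]
and form $T'$ by replacing each zero-weight edge of $T$ with a weight-$\epsilon$ edge (leaving all other edge weights untouched). Every $i$-$j$ path in $T$ uses at most $N_e$ edges, so $|\dist_{T'}(i,j) - \dist_T(i,j)| \le N_e\,\epsilon$ for every pair. Summing the $p$-th powers over the $\binom{|S|}{2}$ pairs and taking $p$-th roots gives $\|T'-T\|_p \le \binom{|S|}{2}^{1/p} N_e \epsilon \le \binom{|S|}{2} N_e \epsilon = \alpha \Delta$, and then the triangle inequality for the $L_p$-norm yields $\|T'-\calD\|_p \le \|T'-T\|_p + \|T-\calD\|_p \le (1+\alpha)\Delta$, as required.

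Next I would handle the edge case $\Delta = 0$. Here $\dist_T(i,j) = \calD(i,j) > 0$ for all distinct $i,j \in S$, which in particular forbids any zero-weight edge from having both endpoints in $S$ (else that pair would have $\dist_T$ equal to $0$, contradicting $\calD>0$). I would therefore contract each zero-weight edge in turn, merging its Steiner endpoint into the other endpoint (identifying the merged vertex with the $S$-endpoint if one exists). Contraction of a zero-weight edge preserves every pairwise distance, so the resulting tree $T'$ has only positive edge weights and still satisfies $\|T'-\calD\|_p = 0 = (1+\alpha)\cdot 0$.

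The main obstacle, and the only reason the two cases need to be separated, is that the natural choice of $\epsilon$ is proportional to $\Delta$ and becomes undefined when $\Delta = 0$; the contraction step is what sidesteps this exactly when it is needed. Both constructions are clearly polynomial-time in the size of $T$, completing the proof.
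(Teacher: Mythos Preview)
Your argument is correct and takes a simpler route than the paper, but note one discrepancy with the lemma as stated: the lemma promises a \emph{single} tree $T'$ whose bound holds for every $p \ge 1$ simultaneously, whereas your perturbation $\epsilon = \alpha\|T-\calD\|_p\big/\bigl(N_e\binom{|S|}{2}\bigr)$ depends on $p$. The fix is immediate: define $\epsilon$ using $\|T-\calD\|_\infty$ instead of $\|T-\calD\|_p$. Since $\|T-\calD\|_\infty \le \|T-\calD\|_p$ for all $p\ge 1$, your chain $\|T'-T\|_p \le \binom{|S|}{2} N_e \epsilon = \alpha\|T-\calD\|_\infty \le \alpha\|T-\calD\|_p$ then holds uniformly in $p$, and the triangle inequality finishes as before. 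Your $\Delta=0$ branch is already $p$-independent (vanishing in one norm is vanishing in all), so with this tweak the whole construction is $p$-free.

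For comparison, the paper takes a different construction: it first contracts all zero-weight edges (possibly collapsing several species onto one node), then re-expands by hanging each collapsed species off as a new leaf at distance $\epsilon$, with $\epsilon$ proportional to the smallest nonzero value of $|\dist_T(i,j)-\calD(i,j)|$ (also a $p$-free quantity). It then bounds the error increase through a more involved case analysis. Your ``bump every zero-weight edge up to $\epsilon$'' construction is more direct and sidesteps the contract/re-expand step entirely; the paper's version has the minor advantage that only pairs involving actually-colliding species are perturbed, but that refinement is not needed for the stated inequality.
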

\begin{proof}
We construct $T'$ from $T$ as follows. First, we contract all edges with weight $0$. This may result in several species from $S$ coinciding in the same node. For each such node $u$ and species $i$ coinciding with some other species in $u$, we create a new leaf-node $u_i$ connected only with $u$ with edge-weight $\epsilon>0$ (to be specified later). We identify $i$ with $u_i$, instead of $u$.

$T'$ describes a tree metric on $S$, as by construction each species $i\in S$ is identified with a distinct node in $T'$, and $T'$ only contains positive edge-weights.

If $T$ matches $\calD$ exactly, that is $\|T-\calD\|_p=0$, then no pair of species $i,j\in S$ have $dist_T(i,j)=0$, as $\calD(i,j)>0$. But then no species coincided in the same node due to the contractions, meaning that no distances changed, which proves our claim. From here on we assume that at least one pair has $dist_T(i,j)\ne \calD(i,j)$.

To specify the parameter $\epsilon$ we first make some definitions. Let $Y$ be the set containing all species $i\in S$ for which we created a new leaf node in $T'$. Moreover, let $d_{min}$ be the smallest positive $|dist_T(i,j)-\calD(i,j)|$ among all $i,j\in S$. Then
\[\epsilon=\alpha d_{min} / (8|S|)\]

For any two species $i,j$, their distance stays the same, increases by $\epsilon$, or increases by $2\epsilon$. Therefore, for $p=\infty$ we directly get $\|T'-\calD\|_p \le \|T-\calD\|_p+2\epsilon$. By definition of $d_{min}$ we have also have $\|T-\calD\|_p\ge d_{min} \implies 2\epsilon \le \alpha \|T-\calD\|_p/(4|S|)<\alpha \|T-\calD\|_p$, which proves our claim. Therefore we can assume that $p<\infty$.

We start with a lower bound related to $\|T-\calD\|_p$. By definition of $Y$, for any $i\in Y$ there exists a $j\in Y$ such that $dist_T(i,j)=0$, meaning that $|dist_T(i,j)-\calD(i,j)|=|\calD(i,j)|\ge d_{min}$. Therefore
\[\|T-\calD\|_p^p \ge \frac{|Y|}2 d_{min}^p\]

We now upper bound $\|T'-\calD\|_p$. If $dist_T(i,j)\ne \calD(i,j)$, then $|dist_T(i,j)-\calD(i,j)|\ge d_{min}$ by definition of $d_{min}$. For the rest of the pairs $i,j$, if their distance increased then either $i\in Y$ or $j\in Y$, by construction; thus there are at most $|Y||S|$ such pairs. Using these observations, we take the following three cases:

\begin{align*}
    \sum_{\substack{
    \{i,j\}\in {S\choose 2}\\
    dist_{T}(i,j)\ne\calD(i,j)}}
    &|dist_{T'}(i,j)-\calD(i,j)|^p
    \le 
    \sum_{\substack{
    \{i,j\}\in {S\choose 2}\\
    dist_{T}(i,j)\ne\calD(i,j)}}
    (|dist_{T}(i,j)-\calD(i,j)|^p + |2\epsilon|^p)
    \\    
    \sum_{\substack{
    \{i,j\}\in {S\choose 2}\\
    dist_{T}(i,j)= \calD(i,j)\\
    dist_{T}(i,j)=dist_{T'}(i,j)}}
    &|dist_{T'}(i,j)-\calD(i,j)|^p
    =
    0
    \\
    \sum_{\substack{
    \{i,j\}\in {S\choose 2}\\
    dist_{T}(i,j)= \calD(i,j)\\
    dist_{T}(i,j)<dist_{T'}(i,j)}}
    &|dist_{T'}(i,j)-\calD(i,j)|^p
    \le
    \sum_{\substack{
    \{i,j\}\in {S\choose 2}\\
    dist_{T}(i,j)= \calD(i,j)\\
    dist_{T}(i,j)<dist_{T'}(i,j)}}
    |2\epsilon|^p
    \le 
    |Y||S||2\epsilon|^p
\end{align*}
Adding these $3$ upper bounds $\|T'-\calD\|_p^p$ by 
\[
\sum_{\substack{
    \{i,j\}\in {S\choose 2}\\
    dist_{T}(i,j)\ne\calD(i,j)}}
    (|dist_{T}(i,j)-\calD(i,j)|^p + |2\epsilon|^p)
    +
    |Y||S||2\epsilon|^p
\]

Using our lower bound and the definition of $\epsilon$ \[|Y||S||2\epsilon|^p \le 2^{p+1}|S||\epsilon|^p\|T-\calD\|_p^p / d_{min}^p \le (\alpha/2) \|T-\calD\|_p^p\]

By the definition of $\epsilon$ and $d_{min}$, for $i,j$ such that $dist_T(i,j)\ne \calD(i,j)$ it holds that $(|dist_{T}(i,j)-\calD(i,j)|^p + |2\epsilon|^p) \le (\alpha/2) |dist_{T}(i,j)-\calD(i,j)|^p$. Therefore, we get

\begin{align*}
    \|T'-\calD\|_p^p \le
\sum_{\substack{
    \{i,j\}\in {S\choose 2}\\
    dist_{T}(i,j)\ne\calD(i,j)}}
    (1+\alpha/2)|dist_{T}(i,j)-\calD(i,j)|^p + (\alpha/2) \|T-\calD\|_p^p \\
    = (1+\alpha/2) \|T-\calD\|_p^p + (\alpha/2) \|T-\calD\|_p^p =
    (1+\alpha) \|T-\calD\|_p^p
\end{align*}
\end{proof}

Therefore, for any $p\ge 1$, we can approximate $L_p$-fitting tree metrics by using an approximation to $L_p$-fitting tree pseudometrics. The error is at most $(1+\alpha)$ times the approximation factor of the tree pseudometric, as any tree metric is also a tree pseudometric.

Setting $\alpha=\frac1{|S|}$ and using the result from \cite{DBLP:journals/siamcomp/AgarwalaBFPT99}, we conclude that

\begin{equation*}
\text{TreeMetric}\leq (3+o(1))\cdot \text{UltraMetric} \tag*{\req{eq:tree-3ultra} from Figure~\ref{fig:roadmap}}
\end{equation*}

This concludes the proof of Theorem~\ref{thm:main-result}. \qedsymbol

As a final note, in the case of $L_0$ (that is, we count the number of disagreements between $\calD$ and $T'$) one cannot hope for a similar result. To see this, let $S$ be a set of species, and let $c_1,c_2\in S$ be two special species. The distance between any pair of species is $2$, except if the pair contains either $c_1$ or $c_2$, in which case the distance is $1$. The optimal tree pseudometric simply sets the distance between $c_1$ and $c_2$ to $0$, and preserves everything else ($1$ disagreement).

Any tree metric requires at least $|S|-3$ disagreements: we say that a non-special species is good if it has tree-distance $1$ to both $c_1$ and $c_2$, and bad otherwise. Bad species have distance different than $1$ to at least one special species, while good species have distance less than $2$ with each other; the disagreements minimize at $|S|-3$, when there is either one or two good species.

\section{APX-Hardness} \label{sec:apx}
The problems of $L_1$-fitting tree metrics and $L_1$-fitting ultrametrics are regarded as APX-Hard in the literature \cite{DBLP:journals/siamcomp/AilonC11, 12ac-DBLP:conf/approx/HarbKM05}.
However, we decided to include our own versions of these proofs for a multitude of reasons:
First and foremost, \cite{DBLP:journals/siamcomp/AilonC11} attributes the APX-hardness to \cite{Wareham1993}, which is an unpublished Master thesis that is non-trivial to read. Also \cite{12ac-DBLP:conf/approx/HarbKM05} claims that APX-Hardness of $L_1$-fitting ultrametrics follows directly by the APX-Hardness of Correlation Clustering \cite{DBLP:journals/jcss/CharikarGW05}; but this is only true if all the distances in the ultrametric are in $\{1,2\}$.
Second, we think that our proofs are considerably simpler and more direct.
Finally, our constant factor approximation algorithms for these problems make it important to have formal proofs of their APX-Hardness, since the combination settles that a constant factor approximation is best possible in polynomial time unless P$=$NP.

\subsection{$L_1$-fitting ultrametrics}
The correlation clustering problem has been shown to be APX-Hard in \cite{DBLP:journals/jcss/CharikarGW05}. As noted in \cite{DBLP:journals/siamcomp/AilonC11, 12ac-DBLP:conf/approx/HarbKM05} correlation clustering is the same as the $L_1$-fitting ultrametrics in case both the input and the output are only allowed to have distances in $\{1,2\}$. We refer to this problem as $L_1$-fitting $\{1,2\}$-ultrametrics. Therefore the $L_1$-fitting $\{1,2\}$-ultrametrics is also APX-Hard.

For completeness, we sketch this relation here.
Let $E\subseteq {S\choose 2}$ be an instance of correlation clustering, then $\calD(i,j)$ is an instance to $L_1$-fitting $\{1,2\}$-ultrametrics, where
$\calD(i,j)=1$ if $\{i,j\} \in E$, and $\calD(i,j)=2$ otherwise.
Similarly, given  $\calD$ we can obtain $E$ by setting $\{i,j\}\in S$ iff $\calD(i,j)=1$.
Given any solution to correlation clustering (permutation $P$ of $S$), we get a solution $T$ to $L_1$-fitting $\{1,2\}$-ultrametrics with $T(i,j)=1$ if $i,j$ are in the same part of $P$, and $T(i,j)=2$ otherwise. As $T$ is an ultrametric, we are guaranteed that $T(i,j)\le \max\{T(i,k),T(j,k)\}$, therefore if $T(i,k)=T(j,k)=1$, then $T(i,j)=1$ as only distances in $\{1,2\}$ are allowed. Thus distance-$1$ is a transitive relation and $P$ can be obtained by the equivalence classes of species with distance $1$ in $T$.
The observation from \cite{DBLP:journals/siamcomp/AilonC11} is that $|E \triangle \calE(P)| = \|T-\calD\|_1$, which follows by trivial calculations.

The bird's eye view of our approach for showing APX-Hardness of $L_1$-fitting ultrametrics is the following. For the sake of contradiction, we assume that $L_1$-fitting ultrametrics is not APX-Hard. We then show how to solve the $L_1$-fitting $\{1,2\}$-ultrametrics problem in polynomial time within any constant factor greater than $1$, contradicting the fact that it is APX-Hard. The main idea is that we first solve the general $L_1$-fitting ultrametrics problem. Then we apply a sequence of local transformations that converts the general ultrametric to an ultrametric with distances in $\{1,2\}$ without increasing the error. To achieve this, we first eliminate distances smaller than $1$, then eliminate distances larger than $2$, and then eliminate distances in $(1,2)$.

We first prove the following result concerning the local transformations. We remind the reader that an ultrametric $T$ is defined as a metric with the property that for $i,j,k \in S$ we have $T(i,j) \le \max\{T(i,k),T(j,k)\}$.

\begin{lemma} \label{lem:localTransformUltrametric}
Let $S$ be a set of species, $\calD:{S\choose 2} \rightarrow \{1,2\}$ be a distance function with distances only in $\{1,2\}$, and $T$ be a rooted tree such that each species $i\in S$ corresponds to a leaf in $T$ (more than one species may correspond to the same leaf) and all leaves are at the same depth. Then, in polynomial time, we can create a tree $T_{1,2}$ describing an ultrametric with distances only in $\{1,2\}$ such that $\|T_{1,2}-\calD\|_1 \le \|T-\calD\|_1$.
\end{lemma}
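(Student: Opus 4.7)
I apply three height-modification steps to $T$, each shown to be cost-non-increasing. Write $h_v$ for the height of internal node $v$ above the leaves, so $\dist_T(i,j)=2\,h_{\mathrm{LCA}(i,j)}$ and $h$ is nondecreasing along every root-to-leaf path. If a leaf carries $k>1$ species, first split it by attaching $k$ virtual leaf children to a new internal node at height $0$; this preserves every pairwise distance. For each internal $v$, let $n_c(v)$ count the pairs $\{i,j\}$ with $\mathrm{LCA}(i,j)=v$ and $\calD(i,j)=c\in\{1,2\}$, so the cost contribution at $v$ is $f_v(h_v):=n_1(v)|2h_v-1|+n_2(v)|2h_v-2|$.

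\textbf{Steps 1 and 2 (clipping heights to $[1/2,1]$).} Replace every $h_v$ by $\max(h_v,1/2)$, then by $\min(h_v,1)$. Coordinate-wise clipping with a constant trivially preserves monotonicity along root-to-leaf paths, so the tree remains an ultrametric. For a pair with $\mathrm{LCA}=v$ and $h_v<1/2$, its distance rises from $2h_v<1$ to $1$; since $\calD(i,j)\in\{1,2\}$ lies at or above the old distance, the pointwise error drops by exactly $1-2h_v\ge 0$. The mirror argument handles $h_v>1$. Hence the total cost only decreases, and afterwards every $h_v\in[1/2,1]$, so every pairwise distance already lies in $[1,2]$.

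\textbf{Step 3 (threshold rounding).} Reparameterize by $\lambda_v:=2h_v-1\in[0,1]$; on this interval $f_v$ is the \emph{linear} function $\lambda_v n_1(v)+(1-\lambda_v)n_2(v)$. Fix $\tau\in[0,1]$ and round $\lambda_v\mapsto 1$ if $\lambda_v\ge\tau$, else $\mapsto 0$. Since $\lambda$ is nondecreasing along root-to-leaf paths and the threshold is global, the rounded sequence is nondecreasing too, so the resulting heights (all in $\{1/2,1\}$) define a valid ultrametric $T_{1,2}$ with distances in $\{1,2\}$. If $\tau$ is drawn uniformly from $[0,1]$, then $\Pr[\lambda_v\text{ rounds to }1]=\lambda_v$, so the expected cost of $T_{1,2}$ equals $\sum_v f_v(h_v)$, i.e., the cost of $T$ after Steps~1--2, which is at most $\|T-\calD\|_1$. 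By the probabilistic method some threshold achieves at most this expectation; since $\lambda_v$ takes at most $|T|$ distinct values, it suffices to try each as $\tau$ and keep the best, yielding a polynomial-time algorithm.

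The main obstacle is Step 3: the ancestor-descendant constraint $\lambda_u\le\lambda_v$ forbids independent per-node rounding, so the key trick is that a \emph{single global threshold} simultaneously respects this coupling and, by linearity of $f_v$ on $[1/2,1]$, matches the pre-rounding cost in expectation. Steps~1 and~2 are needed precisely to place each $\lambda_v$ into the regime where $f_v$ is linear, which is what makes the expectation argument tight.
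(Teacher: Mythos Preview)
Your proof is correct; the minor slip in stating that $h$ is ``nondecreasing along every root-to-leaf path'' (it is nonincreasing, since the root sits highest) does not affect the argument, as the threshold rounding preserves monotonicity in either direction. Your Steps~1 and~2 coincide with the paper's first two transformations. The genuine difference is in Step~3: the paper eliminates intermediate heights by a \emph{greedy local merge}---for each internal node $u$ with height strictly between $1/2$ and $1$, it compares the number of pairs with LCA $u$ that prefer distance~$2$ to those that prefer distance~$1$, and merges $u$ into its parent or into a child accordingly, repeating until no such node remains. You instead observe that after clipping the per-node cost is \emph{linear} in $\lambda_v=2h_v-1$ and apply a single global threshold $\tau\sim\mathrm{Unif}[0,1]$, which simultaneously respects the ancestor--descendant constraint and matches the fractional cost in expectation, then derandomize over the $O(|T|)$ distinct thresholds. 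The paper's approach is slightly more elementary (no probabilistic step), while yours is a clean instance of the standard threshold-rounding trick and makes the global structure of the argument more transparent; both run in polynomial time and yield the same guarantee.
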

\begin{proof}
We set $T'=T$ and apply the following local transformation to $T'$. If $T(i,j)<1$, we set $T'(i,j)=1$. It holds that $\|T'-\calD\|_1 \le \|T-\calD\|_1$ as $\calD(i,j)\ge 1$ and $T(i,j)<1$ implies $|1-\calD(i,j)|<|T(i,j)-\calD(i,j)|$. Furthermore $T'$ still describes an ultrametric. To see this, notice that $\max\{T'(i,k),T'(j,k)\} \ge \max\{T(i,k),T(j,k)\} \ge T(i,j)$ as we do not decrease distances and $T$ is an ultrametric. Therefore if $T'(i,j)> \max\{T'(i,k),T'(j,k)\}$, this means that $T'(i,j)>T(i,j)$. But this only happens if $T'(i,j)=1$, which is a lower bound on $T'(i,k),T'(j,k)$ by construction. This contradicts that $T'(i,j)> \max\{T'(i,k),T'(j,k)\}$, therefore $T'$ describes an ultrametric. Notice that no two species in $S$ coincide in the same node in $T'$ as the minimum distance between any two distinct species is $1$.

Similarly, we set $T''=T'$ and apply the following local transformation to $T''$. If $T'(i,j)>2$, we set $T''(i,j)=2$. It holds that $\|T''-\calD\|_1 \le \|T'-\calD\|_1$ as $\calD(i,j)\le 2$ and $T'(i,j)>2$ implies $|2-\calD(i,j)|<|T'(i,j)-\calD(i,j)|$.
Furthermore $T''$ still describes an ultrametric. To see this, notice that
$T''(i,j) \le T'(i,j) \le \max\{T'(i,k),T'(j,k)\}$. 
If $T''(i,j) > \max\{T''(i,k),T''(j,k)\}$ then $\max\{T''(i,k),T''(j,k)\} < \max\{T'(i,k),T'(j,k)\}$ which only happens if either of $T'(i,k)$ or $T'(j,k)$ dropped to $2$, meaning that $\max\{T''(i,k),T''(j,k)\}=2$. But $2$ is an upper bound on $T''(i,j)$. This contradicts that $T''(i,j)> \max\{T''(i,k),T''(j,k)\}$, therefore $T''$ describes an ultrametric.

Now, by construction, the ultrametric tree describing $T''$ has leaves at depth $1$ (the maximum distance is $2$) and internal nodes at depth between $0$ and $0.5$ (the minimum distance is $1$). If an internal node $u$ has depth $d_u \in (0,0.5)$, let $x_1$ be the number of pairs $\{i,j\}\subseteq {S\choose 2}$ whose nearest common ancestor is $u$ and $\calD(i,j)=1$, and $x_2$ be the number of pairs $\{i,j\}\subseteq {S\choose 2}$ whose nearest common ancestor is $u$ and $\calD(i,j)=2$. If $x_2\ge x_1$, we remove $u$ and connect the children of $u$ directly with the parent of $u$. We still have an ultrametric as we have an ultrametric tree describing the metric. The $L_1$ error is not larger, as the error of $x_2$ pairs drops by twice the absolute difference in depths between $u$ and its parent (their distance increases but does not exceed $2$), and the error of $x_1\le x_2$ pairs increases by the same amount. Otherwise $x_2<x_1$. In this case we increase the depth of $u$ until it coincides with the depth of some of its children, and merge these children with $u$. Similarly with the previous argument, we still have an ultrametric with smaller $L_1$ error.

Each time we apply the above step, we remove at least one node from our tree. Therefore when we can no longer apply this step, we spent polynomial time and acquired an ultrametric $T_{1,2}$ with distances only in $\{1,2\}$ whose $L_1$ error from $\calD$ is $\|T_{1,2}-\calD\|_1 \le \|T''-\calD\|_1 \le \|T'-\calD\|_1 \le \|T-\calD\|_1$.
\end{proof}

\begin{theorem} \label{thm:ultraAPX}
$L_1$-fitting ultrametrics is APX-Hard. In particular, $L_1$-fitting ultrametrics where the input only contains distances in $\{1,2\}$ is APX-Hard.
\end{theorem}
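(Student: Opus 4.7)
The plan is to reduce $L_1$-fitting $\{1,2\}$-ultrametrics (which is equivalent to Correlation Clustering and hence APX-Hard by \cite{DBLP:journals/jcss/CharikarGW05}) to $L_1$-fitting ultrametrics, showing that any PTAS for the latter would imply a PTAS for the former. Together with the paper's own remarks on the equivalence just before the theorem, this yields both statements of Theorem~\ref{thm:ultraAPX} at once.

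First I would suppose, for contradiction, that $L_1$-fitting ultrametrics admits a polynomial-time $(1+\varepsilon)$-approximation for every constant $\varepsilon>0$. Take an arbitrary instance of $L_1$-fitting $\{1,2\}$-ultrametrics, given by a distance function $\calD:{S\choose 2}\to\{1,2\}$. Feed $\calD$ into the assumed approximation scheme to obtain an ultrametric $T$ (with arbitrary real distances) satisfying $\|T-\calD\|_1 \le (1+\varepsilon)\,\mathrm{OPT}_{\text{ultra}}(\calD)$, where $\mathrm{OPT}_{\text{ultra}}(\calD)$ is the optimum over \emph{all} ultrametrics.

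Next I would appeal to Lemma~\ref{lem:localTransformUltrametric}: in polynomial time we can convert $T$ into an ultrametric $T_{1,2}$ whose distances lie only in $\{1,2\}$ and with $\|T_{1,2}-\calD\|_1 \le \|T-\calD\|_1$. The key observation is then the trivial inequality
\[
\mathrm{OPT}_{\text{ultra}}(\calD) \;\le\; \mathrm{OPT}_{\{1,2\}\text{-ultra}}(\calD),
\]
since any $\{1,2\}$-ultrametric is in particular an ultrametric. Chaining these bounds,
\[
\|T_{1,2}-\calD\|_1 \;\le\; (1+\varepsilon)\,\mathrm{OPT}_{\{1,2\}\text{-ultra}}(\calD),
\]
so $T_{1,2}$ is a $(1+\varepsilon)$-approximation for $L_1$-fitting $\{1,2\}$-ultrametrics. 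Via the equivalence with Correlation Clustering sketched just before the theorem statement (a $\{1,2\}$-ultrametric corresponds to a partition, and the $L_1$ error equals the symmetric difference cost), this yields a polynomial-time $(1+\varepsilon)$-approximation for Correlation Clustering for every constant $\varepsilon>0$, contradicting its APX-Hardness \cite{DBLP:journals/jcss/CharikarGW05}.

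The main obstacle---showing that an arbitrary ultrametric can be ``rounded'' to a $\{1,2\}$-ultrametric without increasing the $L_1$ error against a $\{1,2\}$-valued $\calD$---has already been discharged by Lemma~\ref{lem:localTransformUltrametric}, so the remaining argument is the short chain of inequalities above plus the reduction from Correlation Clustering. Since the transformation of $T$ to $T_{1,2}$ and the equivalence with Correlation Clustering both work at the level of instances with distances in $\{1,2\}$, the sharper statement that $L_1$-fitting ultrametrics is APX-Hard even when the input distances are restricted to $\{1,2\}$ follows from exactly the same argument without modification.
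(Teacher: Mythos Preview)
Your proposal is correct and follows essentially the same argument as the paper: assume a PTAS for $L_1$-fitting ultrametrics, apply it to a $\{1,2\}$-valued instance, round the output via Lemma~\ref{lem:localTransformUltrametric}, and use $\mathrm{OPT}_{\text{ultra}} \le \mathrm{OPT}_{\{1,2\}\text{-ultra}}$ to contradict the APX-hardness of the $\{1,2\}$ case (equivalently, Correlation Clustering). The paper's proof is the same chain of inequalities, phrased slightly more tersely.
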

\begin{proof}
Let $\calD:{S\choose 2} \rightarrow \{1,2\}$ be a distance function, $OPT$ be the optimal ultrametric for the $L_1$-fitting ultrametrics problem, and $OPT_{1,2}$ be the optimal ultrametric for the $L_1$-fitting $\{1,2\}$-ultrametrics. We solve this $L_1$-fitting ultrametrics instance in polynomial time and obtain $T$ such that $\|T-\calD\|_1 \le (1+\epsilon)OPT$ for a sufficiently small constant $\epsilon$, as we assumed that $L_1$-fitting ultrametrics is not APX-Hard. Notice that any solution to the $L_1$-fitting $\{1,2\}$-ultrametrics is also a solution to the $L_1$-fitting ultrametrics, meaning that $\|T-\calD\|_1 \le (1+\epsilon) OPT \le (1+\epsilon) OPT_{1,2}$.

Let $T_{1,2}$ be the ultrametric we get from $T$ by applying Lemma~\ref{lem:localTransformUltrametric}.
Then $T_{1,2}$ is a solution to the $L_1$-fitting $\{1,2\}$-ultrametrics instance, and $\|T_{1,2}-\calD\|_1 \le \|T-\calD\|_1 \le (1+\epsilon)OPT_{1,2}$. This contradicts the fact that $L_1$-fitting $\{1,2\}$-ultrametrics is APX-Hard.
\end{proof}

\subsection{$L_1$-fitting tree metrics}
In this section, we show that $L_1$-fitting tree metrics is APX-Hard. Our reduction is based on the techniques used in \cite{5-DAY1987461} to prove NP-Hardness of the same problem. The bird's eye view of our approach is that we solve $L_1$-fitting ultrametrics by solving $L_1$-fitting tree metrics on a modified instance. In this instance we introduce new species having small distance to each other and large distance to the original species. Through a sequence of local transformations, we show that we can modify the tree describing the obtained tree metric so as to consist of a star connecting the new species, and an ultrametric tree connecting the original species (the center of the star and the root of the ultrametric tree are connected by a large edge). This ultrametric would refute APX-Hardness of $L_1$-fitting ultrametrics, in case $L_1$-fitting tree metrics was not APX-Hard.

\begin{theorem}
$L_1$-fitting tree metrics is APX-Hard.
\end{theorem}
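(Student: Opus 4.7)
The plan is to reduce from $L_1$-fitting $\{1,2\}$-ultrametrics, which is APX-hard by Theorem~\ref{thm:ultraAPX}. Given such an instance $\calD : {S \choose 2} \to \{1,2\}$ with $n = |S|$, I would construct an instance $\calD'$ of $L_1$-fitting tree metrics on an enlarged species set $S \cup S'$, where $S'$ is a disjoint set of $N$ fresh ``gadget'' species (for some $N = \mathrm{poly}(n)$, e.g.\ $N = n$). The distances would be $\calD'(i,j) = \calD(i,j)$ for $\{i,j\} \in {S \choose 2}$, $\calD'(i',j') = \epsilon$ for $\{i',j'\} \in {S' \choose 2}$ with a small positive $\epsilon$, and $\calD'(i,i') = M$ for all $i \in S, i' \in S'$ with a large positive $M$. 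The role of $S'$ is to act as a ``common root gadget'' that forces any competitive tree metric on $S \cup S'$ to look like an ultrametric on $S$ attached to a star on $S'$.

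The core technical step would be a canonicalization lemma: every tree metric $T'$ on $S \cup S'$ can be transformed, via cost-non-increasing local moves, into a tree $T''$ of the following canonical form --- the species of $S'$ are leaves of a perfect star centered at some node $c$ with edges of length $\epsilon/2$, and $c$ is joined by a single long edge of length $M - \epsilon/2 - h$ to the root $r$ of an ultrametric tree $U$ on $S$ whose common leaf-depth is $h$. I would prove this in two substeps. First, rearrange the portion of $T'$ spanning $S'$ into a clean star centered at a chosen node $c$; since the target $S'$-$S'$ distances are all $\epsilon$, symmetrizing to a star with radius $\epsilon/2$ does not increase the $S'$-$S'$ part of $\|T' - \calD'\|_1$. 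Second, equalize the depths from $c$ of all $i \in S$: letting $d_i = d_{T'}(c, i)$, push each $i$ to depth $d^* = M - \epsilon/2$ by adjusting its leaf edge (and if needed the long edge joining $c$ to the $S$-subtree). This would make every $S$-$S'$ distance exactly $M$, saving $N \sum_i |d^* - d_i|$ in $S$-$S'$ error, while altering each $\{i,j\} \in {S \choose 2}$ distance by at most $|d^* - d_i| + |d^* - d_j|$, costing at most $(n-1) \sum_i |d^* - d_i|$ in $S$-$S$ error; choosing $N \ge n - 1$ would make the net change nonpositive.

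Writing $\mathrm{OPT}_T$ for the optimum of $\calD'$ as a tree metric instance and $\mathrm{OPT}_U$ for the optimum of $\calD$ as an ultrametric instance, the canonicalization would yield $\mathrm{OPT}_U \le \mathrm{OPT}_T$ (apply it to the optimal $T^*$ and read off the ultrametric subtree); conversely, attaching a perfect $S'$-star with the calibrated long edge to the optimal ultrametric $U^*$ gives a tree metric of cost exactly $\|U^* - \calD\|_1 = \mathrm{OPT}_U$, so $\mathrm{OPT}_T \le \mathrm{OPT}_U$. Hence $\mathrm{OPT}_T = \mathrm{OPT}_U$, and a hypothetical polynomial-time $(1+\alpha)$-approximation for $L_1$-fitting tree metrics would produce, via canonicalization, a $(1+\alpha)$-approximate ultrametric for $\calD$ --- which can be converted to one with distances in $\{1,2\}$ by Lemma~\ref{lem:localTransformUltrametric} without increasing cost --- contradicting Theorem~\ref{thm:ultraAPX} for $\alpha$ below the APX-hardness threshold. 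The main obstacle would be executing the depth-equalization step rigorously: when $d_i > d^*$ one cannot simply shorten the leaf edge to negative length, so one must interleave leaf-edge shortenings with an extension of the long edge $L$ (or pick $d^* = \max_i d_i$ and then argue that $|M - \epsilon/2 - d^*|$ is small because in any near-optimal $T'$ the depths $d_i$ all hug $M - \epsilon/2$); ensuring that the $S$-$S$ versus $S$-$S'$ accounting still goes through in this general setting is the delicate part of the proof.
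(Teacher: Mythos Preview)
Your overall plan coincides with the paper's: add $n$ gadget species at mutual distance $2$ (the paper) or $\epsilon$ (you) and at large distance $M$ from $S$, then canonicalize any near-optimal tree into a star on the gadgets glued to an ultrametric on $S$, and finally invoke Lemma~\ref{lem:localTransformUltrametric}. The depth-equalization substep you flag as ``the delicate part'' is in fact the easy one: when $d_i > d^*$ you do not shorten a leaf edge, you simply slide $i$ along its existing path toward the root/center (subdividing an edge if necessary), which changes every $S$--$S$ distance by at most $|d_i - d^*|$ via the tree triangle inequality. The paper does exactly this, and your $N$ versus $n-1$ accounting then goes through verbatim.

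The genuine gap is in your \emph{first} substep, the star rearrangement, which you treat as routine. You argue only that collapsing $S'$ to a star centered at some node $c$ does not increase the $S'$--$S'$ error; you say nothing about the $S$--$S'$ error, which can change arbitrarily depending on where $c$ sits. The paper handles this in two moves. First, it uses the cost bound $\|T-\calD'\|_1 \le (1+\epsilon)\binom{n}{2} < M/2$ (available because $T$ is the output of the assumed approximation algorithm, not an arbitrary tree) to show that for every gadget $k$ there is a path $\Pi_k$ of weight $>1$ from $k$ that is a common prefix of all $k$--$S$ paths; otherwise two species $i,j\in S$ would have $T(i,j)>M-2$, blowing the cost budget. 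Second, it picks the gadget $k$ minimizing $\sum_{i\in S}|T(i,k)-M|$ and places the star center $u$ on $\Pi_k$ at distance $1$ from $k$; then every relocated gadget $k'$ has $T'(i,k')=T(i,k)$ for all $i\in S$, so the $S$--$S'$ error is $n\sum_i|T(i,k)-M|$, which by choice of $k$ is at most the old $S$--$S'$ error. Without this structural lemma and this choice of center, your canonicalization claim (stated for ``every tree metric $T'$'') is not justified as a polynomial-time cost-non-increasing procedure.
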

\begin{proof}
Let $\calD:{S\choose 2}\rightarrow \{1,2\}$ be an input to $L_1$-fitting ultrametrics, such that all distances in $\calD$ are in $\{1,2\}$. Moreover, let $n=|S|$ and $OPT_{\calD,U}$ be the ultrametric minimizing $\|OPT_{\calD,U}-\calD\|_1$. By Theorem~\ref{thm:ultraAPX}, this problem is APX-Hard. For the sake of contradiction, assume $L_1$-fitting tree metrics is not APX-Hard. 

Let $\epsilon \in (0,1)$ be a sufficiently small constant, and $M=2(1+\epsilon){n\choose 2}+1$ be a large value. We extend $S$ to $S'\supseteq S$ such that $|S'| = 2n$. For $\{i,j\}\in {S\choose 2}$ we set $\calD'(i,j)=\calD(i,j)$. For $i,j\in {S'\setminus S \choose 2}$ we set $\calD'(i,j)=2$. For all other $i,j$ we set $\calD'(i,j)=M$. As we assumed $L_1$-fitting tree metrics not to be APX-Hard, in polynomial time we can compute $T$, a tree metric such that for any other tree metric $T_0$ it holds that $\|T-\calD'\|_1\le (1+\epsilon)\|T_0-\calD'\|$, for sufficiently small $\epsilon$ such that $0<\epsilon<1$.

We first show that each species $k\in S'\setminus S$ has an incident edge contained in all paths from this species to any species in $S$. To do so, we need to upper bound $\|T-\calD'\|_1$. If we make a star whose leaves are the species in $S$ with distance $1$ from the center, a second star whose leaves are the species in $S'\setminus S$ with distance $1$ from the center, and connect the two centers with an edge of weight $M-2$ then only pairs with both species in $S$ may have the wrong distance, and the error for each such pair is at most one. Therefore $\|T-\calD'\|_1 \le (1+\epsilon){n\choose 2}$. This means that if $k\in S'\setminus S$, then in the tree describing $T$ there exists a path $\Pi_k$ starting from $k$ and having weight larger than $1$, such that the path from $k$ to any species $i\in S$ has $\Pi_k$ as a prefix. To see why this is true, notice that otherwise two species $i,j$ would exist such that the paths from $k$ to $i$ and from $k$ to $j$ only share a prefix $\Pi_{i,j}$ of weight $w_{\Pi_{i,j}}\le 1$. But $T(i,k)>M/2$ as otherwise we would have $\|T-\calD'\|_1 \ge |T(i,k)-\calD'(i,k)| \ge M/2 > (1+\epsilon){n\choose 2}$, and similarly $T(j,k)>M/2$. Then $T(i,j)= T(i,k)+T(j,k)-2\cdot w_{\Pi_{i,j}}>M-2$, meaning again $\|T-\calD'\|_1 > |T(i,j)-\calD'(i,j)| > (1+\epsilon){n\choose 2}$.

Using the aforementioned structural property, we show how to modify our tree so that all species in $S$ are close to each other, all species in $S'\setminus S$ are close to each other, but species in $S$ are far from species in $S'\setminus S$. Let $k\in S'\setminus S$ be the species minimizing $\sum_{i\in S} |T(i,k)-\calD'(i,k)|$. We transform the tree describing $T$ by inserting a node $u$ in the path $\Pi_k$ at distance $1$ from $k$, and creating a star with $u$ as its center and all species in $S'\setminus S$ as leaves at distance $1$. Let $T'$ be the resulting tree metric and notice that $\|T'-\calD'\|_1 \le \|T-\calD'\|_1$ because the errors from species in $S'\setminus S$ to species in $S$ did not increase (by definition of $k$), the errors between species in $S'\setminus S$ are exactly zero, and the errors between species in $S$ stay exactly the same (we did not modify the part of the tree formed by the union of paths between species in $S$).

Then, we modify the tree describing $T'$ to obtain $T''$ so that the distance from any species in $S$ to any species in $S'\setminus S$ is $M$. If for any $i\in S$ we have $T'(i,k)\ne M$, we move $i$ in the tree so as to make its distance with $k$ equal to $M$: if $T'(i,k)<M$, we create a new leaf node connected with $i$ with distance $M-T'(i,k)$, and move $i$ to this new leaf node. Else if $T'(i,k)>M$ there exists an $i'$ (possibly by subdividing an edge) in the path from $k$ to $i$ having distance $M$ from $k$ and we move $i$ to this node. Notice that $\|T''-\calD'\|_1 \le \|T'-\calD'\|_1$ because we move each $i\in S$ by $|M-T'(i,k)|$ so that it has zero error with each $k'\in S'\setminus S$, meaning that the error drops by $|S'\setminus S| |M-T'(i,k)| = n|M-T'(i,k)|$ ($|M-T'(i,k)|$ for each $k'\in S'\setminus S$), and increases by at most $(n-1)|M-T'(i,k)|$ ($|M-T'(i,k)|$ for each $i'\in S\setminus \{i\}$).

If we remove all nodes not in a path from $k$ to any $i\in S$ in the tree describing $T''$, then by construction we have a tree $T_{\calD,U}$ rooted at $k$, having leaves identified with the species in $S$, and all leaves having depth $M$. By the above discussion its error is $\|T_{\calD,U}-\calD\|_1 = \|T''-\calD'\|_1$. As some species may coincide in the same nodes, we get an ultrametric $T'_{\calD,U}$ of $S$ having the aforementioned properties so that no two species coincide in the same node, using Lemma~\ref{lem:localTransformUltrametric}.

Notice that $OPT_{\calD,U}$ has maximum distance between species less than $M$; otherwise its error would be at least $M-2$, which is a contradiction to the fact that an ultrametric where all species have distance $1$ has error at most ${n \choose 2}<M-2$. But then we can take the tree describing this optimal ultrametric, connect its root with a node $u$ so that $u$ has distance $M-1$ to all species in $S$, and identify each species $k'\in S'\setminus S$ with a leaf $u'_{k'}$ connected with $u$ with an edge of weight $1$. If the resulting tree metric is $T_1$, then $\|OPT_{\calD,U}-\calD\|_1 = \|T_1-\calD'\|_1$. We conclude that $\|T'_{\calD,U}-\calD\|_1 = \|T_{\calD,U}-\calD\|_1 = \|T''-\calD'\|_1 \le \|T'-\calD'\|_1 \le \|T-\calD'\|_1 \le (1+\epsilon) \|T_1-\calD'\|_1 = (1+\epsilon) \|OPT_{\calD,U}-\calD\|_1$. This contradicts Theorem~\ref{thm:ultraAPX}.
\end{proof}

\section{Conclusion} \label{sec:conclusions}

We have given the first constant factor approximation for $L_1$-fitting tree metrics, the first improvement on the problem for the last 16 years. This problem was one of the relatively few remaining problems for which obtaining a constant factor approximation or showing hardness was open. Breaking through the best known $O((\log n)(\log \log n))$-approximation had thus been stated as a fascinating open problem.


Interestingly, our journey brought us to the study of a natural definition of hierarchical cluster agreement that may be of broader interest, in particular to the data mining community where correlation clustering has been a successful objective function and where hierarchical clustering is often 
desired in practice.


Finding a polynomial time constant factor approximation (or showing that this is hard, e.g., by reduction to unique games) for $L_2$-fitting tree metrics is a great open problem. Recall from Section~\ref{sec:tree-3ultra} that it suffices to focus on approximating the problem of fitting into an arbitrary ultrametric (no need for restricted versions).
Finally, the $O((\log n) (\log \log n))$-approximation algorithm of Ailon and Charikar for the weighted case (where the cost of an edge is weighted by an input edge weight) could potentially be improved to 
$O(\log n)$ without improving multicut, and it would be interesting to do so. Going even further
would require improving the best known bounds for multicut, a notoriously hard problem.



\clearpage
\bibliographystyle{plain}
\bibliography{references}

\end{document}